%% file: EJC_elsevier_final_GB26.tex
\documentclass[final,5p,times,twocolumn]{elsarticle}

\journal{European Journal of Control}

\usepackage{amsmath,amssymb,amsfonts,mathtools,amsthm}
\usepackage{array}

\usepackage{algorithm}
\usepackage{algorithmic}
\usepackage{graphicx}
\usepackage{color}
\usepackage[T1]{fontenc}
\usepackage{textcomp}
\usepackage{microtype}
\usepackage{url}
\usepackage[hidelinks]{hyperref}

\biboptions{sort&compress}

\DeclareUnicodeCharacter{2212}{\textminus}

\theoremstyle{plain}
\newtheorem{theorem}{Theorem}
\newtheorem{lemma}{Lemma}
\newtheorem{proposition}{Proposition}
\newtheorem{corollary}{Corollary}

\theoremstyle{definition}
\newtheorem{definition}{Definition}
\newtheorem{assumption}{Assumption}

\theoremstyle{remark}
\newtheorem{remark}{Remark}

\begin{document}

\begin{frontmatter}

\title{Sparse One-Step-Ahead Optimal Control of Time-Varying Affine Opinion Networks: Tracking and Competitive Games}

\author[ufrj]{Gabriel Gentil}
\author[ufrj]{Amit Bhaya\corref{cor1}}
\cortext[cor1]{Corresponding author}
\ead{amit@nacad.ufrj.br}

\affiliation[ufrj]{
  organization={Department of Electrical Engineering, PEE/COPPE, Federal University of Rio de Janeiro},
  addressline={PO Box 68504},
  city={Rio de Janeiro},
  state={RJ},
  postcode={21945-970},
  country={Brazil}
}

\begin{abstract}
This paper studies resource-limited external influence in time-varying opinion networks when a controller must choose both a small set of agents and a scalar intervention at each update. We use the affine free response, which includes DeGroot and Friedkin--Johnsen dynamics, followed by a direct sparse action. Eliminating the scalar action reduces the one-step problem to a cardinality-constrained support objective whose global optimum is obtained from the largest or smallest residual components. Hence exact sparse one-step-ahead optimal control requires one sort and \(O(|\mathcal T|\log|\mathcal T|)\) operations, where \(\mathcal T\) is the support set.

For periodic affine dynamics, the resulting state-dependent low-rank feedback admits cycle-based ultimate bounds and exact tracking of model-invariant targets. The DeGroot and Friedkin--Johnsen specializations expose the role of target invariance and the affine mismatch term. With several competing external players, each sparse best response retains the same sorting structure and the complete binary--continuous stage game is an exact potential game, so a pure-strategy equilibrium exists at every frozen state. Same-state welfare benchmarks separate loss due to scalar competition from the additional loss due to strategic support selection. Numerical examples validate exact support selection and illustrate switching-order effects, DeGroot/Friedkin--Johnsen tracking, resource tradeoffs, and competitive implementations.
\end{abstract}

\begin{keyword}
one-step-ahead optimal control \sep affine opinion dynamics \sep
DeGroot model \sep Friedkin--Johnsen model \sep sparse control \sep
target selection \sep time-varying networks \sep potential games \sep
competitive welfare loss
\end{keyword}

\end{frontmatter}


\section{Introduction}
\label{sec:introduction}

Opinion dynamics provide a networked-systems setting in which consensus,
persistent disagreement, stubbornness, and external influence can be
studied within a common dynamical framework.  The DeGroot model
\cite{DeGroot1974} is a canonical linear consensus process, while the
Friedkin--Johnsen (FJ) model
\cite{FriedkinJohnsen1990,FriedkinJohnsen1999} incorporates persistent
attachment to innate opinions.  Systems-theoretic analyses and extensions
include \cite{ParsegovEtAl2017,ProskurnikovTempo2017,
ProskurnikovTempo2018,AndersonYe2019,TianWang2023}; time-varying FJ-type
models are treated, for example, in
\cite{ProskurnikovEtAl2017TimeVarying,DisaroValcher2024}.  Wang et al.
\cite{WangEtAl2023ConcatenatedFJ} show, in a concatenated FJ construction,
that the order of the same coarse-grained stochastic transition matrices
can determine whether consensus or periodic behavior occurs.

A practical limitation in network intervention is that only a small number
of agents may be reachable at each update.  The resulting control problem
has two coupled decisions: \emph{where} to act and \emph{how strongly} to
act.  Node-selection and influence-allocation problems are therefore
naturally combinatorial; related work includes graph-based targeting,
influence games, and FJ opinion optimization under node budgets
\cite{Masuda2015,BiniEtAl2022,BastopcuEtAl2025,SunZhang2023,SunZhang2026}.
One-step-ahead optimal control (OSAOC) \cite{KaszkurewiczBhaya2022} is
well suited to resource-limited online intervention because it uses the
current state rather than a long prediction horizon.

We write the uncontrolled one-step response as
\[
    z(k)=F(k)x(k)+h(k),
\]
and apply a direct external intervention after this response:
\[
    x(k+1)=F(k)x(k)+h(k)+b(k)u(k).
\]
DeGroot corresponds to \(F(k)=W(k)\), \(h(k)=0\); FJ corresponds to
\(F(k)=\Lambda W(k)\), \(h(k)=(I-\Lambda)s\).  The binary vector \(b(k)\)
selects at most a prescribed number of target agents, while the scalar
\(u(k)\) determines the signed intervention magnitude.  This convention
separates susceptibility to interpersonal influence from direct external
actuation.

The key computational observation is that the opinion model enters the
one-step optimization only through the residual \(c(k)=g-z(k)\).  The
scalar action can be eliminated analytically.  For every support size,
the remaining objective is maximized by either the largest or the smallest
residual components.  A full sort and prefix/suffix sums give the global
sparse solution in \(O(|\mathcal T|\log|\mathcal T|)\) time, while
linear-time extreme selection followed by sorting only the selected
\(r\)-element tails gives \(O(|\mathcal T|+r\log r)\).  The
sorting theorem is consequently an affine one-step result, not a
DeGroot-specific property.

The multiplayer extension builds on two complementary strands.  Previous
competitive OSAOC with prescribed influence domains is developed in
\cite{GentilBhaya2024}.  More recently, Liu, Mazalov, and Gao
\cite{LiuMazalovGao2026EJC} compare OSAOC, feedback, and hybrid strategies
in an average-oriented two-player opinion game.  A companion preprint by the
present authors \cite{GentilBhaya2026CompetitiveFJ} fixes the influence
directions and analyzes the resulting continuous OSAOC game in detail,
including potential structure, best-response implementations, closed-loop
stability, equilibrium geometry, and same-state welfare.  The present
paper addresses the complementary sparse problem: the support itself is a
strategic online decision.  This adds a discrete combinatorial layer while
preserving the exact single-player sorting structure.

For clarity, the fixed-direction results of
\cite{GentilBhaya2026CompetitiveFJ} are cited rather than reproved below.
The novelty in this paper is the extension to support-changing best responses and a
binary--continuous exact-potential game, together with a welfare benchmark
that separates scalar-action loss from the additional loss created by
noncooperative support selection.  This distinction also avoids confusing
structural error imposed by the sparse feasible family with competitive
inefficiency.

The main contributions are:
\begin{enumerate}
    \item \textbf{Exact sparse OSAOC for affine opinion dynamics.}
    The scalar action is eliminated in closed form and the global support
    is obtained by extreme-residual sorting.  A full-sort implementation costs
    \(O(|\mathcal T|\log|\mathcal T|)\), while partial extreme
    selection gives \(O(|\mathcal T|+r\log r)\) and hence linear
    complexity for fixed \(r\).

    \item \textbf{Controller-dependent tracking certificates.}
    One-step optimality implies a global contraction factor on the
    target component of the free-response error.  Combining this factor
    with target/non-target block gains gives an a priori, checkable
    small-gain condition for ultimate boundedness and exact invariant-target
    tracking, independent of the realized support sequence.  For periodic
    dynamics we retain complementary trajectory-wise cycle bounds and
    identify precisely what the control-independent norm certificate can
    and cannot establish.

    \item \textbf{Competitive sparse support selection.}
    Each player's support-and-action best response is obtained by the same
    sorting rule.  The complete binary--continuous stage game remains an
    exact potential game and therefore admits a pure-strategy equilibrium
    at every frozen network state.  These are frozen-state stage-game
    statements; no general convergence or stability claim is made for the
    competitive closed loop generated by repeated equilibrium or
    best-response implementation.

    \item \textbf{Sparse welfare decomposition and numerical validation.}
    Same-state benchmarks split competitive loss into scalar-action and
    support-selection components.  DeGroot and FJ experiments illustrate
    model-independent sparse computation versus model-dependent tracking,
    while competitive experiments distinguish equilibrium from one-sweep
    implementations.
\end{enumerate}

The remainder of the paper develops the single-player sparse problem,
periodic tracking, the competitive sparse game and welfare benchmarks, and
finally the numerical experiments and conclusions.

\section{Sparse One-Step-Ahead Optimal Control}
\label{sec:single_player}

Consider the time-varying affine opinion dynamics
\begin{equation}
    x(k+1)=F(k)x(k)+h(k)+b(k)u(k),
    \label{eq:single_dynamics}
\end{equation}
where \(x(k)\in\mathbb R^n\), \(F(k)\in\mathbb R^{n\times n}\),
\(h(k)\in\mathbb R^n\), \(u(k)\in\mathbb R\) is a scalar direct
external action, and \(b(k)\in\{0,1\}^n\) is a binary support vector.
At decision time \(k\), define the uncontrolled or free one-step response
\[
    z(k):=F(k)x(k)+h(k).
\]
The DeGroot specialization is \(F(k)=W(k)\), \(h(k)=0\), with
row-stochastic \(W(k)\) \cite{DeGroot1974}.  The Friedkin--Johnsen
specialization used here is
\[
    F(k)=\Lambda W(k),
    \qquad
    h(k)=(I-\Lambda)s,
\]
where \(W(k)\) is row-stochastic,
\(\Lambda=\operatorname{diag}(\lambda_1,\ldots,\lambda_n)\),
\(0\le\lambda_i\le1\), and \(s\) is the innate-opinion vector
\cite{FriedkinJohnsen1990,FriedkinJohnsen1999,ParsegovEtAl2017}.
The intervention \(b(k)u(k)\) is applied directly and is not filtered
through \(\Lambda\).  Conditional on a chosen support, the same scalar
amplitude is applied to every selected node.  Thus this is a shared-amplitude
rank-one input model, not an \(\ell_0\)-constrained vector-input problem with
independent node-wise control amplitudes.

Let \(\mathcal T\subseteq\{1,\ldots,n\}\) be the prescribed target set,
let \(g\in\mathbb R^n\) be a desired opinion profile, and let
\(1\le r\le |\mathcal T|\) be the maximum number of target nodes that may be
actuated at one time.  The admissible support set is
\begin{equation}
    \mathcal U_r
    :=
    \left\{
        b\in\{0,1\}^n:
        \operatorname{supp}(b)\subseteq\mathcal T,\;
        \mathbf 1^\top b\le r
    \right\}.
    \label{eq:admissible_supports}
\end{equation}
At time \(k\), sparse OSAOC minimizes the one-step regularized target
error
\begin{equation}
    J_k(b,u)
    :=
    \sum_{i\in\mathcal T}
    \left[
        g_i-\bigl(z(k)+bu\bigr)_i
    \right]^2
    +\gamma u^2,
    \qquad \gamma>0,
    \label{eq:osaoc_cost}
\end{equation}
over \(b\in\mathcal U_r\) and \(u\in\mathbb R\).

The cost \eqref{eq:osaoc_cost} penalizes only \(\mathcal T\); it therefore
does not by itself bound complementary state components, which are handled by
the propagation analysis of Section~\ref{sec:tracking}.  The shared scalar is
also essential to the combinatorics.  With independent sparse node inputs
\(v_i\), minimizing
\(\sum_{i\in\mathcal T}(c_i-v_i)^2+\gamma\|v\|_2^2\) gives
\(v_i^\star=c_i/(1+\gamma)\) on each active coordinate, so the optimal support
is simply the \(r\) largest \(|c_i|\).  The nontrivial \(|S|+\gamma\) coupling
below is therefore caused by sparsity together with a single shared amplitude.

\paragraph{Susceptibility-filtered actuation.}
The direct-actuation convention is structurally important.  If an FJ
susceptibility matrix also filters the external input,
\[
    x^+=F x+h+\Lambda b u,
\]
then, for a support \(S\subseteq\mathcal T\), eliminating the scalar action gives
\[
    u^\star(S)
    =
    \frac{\sum_{i\in S}\lambda_i c_i}
         {\sum_{i\in S}\lambda_i^2+\gamma},
    \qquad
    V_\Lambda(S)
    =
    \frac{\left(\sum_{i\in S}\lambda_i c_i\right)^2}
         {\sum_{i\in S}\lambda_i^2+\gamma}.
\]
If \(\lambda_i\equiv\lambda>0\) on \(\mathcal T\), then
\(V_\Lambda(S)=(\sum_{i\in S}c_i)^2/(|S|+\gamma/\lambda^2)\), so
Theorem~\ref{thm:sorting} survives verbatim after the rescaling
\(\gamma\mapsto\gamma/\lambda^2\).  For heterogeneous susceptibilities,
however, the denominator depends on the identities of the selected nodes as
well as their cardinality.  The problem becomes a weighted fractional binary
support problem, and the extreme-residual sorting reduction no longer follows.
This gives a second, model-based illustration of the tractability boundary
identified below.

Define the one-step residual
\begin{equation}
    c(k):=g-z(k)=g-F(k)x(k)-h(k).
    \label{eq:residual}
\end{equation}
Only its components on \(\mathcal T\) enter the optimization.  Suppressing
the time index when no ambiguity arises,
\begin{equation}
    J_k(b,u)
    =
    \sum_{i\in\mathcal T}(c_i-b_i u)^2+\gamma u^2.
    \label{eq:cost_residual_form}
\end{equation}

\subsection{Elimination of the Continuous Action}
\label{subsec:continuous_elimination}

For a fixed \(b\in\mathcal U_r\), the cost is a strictly convex
quadratic function of \(u\):
\begin{equation}
\begin{aligned}
    J_k(b,u)
    &=
    \sum_{i\in\mathcal T}c_i^2
    -2u\,b^\top c
    +u^2\bigl(b^\top b+\gamma\bigr).
\end{aligned}
\label{eq:cost_quadratic_u}
\end{equation}
Since \(b\) is binary and supported on \(\mathcal T\),
\[
    b^\top b=\mathbf 1^\top b=|\operatorname{supp}(b)|.
\]
Therefore the unique minimizer is
\begin{equation}
    u^\star(b)
    =
    \frac{b^\top c}{\mathbf 1^\top b+\gamma}.
    \label{eq:optimal_scalar}
\end{equation}
Substituting \eqref{eq:optimal_scalar} into
\eqref{eq:cost_residual_form} gives
\begin{equation}
    J_k^\star(b)
    =
    \sum_{i\in\mathcal T}c_i^2
    -
    \frac{(b^\top c)^2}{\mathbf 1^\top b+\gamma}.
    \label{eq:reduced_cost}
\end{equation}
Hence minimizing \(J_k^\star(b)\) is equivalent to
\begin{equation}
    \max_{b\in\mathcal U_r}
    \frac{(b^\top c)^2}{\mathbf 1^\top b+\gamma}.
    \label{eq:binary_problem}
\end{equation}

For a support \(S\subseteq\mathcal T\), define
\begin{equation}
    V(S)
    :=
    \frac{\left(\sum_{i\in S}c_i\right)^2}{|S|+\gamma},
    \qquad
    V(\varnothing):=0.
    \label{eq:set_objective}
\end{equation}
Problem \eqref{eq:binary_problem} is therefore
\[
    \max_{S\subseteq\mathcal T,\; |S|\le r}V(S).
\]

\subsection{Exact Support Selection by Residual Sorting}
\label{subsec:sorting}

The reduced problem is combinatorial, but its special dependence on
the \emph{sum} of the selected residuals allows an exact solution after
one sort.

Let \(N:=|\mathcal T|\), and let
\[
    c_{(1)}\ge c_{(2)}\ge\cdots\ge c_{(N)}
\]
denote the residual components on \(\mathcal T\) arranged in
nonincreasing order.  Let \(i_{(j)}\in\mathcal T\) be an index attaining
\(c_{(j)}\), with ties resolved arbitrarily.  For
\(s=1,\ldots,r\), define the extreme supports
\begin{equation}
\begin{aligned}
    S_{s,+}
    &:=\{i_{(1)},\ldots,i_{(s)}\},\\
    S_{s,-}
    &:=\{i_{(N-s+1)},\ldots,i_{(N)}\}.
\end{aligned}
\label{eq:extreme_supports}
\end{equation}

\begin{theorem}[Exact support reduction by sorting]
\label{thm:sorting}
For every fixed cardinality \(s\in\{1,\ldots,r\}\), at least one
maximizer of
\[
    \max_{S\subseteq\mathcal T,\;|S|=s}
    \left(\sum_{i\in S}c_i\right)^2
\]
belongs to the two-element candidate family
\(\{S_{s,+},S_{s,-}\}\).  Consequently,
\begin{equation}
    \max_{S\subseteq\mathcal T,\;|S|\le r}V(S)
    =
    \max_{1\le s\le r}
    \max\left\{
        V(S_{s,+}),V(S_{s,-})
    \right\}.
    \label{eq:global_support_reduction}
\end{equation}
An optimal sparse OSAOC support can therefore be computed by one full sort
in \(O(N\log N)\) time.  Alternatively, selecting the \(r\) largest
and \(r\) smallest residuals in linear time and sorting only those
extremes gives \(O(N+r\log r)\), hence \(O(N)\) for fixed \(r\).
\end{theorem}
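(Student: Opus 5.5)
The plan is to fix a cardinality \(s\) and use that \(\bigl(\sum_{i\in S}c_i\bigr)^2=\max\{\sigma(S),-\sigma(S)\}^2\) with \(\sigma(S):=\sum_{i\in S}c_i\). The square of a real number is the larger of the squares of its positive and negative parts. So maximizing \(\sigma(S)^2\) over \(|S|=s\) reduces to maximizing \(\sigma(S)\) and minimizing \(\sigma(S)\) over the same family, and then keeping whichever extreme value has the larger absolute value. Precisely, if \(M_s:=\max_{|S|=s}\sigma(S)\) and \(m_s:=\min_{|S|=s}\sigma(S)\), then every \(S\) with \(|S|=s\) satisfies \(m_s\le\sigma(S)\le M_s\). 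Hence \(\sigma(S)^2\le\max\{M_s^2,m_s^2\}\), and this bound is attained by a maximizer or a minimizer of \(\sigma\).

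The next step is an exchange argument showing that \(S_{s,+}\) attains \(M_s\). Take any \(S\) with \(|S|=s\) and \(S\neq S_{s,+}\). Pick \(p\in S\setminus S_{s,+}\) and \(q\in S_{s,+}\setminus S\); these exist because the two sets have equal cardinality. Since every element of \(S_{s,+}\) has rank at most \(s\) in the fixed ordering and \(p\) has rank greater than \(s\), we get \(c_q\ge c_p\). Swapping \(p\) for \(q\) therefore does not decrease \(\sigma\) and strictly increases \(|S\cap S_{s,+}|\). After finitely many swaps we reach \(S_{s,+}\), so \(\sigma(S_{s,+})=M_s\). Arbitrary tie-breaking causes no trouble, because the ordering is fixed once chosen. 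The symmetric argument, or applying the same one to \(-c\), gives \(\sigma(S_{s,-})=m_s\). This proves the first claim.

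For \eqref{eq:global_support_reduction}, I would split \(\{S:|S|\le r\}\) by cardinality. The empty set contributes \(V(\varnothing)=0\le V(S_{1,+})\), so it can be dropped. For \(|S|=s\ge1\) the denominator \(s+\gamma\) is the same for every \(S\), so maximizing \(V\) over that layer is equivalent to maximizing \(\sigma(S)^2\). The first claim then gives \(\max_{|S|=s}V(S)=\max\{V(S_{s,+}),V(S_{s,-})\}\), and taking the maximum over \(s\) yields the identity.

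The complexity count goes as follows. Sort once in \(O(N\log N)\). Prefix sums of the sorted residuals give \(\sigma(S_{s,+})\) for all \(s\le r\), and suffix sums give \(\sigma(S_{s,-})\), in \(O(r)\) total. Evaluating the \(2r\) values of \(V\) costs \(O(r)\). Alternatively, linear-time selection (median of medians) finds the \(r\)-th largest and \(r\)-th smallest elements and partitions off the two tails in \(O(N)\). Sorting each tail costs \(O(r\log r)\), after which the same prefix and suffix sums apply. None of these steps is a real obstacle; the only point needing care is the exchange argument under ties, and that is handled by fixing the ordering first.
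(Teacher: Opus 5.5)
Your proof is correct and follows the same route as the paper. Both fix the cardinality so that the denominator $s+\gamma$ is constant, note that the extreme signed subset sums are attained by $S_{s,+}$ and $S_{s,-}$, discard the empty support because $V\ge 0$, and count one sort plus prefix/suffix sums (or linear-time selection plus sorting the two $r$-tails). The only difference is that you supply an explicit exchange argument for the step the paper simply asserts, that the $s$ largest components maximize the signed sum.
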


\begin{proof}
Fix \(s\).  Since \(s+\gamma\) is constant, maximizing \(V(S)\)
over \(|S|=s\) is equivalent to maximizing
\(\left|\sum_{i\in S}c_i\right|\).
Among all \(s\)-element subsets, the largest possible signed sum is
obtained by selecting the \(s\) largest components,
namely \(S_{s,+}\), while the smallest possible signed sum is obtained
by selecting the \(s\) smallest components, namely \(S_{s,-}\).
Therefore the largest absolute subset sum, and hence the largest
squared subset sum, is attained by at least one of these two supports.

It remains to compare the two candidates for every
\(s=1,\ldots,r\).  Since \(r\ge1\) and \(V(S)\ge0\), the empty support
cannot improve on all nonempty candidates; when every residual is zero,
it merely ties them.  A full sort of the \(N\) residual components costs
\(O(N\log N)\), after which all candidate sums require only \(O(r)\)
prefix/suffix updates.  Alternatively, linear-time selection can identify
the \(r\) largest and \(r\) smallest components; sorting only these two
tails costs \(O(r\log r)\), after which the same cumulative-sum sweep
applies.  This gives \(O(N+r\log r)\), which is linear in \(N\) when
\(r\) is fixed.
\end{proof}

The tractability hinges not only on extreme subset sums but on the denominator:
for fixed \(s\), \(|S|+\gamma=s+\gamma\) is independent of which nodes are
selected.  This decouples the cardinality sweep from the extreme-sum problem.
A support-dependent denominator such as \(b^\top Qb+\gamma\) would destroy
this two-candidate reduction, marking the boundary exploited by
Theorem~\ref{thm:sorting}.

\begin{remark}[Ties and nonuniqueness]
\label{rem:ties}
Theorem~\ref{thm:sorting} guarantees an optimal support among the
extreme candidates in \eqref{eq:extreme_supports}; it does not imply
uniqueness.  Equal residual components may generate several distinct
optimal supports with the same objective value.
\end{remark}

The cumulative-sum implementation is explicit.  Define
\begin{equation}
    p_s:=\sum_{j=1}^{s}c_{(j)},
    \qquad
    q_s:=\sum_{j=N-s+1}^{N}c_{(j)}.
    \label{eq:prefix_suffix_sums}
\end{equation}
Then
\begin{equation}
    V(S_{s,+})=\frac{p_s^2}{s+\gamma},
    \qquad
    V(S_{s,-})=\frac{q_s^2}{s+\gamma},
    \label{eq:prefix_suffix_values}
\end{equation}
with the recursions
\[
    p_s=p_{s-1}+c_{(s)},
    \qquad
    q_s=q_{s-1}+c_{(N-s+1)}.
\]

\begin{algorithm}[t]
\caption{Residual-sort implementation of sparse OSAOC}
\label{alg:sparse_osaoc}
\begin{algorithmic}[1]
\REQUIRE residual \(c(k)\), target set \(\mathcal T\), budget \(r\),
         penalty \(\gamma>0\), numerical dead-band \(\tau_{\rm act}\ge0\)
\ENSURE support \(S^\star\), binary vector \(b^\star(k)\),
        and scalar action \(u^\star(k)\)
\STATE Sort \(\{c_i(k)\}_{i\in\mathcal T}\) so that
       \(c_{(1)}\ge\cdots\ge c_{(N)}\)
\STATE \(p\gets0,\;q\gets0,\;V_{\max}\gets\tau_{\rm act},\;S^\star\gets\varnothing\)
\FOR{\(s=1,\ldots,r\)}
    \STATE \(p\gets p+c_{(s)}\)
    \STATE \(q\gets q+c_{(N-s+1)}\)
    \STATE \(V_+\gets p^2/(s+\gamma)\)
    \STATE \(V_-\gets q^2/(s+\gamma)\)
    \IF{\(V_+>V_{\max}\)}
        \STATE \(V_{\max}\gets V_+\),
               \(S^\star\gets S_{s,+}\)
    \ENDIF
    \IF{\(V_->V_{\max}\)}
        \STATE \(V_{\max}\gets V_-\),
               \(S^\star\gets S_{s,-}\)
    \ENDIF
\ENDFOR
\STATE Set \(b_i^\star(k)=1\) for \(i\in S^\star\) and
       \(b_i^\star(k)=0\) otherwise
\STATE \(u^\star(k)\gets
       \left(\sum_{i\in S^\star}c_i(k)\right)/
       \left(|S^\star|+\gamma\right)\)
\RETURN \(S^\star,b^\star(k),u^\star(k)\)
\end{algorithmic}
\end{algorithm}

With \(\tau_{\rm act}=0\), Algorithm~\ref{alg:sparse_osaoc} is the exact
optimizer characterized by Theorem~\ref{thm:sorting}.  A positive
\(\tau_{\rm act}\) is only a floating-point safeguard: improvements below
that threshold are treated as numerically zero and the empty support is
returned.  Numerical experiments must therefore report the value of
\(\tau_{\rm act}\); it is not used in the exactness claims.  All analytical
results below take \(\tau_{\rm act}=0\).

\subsection{When Is a Larger Support Beneficial?}
\label{subsec:marginal_support}

The cardinality constraint is an upper bound rather than an equality,
so the optimal controller need not use all \(r\) available actuation
slots.  The following condition characterizes exactly when adding one
node improves the reduced objective.

\begin{theorem}[Marginal improvement condition]
\label{thm:marginal}
Let \(S\subseteq\mathcal T\), let \(s:=|S|\), and define
\[
    a_S:=\sum_{i\in S}c_i.
\]
For \(j\in\mathcal T\setminus S\),
\[
    V(S\cup\{j\})>V(S)
\]
if and only if
\begin{equation}
    (a_S+c_j)^2
    >
    \frac{s+1+\gamma}{s+\gamma}\,a_S^2.
    \label{eq:marginal_condition_1}
\end{equation}
Equivalently,
\begin{equation}
    (s+\gamma)c_j^2
    +2(s+\gamma)a_Sc_j
    -a_S^2
    >0.
    \label{eq:marginal_condition_2}
\end{equation}
\end{theorem}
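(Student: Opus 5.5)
The argument is a direct algebraic manipulation of the definition \eqref{eq:set_objective}. Since \(j\notin S\), we have \(|S\cup\{j\}|=s+1\) and \(\sum_{i\in S\cup\{j\}}c_i=a_S+c_j\). This gives
\[
    V(S\cup\{j\})=\frac{(a_S+c_j)^2}{s+1+\gamma},
    \qquad
    V(S)=\frac{a_S^2}{s+\gamma}.
\]
The case \(S=\varnothing\) needs a quick check against the convention \(V(\varnothing):=0\). With \(s=0\) and \(a_S=0\), the formula \(a_S^2/(s+\gamma)\) also gives \(0\) because \(\gamma>0\). So the same expression can be used uniformly and no separate case is required.

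To obtain \eqref{eq:marginal_condition_1}, start from the inequality \(V(S\cup\{j\})>V(S)\) and multiply both sides by the positive number \(s+1+\gamma\). Positivity follows from \(s\ge0\) and \(\gamma>0\), so the direction of the inequality is preserved and the step is reversible. The result is exactly \((a_S+c_j)^2>\frac{s+1+\gamma}{s+\gamma}a_S^2\), and because every step is an equivalence, both directions of the ``if and only if'' hold at once.

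For \eqref{eq:marginal_condition_2}, multiply \eqref{eq:marginal_condition_1} by the positive number \(s+\gamma\). This gives \((s+\gamma)(a_S+c_j)^2>(s+1+\gamma)a_S^2\). Expanding the left-hand side yields \((s+\gamma)a_S^2+2(s+\gamma)a_Sc_j+(s+\gamma)c_j^2\). Cancelling \((s+\gamma)a_S^2\) from both sides leaves \((s+\gamma)c_j^2+2(s+\gamma)a_Sc_j-a_S^2>0\), which is the stated form. Each step is again reversible.

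There is no real obstacle here. The only point requiring care is that all multipliers (\(s+\gamma\) and \(s+1+\gamma\)) are strictly positive, which is what keeps every step an equivalence. It also guarantees that the empty-support convention agrees with the general formula.
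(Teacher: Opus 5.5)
Your proposal is correct and follows essentially the same route as the paper: it writes both values from \eqref{eq:set_objective}, multiplies through by the positive denominators to get \eqref{eq:marginal_condition_1}, and expands to get \eqref{eq:marginal_condition_2}. Your explicit check that $V(\varnothing)=0$ agrees with $a_S^2/(s+\gamma)$ is a small detail the paper leaves implicit.
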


\begin{proof}
By \eqref{eq:set_objective},
\[
    V(S)=\frac{a_S^2}{s+\gamma},
    \qquad
    V(S\cup\{j\})
    =
    \frac{(a_S+c_j)^2}{s+1+\gamma}.
\]
Since both denominators are positive, cross multiplication gives
\eqref{eq:marginal_condition_1}.  Expanding and collecting terms gives
\eqref{eq:marginal_condition_2}.
\end{proof}

\begin{corollary}[Sign-dependent marginal thresholds]
\label{cor:same_sign_threshold}
Let \(a_S\neq0\).  If \(a_Sc_j\ge0\), then adding node \(j\) strictly
improves the reduced objective if and only if
\begin{equation}
    |c_j|
    >
    |a_S|
    \left(
        \sqrt{1+\frac{1}{s+\gamma}}-1
    \right).
    \label{eq:same_sign_threshold}
\end{equation}
If \(a_Sc_j<0\), the corresponding condition is
\[
    |c_j|
    >
    |a_S|
    \left(
        1+\sqrt{1+\frac{1}{s+\gamma}}
    \right).
\]
If \(a_S=0\), every node with \(c_j\neq0\) gives a strict improvement.
\end{corollary}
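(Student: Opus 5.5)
The plan is to start from the quadratic form \eqref{eq:marginal_condition_2} of Theorem~\ref{thm:marginal} and view it as a quadratic inequality in the unknown \(c_j\) with \(a_S\) fixed. Put \(\sigma:=s+\gamma>0\) and divide by \(\sigma\). The condition becomes
\[
    c_j^2+2a_Sc_j-\frac{a_S^2}{\sigma}>0 .
\]
This quadratic has roots
\[
    c_j=a_S\left(-1\pm\sqrt{1+\tfrac{1}{\sigma}}\right).
\]
Because the leading coefficient is positive, strict improvement holds exactly when \(c_j\) lies strictly outside the closed interval between the two roots.

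For \(a_S\neq 0\), set \(\rho:=\sqrt{1+1/\sigma}\), which satisfies \(\rho>1\). Write \(c_j=\epsilon\,t\,\operatorname{sign}(a_S)\) with \(t=|c_j|\ge0\), where \(\epsilon=+1\) if \(a_Sc_j\ge0\) and \(\epsilon=-1\) otherwise. This substitution is cleaner than tracking the signs of the roots directly. Dividing the quadratic by \(a_S^2>0\) and writing \(\tau=t/|a_S|\), the condition reduces to
\[
    \tau^2+2\epsilon\tau-\frac1\sigma>0 .
\]

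In the same-sign case \(\epsilon=+1\), the roots are \(-1\pm\rho\). Only \(\rho-1>0\) is nonnegative, and \(\tau\ge0\). So the condition is equivalent to \(\tau>\rho-1\), which is \eqref{eq:same_sign_threshold}. The boundary case \(c_j=0\) is handled consistently: it gives \(\tau=0<\rho-1\), so there is no improvement, matching the fact that \(-a_S^2<0\) in \eqref{eq:marginal_condition_2}.

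In the opposite-sign case \(\epsilon=-1\), the roots are \(1\pm\rho\). The smaller root \(1-\rho\) is negative, so for \(\tau\ge0\) the condition becomes \(\tau>1+\rho\), giving the stated threshold.

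Finally, if \(a_S=0\), then \eqref{eq:marginal_condition_2} reduces to \(\sigma c_j^2>0\). This holds if and only if \(c_j\neq0\).

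The only real care needed is the sign bookkeeping: checking that exactly one root is admissible in each case and that the inequality direction survives division by positive quantities. This is routine once the normalization by \(|a_S|\) is made.
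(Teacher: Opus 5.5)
Your proof is correct, and it reaches the thresholds by a slightly different algebraic route than the paper.

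The paper starts from \eqref{eq:marginal_condition_1} rather than \eqref{eq:marginal_condition_2}. Taking square roots gives $|a_S+c_j|>\kappa|a_S|$ with $\kappa=\sqrt{1+1/(s+\gamma)}$. The sign cases are then read off directly. In the same-sign case, $|a_S+c_j|=|a_S|+|c_j|$. In the opposite-sign case, $|a_S+c_j|=\bigl|\,|c_j|-|a_S|\,\bigr|$, which immediately rules out improvement unless $|c_j|>|a_S|$, i.e.\ unless the new residual overshoots and flips the sign of the aggregate.

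You instead solve the expanded quadratic \eqref{eq:marginal_condition_2}. Your normalization $\tau=|c_j|/|a_S|$, with the sign indicator $\epsilon$, folds both cases into the single inequality $\tau^2+2\epsilon\tau-1/\sigma>0$. Its roots $-\epsilon\pm\rho$ always have exactly one nonnegative member, so the threshold is uniformly $\tau>\rho-\epsilon$.

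The paper's route avoids computing roots and gives a geometric reading: the new aggregate must land farther than $\kappa|a_S|$ from the origin. Yours is more mechanical, but it handles the boundary case $c_j=0$ and the degenerate case $a_S=0$ (where the quadratic collapses to $\sigma c_j^2>0$) within the same computation.
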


\begin{proof}
Let \(\kappa:=\sqrt{1+1/(s+\gamma)}>1\).  Condition
\eqref{eq:marginal_condition_1} is equivalent to
\(|a_S+c_j|>\kappa|a_S|\).  If \(a_Sc_j\ge0\), then
\(|a_S+c_j|=|a_S|+|c_j|\), which gives
\eqref{eq:same_sign_threshold}.  If \(a_Sc_j<0\), the inequality cannot
hold when \(|c_j|\le|a_S|\); for \(|c_j|>|a_S|\) it becomes
\(|c_j|-|a_S|>\kappa|a_S|\), yielding the displayed opposite-sign threshold.  The case \(a_S=0\) follows directly
from \eqref{eq:set_objective}.
\end{proof}

\begin{remark}[Implicit sparsity]
\label{rem:implicit_sparsity}
The penalty \(\gamma u^2\) does not directly penalize the number of
selected nodes.  Nevertheless, the denominator
\(|S|+\gamma\) in \eqref{eq:set_objective} creates an endogenous
tradeoff: increasing the support is useful only if the newly added
residual components increase the squared aggregate residual enough to
offset the larger denominator.  Thus an ``at most \(r\)'' constraint
need not be saturated.
\end{remark}

\subsection{Closed-Loop Form and One-Step Improvement}
\label{subsec:single_closed_loop}

Let \(b^\star(k)\) and \(u^\star(k)\) be obtained from
Algorithm~\ref{alg:sparse_osaoc}, and define
\[
    s^\star(k):=\mathbf 1^\top b^\star(k).
\]
The closed-loop update is
\begin{equation}
    x(k+1)
    =
    z(k)+b^\star(k)u^\star(k)
    =F(k)x(k)+h(k)+b^\star(k)u^\star(k).
    \label{eq:sparse_closed_loop_1}
\end{equation}
Using \eqref{eq:optimal_scalar} and \eqref{eq:residual},
\begin{equation}
\begin{aligned}
    x(k+1)
    &=
    \bigl(I-\Phi(k)\bigr)z(k)+\Phi(k)g\\
    &=
    \bigl(I-\Phi(k)\bigr)\bigl(F(k)x(k)+h(k)\bigr)+\Phi(k)g,
\end{aligned}
    \label{eq:sparse_closed_loop_2}
\end{equation}
where
\begin{equation}
    \Phi(k)
    :=
    \frac{
        b^\star(k)b^\star(k)^\top
    }{
        s^\star(k)+\gamma
    }.
    \label{eq:phi_definition}
\end{equation}
For \(s^\star(k)=0\), \(\Phi(k)=0\).

The next proposition states precisely what one-step optimality
guarantees.

\begin{proposition}[One-step improvement relative to zero control]
\label{prop:one_step_improvement}
Let
\[
    E_0(k)
    :=
    \sum_{i\in\mathcal T}c_i(k)^2
\]
be the target prediction error obtained by applying zero control at
time \(k\), and let
\[
    E_\star(k)
    :=
    \sum_{i\in\mathcal T}
    \left(c_i(k)-b_i^\star(k)u^\star(k)\right)^2
\]
be the target prediction error under sparse OSAOC.  Then
\begin{equation}
    E_\star(k)+\gamma\,u^\star(k)^2
    =
    E_0(k)
    -
    \frac{
        \bigl(b^\star(k)^\top c(k)\bigr)^2
    }{
        s^\star(k)+\gamma
    }
    \le E_0(k),
    \label{eq:one_step_cost_reduction}
\end{equation}
and therefore
\begin{equation}
    E_\star(k)\le E_0(k).
    \label{eq:one_step_error_nonworsening}
\end{equation}
The final inequality in \eqref{eq:one_step_cost_reduction} is tight if and only if
\(b^\star(k)^\top c(k)=0\), in which case
\(u^\star(k)=0\).
\end{proposition}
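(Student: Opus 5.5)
The plan is to evaluate the reduced cost \eqref{eq:reduced_cost} at the algorithm's output, and then compare against the feasible choice of zero control. First observe that \(E_\star(k)+\gamma u^\star(k)^2\) is exactly \(J_k(b^\star(k),u^\star(k))\) written in the residual form \eqref{eq:cost_residual_form}. By Theorem~\ref{thm:sorting}, Algorithm~\ref{alg:sparse_osaoc} with \(\tau_{\rm act}=0\) returns \(u^\star(k)=u^\star(b^\star(k))\) given by \eqref{eq:optimal_scalar}. Substituting this into \eqref{eq:cost_quadratic_u}, as in the derivation of \eqref{eq:reduced_cost}, gives the equality in \eqref{eq:one_step_cost_reduction}. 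Here we use \(b^{\star\top}b^\star=\mathbf 1^\top b^\star=s^\star(k)\) and \(\sum_{i\in\mathcal T}c_i^2=E_0(k)\).

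The inequality \(\le E_0(k)\) then needs only that the subtracted term is nonnegative. This holds because its numerator is a square and its denominator satisfies \(s^\star(k)+\gamma\ge\gamma>0\). The empty-support case needs a separate check: there \(b^\star=0\), \(u^\star=0\), \(\Phi=0\), the subtracted term is \(0/\gamma=0\), and the identity reads \(E_0=E_0\). Next, since \(\gamma u^\star(k)^2\ge0\), dropping this term from the left side yields \eqref{eq:one_step_error_nonworsening}.

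For the tightness statement, equality in the last step of \eqref{eq:one_step_cost_reduction} holds if and only if \((b^{\star\top}c)^2/(s^\star+\gamma)=0\). Because the denominator is strictly positive, this is equivalent to \(b^\star(k)^\top c(k)=0\). In that case \eqref{eq:optimal_scalar} gives \(u^\star(k)=0\) directly.

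The main subtlety is making sure the algorithm's returned scalar really is \eqref{eq:optimal_scalar} for the returned support, including the empty-support branch. This holds by the last line of Algorithm~\ref{alg:sparse_osaoc}, and because the paper fixes \(\tau_{\rm act}=0\) for analytical results. I also note that under exact optimality \(b^{\star\top}c=0\) can occur only when every target residual vanishes. Otherwise some single node with \(c_j\neq0\) gives \(V>0\) by Corollary~\ref{cor:same_sign_threshold}. This remark is not needed for the stated equivalence.
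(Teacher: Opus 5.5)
Your proposal is correct and follows essentially the same route as the paper. The paper reads the equality in \eqref{eq:one_step_cost_reduction} directly off the reduced cost \eqref{eq:reduced_cost} at the returned support and then drops the nonnegative term \(\gamma u^\star(k)^2\); your explicit checks of the empty support and of the tightness clause fill in details the paper leaves implicit. One small correction: the fact that the returned scalar equals \eqref{eq:optimal_scalar} comes from the last line of Algorithm~\ref{alg:sparse_osaoc}, not from Theorem~\ref{thm:sorting}, as you yourself note later.
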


\begin{proof}
Equation \eqref{eq:one_step_cost_reduction} follows directly from the
reduced optimal cost \eqref{eq:reduced_cost}.  Since
\(\gamma u^\star(k)^2\ge0\), inequality
\eqref{eq:one_step_error_nonworsening} follows immediately.
\end{proof}

\begin{remark}[No temporal monotonicity is implied]
\label{rem:no_temporal_monotonicity}
Proposition~\ref{prop:one_step_improvement} compares the controlled and
uncontrolled \emph{predictions from the same state \(x(k)\)}.  It does
not assert that a tracking-error norm is monotonically decreasing from
time \(k\) to time \(k+1\).  Temporal convergence requires a separate
closed-loop analysis, developed in Section~\ref{sec:tracking}.
\end{remark}

The matrix \(\Phi(k)\) has a useful geometric interpretation.

\begin{proposition}[Regularized rank-one correction]
\label{prop:scaled_projector}
If \(s^\star(k)>0\), define
\begin{equation}
    P_b(k)
    :=
    \frac{
        b^\star(k)b^\star(k)^\top
    }{
        s^\star(k)
    }.
    \label{eq:orthogonal_projector_b}
\end{equation}
Then \(P_b(k)\) is the orthogonal projector onto
\(\operatorname{span}\{b^\star(k)\}\), and
\begin{equation}
    \Phi(k)
    =
    \frac{s^\star(k)}{s^\star(k)+\gamma}\,P_b(k).
    \label{eq:scaled_projector_identity}
\end{equation}
Consequently, \(\Phi(k)\) is symmetric positive semidefinite,
has rank one, and has nonzero eigenvalue
\begin{equation}
    \lambda_{\Phi}(k)
    =
    \frac{s^\star(k)}{s^\star(k)+\gamma}
    \in(0,1).
    \label{eq:phi_eigenvalue}
\end{equation}
Thus sparse OSAOC applies a state-dependent \emph{scaled} rank-one
projection correction; \(\Phi(k)\) itself is not an idempotent
projector when \(\gamma>0\).
\end{proposition}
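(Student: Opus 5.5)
The argument is a direct matrix computation that uses only the binary structure of \(b^\star(k)\) and the positivity of \(s^\star(k)\) and \(\gamma\); no earlier optimality result is needed. Write \(b:=b^\star(k)\) and \(s:=s^\star(k)>0\).

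First we show that \(P_b\) is the orthogonal projector onto \(\operatorname{span}\{b\}\). Because \(b\in\{0,1\}^n\), we have \(b^\top b=\mathbf 1^\top b=s\), the identity already used in Section~\ref{subsec:continuous_elimination}. Hence \(P_b=bb^\top/(b^\top b)\). This matrix is symmetric, and \(P_b^2=b(b^\top b)b^\top/s^2=P_b\), so it is idempotent. It also satisfies \(P_b b=b\), and \(P_b y=0\) whenever \(y\perp b\). A symmetric idempotent matrix with range \(\operatorname{span}\{b\}\) is exactly the orthogonal projector onto that span.

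Second, the identity \eqref{eq:scaled_projector_identity} follows by rewriting \eqref{eq:phi_definition}:
\[
\Phi(k)=\frac{bb^\top}{s+\gamma}=\frac{s}{s+\gamma}\cdot\frac{bb^\top}{s}=\frac{s}{s+\gamma}P_b(k).
\]

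Third, the spectral claims. \(\Phi(k)\) is a positive multiple of a symmetric rank-one projector, so it is symmetric positive semidefinite and has rank one, because \(b\neq0\) when \(s>0\). Its only nonzero eigenvalue is attained on the eigenvector \(b\), and equals \(s/(s+\gamma)\). This value lies in \((0,1)\) because \(s\ge1\) and \(\gamma>0\). Since that eigenvalue is not \(1\), we get \(\Phi^2=\lambda_\Phi\Phi\neq\Phi\), so \(\Phi(k)\) is not idempotent.

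The only step needing care is the identity \(b^\top b=s\). It is what lets the normalization \(\mathbf 1^\top b\) be replaced by \(\|b\|_2^2\), and it holds precisely because \(b\) is binary. Everything else is direct computation.
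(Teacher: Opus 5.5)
Your proposal is correct and follows essentially the same route as the paper. Both use \(\|b^\star(k)\|_2^2=s^\star(k)\), which holds because the support vector is binary, to show that \(P_b(k)\) is symmetric and idempotent. Both then rewrite \(\Phi(k)\) as the scaled projector and read off its spectral properties. Your explicit non-idempotence step, \(\Phi(k)^2=\lambda_\Phi(k)\Phi(k)\neq\Phi(k)\), simply spells out what the paper leaves as immediate.
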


\begin{proof}
Since
\(\|b^\star(k)\|_2^2=s^\star(k)\),
\[
    P_b(k)^2
    =
    \frac{
        b^\star b^{\star\top}b^\star b^{\star\top}
    }{
        (s^\star)^2
    }
    =
    \frac{
        s^\star b^\star b^{\star\top}
    }{
        (s^\star)^2
    }
    =
    P_b(k),
\]
and \(P_b(k)=P_b(k)^\top\).  Hence it is the orthogonal projector onto
\(\operatorname{span}\{b^\star(k)\}\).  Identity
\eqref{eq:scaled_projector_identity} follows from
\eqref{eq:phi_definition}, and the remaining claims are immediate.
\end{proof}

\section{Tracking over Time-Varying Affine Opinion Networks}
\label{sec:tracking}

The sparse OSAOC support depends on the current residual and therefore
on the state.  We first exploit one-step optimality itself to obtain an
a priori controller-dependent comparison bound that does not require
periodicity and does not depend on the realized support sequence.  We then
specialize to periodic free dynamics and derive complementary cycle bounds.
Even in that case the closed-loop matrices need not be periodic, because
the selected support remains state dependent.

Let
\begin{equation}
    e(k):=x(k)-g.
    \label{eq:tracking_error}
\end{equation}
From \eqref{eq:sparse_closed_loop_2},
\begin{equation}
\begin{aligned}
    e(k+1)
    &=
    \bigl(I-\Phi(k)\bigr)F(k)e(k)\\
    &\quad+
    \bigl(I-\Phi(k)\bigr)
    \bigl(F(k)g+h(k)-g\bigr).
\end{aligned}
\label{eq:error_dynamics}
\end{equation}
Define
\begin{equation}
    A(k)
    :=
    \bigl(I-\Phi(k)\bigr)F(k),
    \;
    d(k)
    :=
    \bigl(I-\Phi(k)\bigr)
    \bigl(F(k)g+h(k)-g\bigr),
    \label{eq:A_d_definitions}
\end{equation}
so that
\begin{equation}
    e(k+1)=A(k)e(k)+d(k).
    \label{eq:error_affine}
\end{equation}

\subsection{Uniform Bounds Induced by the Regularized Correction}
\label{subsec:uniform_bounds}

The scaled-projector identity in
Proposition~\ref{prop:scaled_projector} immediately yields a useful
norm bound.

\begin{lemma}[Nonexpansiveness of \(I-\Phi(k)\)]
\label{lem:I_minus_phi}
For every \(k\),
\begin{equation}
    \left\|I-\Phi(k)\right\|_2\le 1.
    \label{eq:I_minus_phi_bound}
\end{equation}
If \(s^\star(k)>0\), the eigenvalues of \(I-\Phi(k)\) are
\(1\) on \(\operatorname{span}\{b^\star(k)\}^{\perp}\) and
\(\gamma/(s^\star(k)+\gamma)\) on
\(\operatorname{span}\{b^\star(k)\}\).
\end{lemma}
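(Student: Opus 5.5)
The plan is to separate the case \(s^\star(k)=0\) and then use the spectral description from Proposition~\ref{prop:scaled_projector}. If \(s^\star(k)=0\), then \(\Phi(k)=0\) by convention. Hence \(I-\Phi(k)=I\) and \(\|I-\Phi(k)\|_2=1\), so the bound \eqref{eq:I_minus_phi_bound} holds trivially.

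For \(s^\star(k)>0\), I will invoke the identity \eqref{eq:scaled_projector_identity} and write
\[
    I-\Phi(k)=\bigl(I-P_b(k)\bigr)+\bigl(1-\lambda_\Phi(k)\bigr)P_b(k),
\]
where \(P_b(k)\) is the orthogonal projector onto \(\operatorname{span}\{b^\star(k)\}\). Since \(P_b(k)\) and \(I-P_b(k)\) are complementary orthogonal projectors, this is an orthogonal spectral decomposition. On \(\operatorname{span}\{b^\star(k)\}^\perp\) the operator acts as the identity, so that eigenvalue is \(1\) with multiplicity \(n-1\). On \(\operatorname{span}\{b^\star(k)\}\) it acts as multiplication by
\[
    1-\frac{s^\star(k)}{s^\star(k)+\gamma}=\frac{\gamma}{s^\star(k)+\gamma}.
\]
As a direct check of the second eigenvalue, note that \(\Phi(k)b^\star=b^\star\,(s^\star/(s^\star+\gamma))\). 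For a vector \(v\perp b^\star\), we have \(\Phi(k)v=0\).

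Since \(I-\Phi(k)\) is symmetric, its spectral norm equals its largest absolute eigenvalue. That value is \(\max\{1,\gamma/(s^\star+\gamma)\}\), and since \(\gamma>0\) and \(s^\star\ge1\), we get \(0<\gamma/(s^\star+\gamma)<1\). The maximum is therefore \(1\) whenever \(n\ge2\). In the degenerate case \(n=1\), the complement subspace is trivial and the norm equals \(\gamma/(s^\star+\gamma)<1\). Either way \(\|I-\Phi(k)\|_2\le1\).

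There is no real obstacle here. The only points needing care are the separate handling of the empty-support convention and the possibility that the orthogonal complement is trivial, which is why the result is stated as an inequality rather than an equality.
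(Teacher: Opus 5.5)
Your proof is correct and follows essentially the same route as the paper: dispose of the empty-support case, then use the scaled-projector identity \(\Phi(k)=\frac{s^\star(k)}{s^\star(k)+\gamma}P_b(k)\) to read off the eigenvalues \(1\) and \(\gamma/(s^\star(k)+\gamma)\), and conclude from symmetry that the induced \(2\)-norm is at most \(1\). Your separate remark on \(n=1\) (trivial orthogonal complement, norm strictly below \(1\)) is a small extra bit of care that the paper's proof passes over, but it does not change the argument.
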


\begin{proof}
If \(s^\star(k)=0\), then \(\Phi(k)=0\) and the result is immediate.
Otherwise, by \eqref{eq:scaled_projector_identity},
\[
    \Phi(k)
    =
    \frac{s^\star(k)}{s^\star(k)+\gamma}P_b(k),
\]
where \(P_b(k)\) is an orthogonal projector. Hence the eigenvalues of
\(I-\Phi(k)\) are \(1\) and
\(1-s^\star(k)/(s^\star(k)+\gamma)
=\gamma/(s^\star(k)+\gamma)\), both in \((0,1]\).
Since \(I-\Phi(k)\) is symmetric, its induced Euclidean norm is its
largest eigenvalue.
\end{proof}

\subsection{An A Priori Controller-Dependent Small-Gain Bound}
\label{subsec:apriori_small_gain}

The nonexpansiveness result above does not use the fact that
\(b^\star(k)\) is an \emph{optimal} sparse support.  Exploiting that
optimality gives a strict contraction factor on the target component of
the one-step free-response error.

Let \(P_{\mathcal T}\) be the diagonal projector onto the nonempty target
set \(\mathcal T\), let
\(P_{\mathcal R}:=I-P_{\mathcal T}\), where
\(\mathcal R=\{1,\ldots,n\}\setminus\mathcal T\), and set
\begin{equation}
    n_{\mathcal T}:=|\mathcal T|,
    \qquad
    \alpha_{\mathcal T}
    :=
    \sqrt{1-\frac{1}{n_{\mathcal T}(1+\gamma)}}
    \in[0,1).
    \label{eq:target_contraction_factor}
\end{equation}
Throughout this subsection assume \(r\ge1\), so every singleton target
support is admissible.  Define the affine target mismatch
\begin{equation}
    \eta(k):=F(k)g+h(k)-g.
    \label{eq:eta_mismatch}
\end{equation}

\begin{theorem}[Controller-dependent comparison inequality]
\label{thm:controller_comparison}
For every closed-loop trajectory generated by the exact sparse OSAOC rule,
\begin{equation}
    \|P_{\mathcal T}e(k+1)\|_2
    \le
    \alpha_{\mathcal T}
    \|P_{\mathcal T}(F(k)e(k)+\eta(k))\|_2.
    \label{eq:target_component_contraction}
\end{equation}
Moreover, with
\[
    y(k)
    :=
    \begin{bmatrix}
    \|P_{\mathcal T}e(k)\|_2\\
    \|P_{\mathcal R}e(k)\|_2
    \end{bmatrix},
\]
one has the componentwise comparison
\begin{equation}
    y(k+1)\le \mathsf G(k)y(k)+q(k),
    \label{eq:controller_comparison}
\end{equation}
where
\begin{equation}
\begin{aligned}
    \mathsf G(k)
    &:={}
    \begin{bmatrix}
    \alpha_{\mathcal T}\|P_{\mathcal T}F(k)P_{\mathcal T}\|_2
    &
    \alpha_{\mathcal T}\|P_{\mathcal T}F(k)P_{\mathcal R}\|_2\\
    \|P_{\mathcal R}F(k)P_{\mathcal T}\|_2
    &
    \|P_{\mathcal R}F(k)P_{\mathcal R}\|_2
    \end{bmatrix},\\[1mm]
    q(k)
    &:={}
    \begin{bmatrix}
    \alpha_{\mathcal T}\|P_{\mathcal T}\eta(k)\|_2\\
    \|P_{\mathcal R}\eta(k)\|_2
    \end{bmatrix}.
\end{aligned}
\label{eq:comparison_Gq}
\end{equation}
Both \(\mathsf G(k)\) and \(q(k)\) depend only on the free dynamics,
the target set, the goal, and \(\gamma\); in particular, they are
independent of the realized support sequence and of the state trajectory.
\end{theorem}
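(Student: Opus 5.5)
The plan is to get \eqref{eq:target_component_contraction} directly from the reduced cost \eqref{eq:reduced_cost} together with the global optimality in Theorem~\ref{thm:sorting}. I would derive the block comparison \eqref{eq:controller_comparison} afterwards from the triangle inequality.

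\emph{Step 1 (rewrite the error through the residual).} Since \(x(k+1)=z(k)+b^\star u^\star\), we have \(e(k+1)=z(k)-g+b^\star u^\star=-(c(k)-b^\star u^\star)\). By \eqref{eq:residual} and \eqref{eq:eta_mismatch},
\[
z(k)-g=F(k)x(k)+h(k)-g=F(k)e(k)+\eta(k),
\]
so \(c(k)=-(F(k)e(k)+\eta(k))\). Because \(\operatorname{supp}(b^\star)\subseteq\mathcal T\), we have \(P_{\mathcal R}b^\star=0\). Two facts follow:
\[
\|P_{\mathcal T}e(k+1)\|_2^2=E_\star(k),
\qquad
P_{\mathcal R}e(k+1)=P_{\mathcal R}\bigl(F(k)e(k)+\eta(k)\bigr).
\]
We also have \(E_0(k)=\|P_{\mathcal T}(F(k)e(k)+\eta(k))\|_2^2\).

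\emph{Step 2 (contraction from optimality).} This is the key step. By Proposition~\ref{prop:one_step_improvement}, \(E_\star\le E_0-V(S^\star)\), where \(V(S^\star)=\max_{|S|\le r}V(S)\) by Theorem~\ref{thm:sorting} (recall \(\tau_{\rm act}=0\)). Since \(r\ge1\), every singleton \(\{j\}\) with \(j\in\mathcal T\) is admissible. Hence
\[
V(S^\star)\ge\max_{j\in\mathcal T}\frac{c_j^2}{1+\gamma}\ge\frac{E_0}{n_{\mathcal T}(1+\gamma)},
\]
because the largest of the \(n_{\mathcal T}\) squares is at least their average. Therefore \(E_\star\le\alpha_{\mathcal T}^2E_0\). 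Taking square roots gives \eqref{eq:target_component_contraction}.

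\emph{Step 3 (comparison system).} Split \(e=P_{\mathcal T}e+P_{\mathcal R}e\) and apply the triangle inequality and submultiplicativity, using \(P_{\mathcal T}^2=P_{\mathcal T}\):
\[
\|P_{\mathcal T}(Fe+\eta)\|_2\le\|P_{\mathcal T}FP_{\mathcal T}\|_2\|P_{\mathcal T}e\|_2+\|P_{\mathcal T}FP_{\mathcal R}\|_2\|P_{\mathcal R}e\|_2+\|P_{\mathcal T}\eta\|_2.
\]
Multiplying by \(\alpha_{\mathcal T}\) and using Step 2 gives the first row of \eqref{eq:controller_comparison}. The analogous bound applied to \(P_{\mathcal R}e(k+1)=P_{\mathcal R}(Fe+\eta)\) from Step 1 gives the second row, with no factor \(\alpha_{\mathcal T}\).

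\emph{Independence.} The entries of \(\mathsf G(k)\) and \(q(k)\) involve only \(F(k)\), \(\eta(k)\) (which depends on \(g\) and \(h(k)\)), \(P_{\mathcal T}\), and \(\alpha_{\mathcal T}\) (which depends on \(n_{\mathcal T}\) and \(\gamma\)). None of these depends on the realized support or on the state.

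The only real obstacle is Step 2. It must use global optimality over all admissible supports, not merely nonworsening. The singleton lower bound on \(V(S^\star)\) is exactly what produces the factor \(1/(n_{\mathcal T}(1+\gamma))\).
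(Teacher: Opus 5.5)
Your proposal is correct and follows essentially the same route as the paper. Both arguments insert the singleton lower bound \(V(S^\star(k))\ge\max_{i\in\mathcal T}c_i(k)^2/(1+\gamma)\ge\|P_{\mathcal T}c(k)\|_2^2/(n_{\mathcal T}(1+\gamma))\) into the reduced optimal cost to get the target contraction, then use \(P_{\mathcal R}b^\star(k)=0\) and the block triangle inequality for the comparison system; the only cosmetic difference is that you pass through Proposition~\ref{prop:one_step_improvement} instead of citing \eqref{eq:reduced_cost} directly.
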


\begin{proof}
Let
\[
    c_{\mathcal T}(k)
    :=P_{\mathcal T}(g-z(k))
    =-P_{\mathcal T}(F(k)e(k)+\eta(k)).
\]
For every \(i\in\mathcal T\), the singleton support \(\{i\}\) is
admissible.  Hence the optimal reduced support value in
\eqref{eq:set_objective} satisfies
\[
\begin{aligned}
    V(S^\star(k))
    &\ge
    \frac{\max_{i\in\mathcal T}c_i(k)^2}{1+\gamma}\\
    &\ge
    \frac{\|c_{\mathcal T}(k)\|_2^2}
         {n_{\mathcal T}(1+\gamma)}.
\end{aligned}
\]
By \eqref{eq:reduced_cost}, the optimized stage cost is
\[
    J_k^\star
    =
    \|c_{\mathcal T}(k)\|_2^2-V(S^\star(k)).
\]
Since \(J_k^\star\) is the post-control target-error energy plus the
nonnegative term \(\gamma u^\star(k)^2\),
\[
\begin{aligned}
    \|P_{\mathcal T}e(k+1)\|_2^2
    &\le J_k^\star\\
    &\le
    \left(1-\frac{1}{n_{\mathcal T}(1+\gamma)}\right)
    \|c_{\mathcal T}(k)\|_2^2,
\end{aligned}
\]
which proves \eqref{eq:target_component_contraction}.

Since every admissible support is contained in \(\mathcal T\),
\(P_{\mathcal R}b^\star(k)=0\).  Therefore
\[
    P_{\mathcal R}e(k+1)
    =P_{\mathcal R}(F(k)e(k)+\eta(k)).
\]
Decomposing \(e=P_{\mathcal T}e+P_{\mathcal R}e\), applying the triangle
inequality to this identity and to
\eqref{eq:target_component_contraction}, and collecting the four block
norms gives \eqref{eq:controller_comparison}--\eqref{eq:comparison_Gq}.
\end{proof}

\begin{corollary}[Checkable small-gain certificate]
\label{cor:apriori_small_gain}
Suppose there is a constant nonnegative matrix \(\overline{\mathsf G}\)
and a vector \(\bar q\ge0\) such that, componentwise,
\[
    \mathsf G(k)\le\overline{\mathsf G},
    \qquad
    q(k)\le\bar q,
    \qquad k\ge0.
\]
If
\begin{equation}
    \rho(\overline{\mathsf G})<1,
    \label{eq:uniform_small_gain_condition}
\end{equation}
then
\begin{equation}
    y(k)
    \le
    \overline{\mathsf G}^{\,k}y(0)
    +
    \sum_{j=0}^{k-1}\overline{\mathsf G}^{\,j}\bar q,
    \label{eq:uniform_small_gain_bound}
\end{equation}
and therefore
\begin{equation}
    \limsup_{k\to\infty}y(k)
    \le
    (I-\overline{\mathsf G})^{-1}\bar q
    \label{eq:uniform_small_gain_uub}
\end{equation}
componentwise.  If the target is affine invariant, so
\(\eta(k)\equiv0\), then \(e(k)\to0\) exponentially.

If \(F(k)\) and \(h(k)\) are periodic with period \(N_c\), a less
conservative check is obtained by defining
\begin{equation}
    \mathsf G_{\rm c}
    :=
    \mathsf G(N_c-1)\cdots\mathsf G(0).
    \label{eq:comparison_cycle_matrix}
\end{equation}
The condition \(\rho(\mathsf G_{\rm c})<1\) likewise gives a global
cycle-instant ultimate bound, and exact exponential tracking when
\(\eta(k)\equiv0\).
\end{corollary}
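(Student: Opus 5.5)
The plan is to iterate the componentwise comparison inequality \eqref{eq:controller_comparison} of Theorem~\ref{thm:controller_comparison}, using monotonicity of nonnegative matrices. Since \(\mathsf G(k)\ge0\), \(y(k)\ge0\), \(\mathsf G(k)\le\overline{\mathsf G}\) and \(q(k)\le\bar q\), we obtain
\[
    y(k+1)\le\mathsf G(k)y(k)+q(k)\le\overline{\mathsf G}\,y(k)+\bar q .
\]
Multiplying a componentwise inequality by the nonnegative matrix \(\overline{\mathsf G}\) preserves it. An induction on \(k\) then gives \eqref{eq:uniform_small_gain_bound}: assume the bound at step \(k\), multiply it by \(\overline{\mathsf G}\), and add \(\bar q\).

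The asymptotic statements follow from \(\rho(\overline{\mathsf G})<1\). This gives \(\overline{\mathsf G}^{\,k}\to0\) exponentially, and the Neumann series \(\sum_{j\ge0}\overline{\mathsf G}^{\,j}\) converges to \((I-\overline{\mathsf G})^{-1}\), which is nonnegative. Taking \(\limsup\) in \eqref{eq:uniform_small_gain_bound} then yields \eqref{eq:uniform_small_gain_uub}, because the partial sums are dominated by the full series.

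For exact tracking, suppose \(\eta\equiv0\). Then \(q(k)=0\), so we may take \(\bar q=0\) and get \(y(k)\le\overline{\mathsf G}^{\,k}y(0)\). For any \(\epsilon>0\) with \(\rho(\overline{\mathsf G})+\epsilon<1\), Gelfand's formula gives \(\|\overline{\mathsf G}^{\,k}\|\le C(\rho(\overline{\mathsf G})+\epsilon)^k\). Finally, \(\|e(k)\|_2\le\|P_{\mathcal T}e\|_2+\|P_{\mathcal R}e\|_2\) is controlled by the components of \(y(k)\), so \(e(k)\to0\) exponentially.

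For the periodic case, I would iterate the exact comparison \eqref{eq:controller_comparison} over one period without the uniform majorant. Because \(\mathsf G(k)\) and \(q(k)\) depend only on the free data \(F(k),h(k),g,\gamma,\mathcal T\) (as stressed in Theorem~\ref{thm:controller_comparison}), they are \(N_c\)-periodic even though the closed-loop support is not. Composing \(N_c\) steps and again using nonnegativity gives
\[
    y((m+1)N_c)\le\mathsf G_{\rm c}\,y(mN_c)+\bar q_{\rm c},
    \qquad
    \bar q_{\rm c}:=\sum_{j=0}^{N_c-1}\mathsf G(N_c-1)\cdots\mathsf G(j+1)\,q(j),
\]
with \(\bar q_{\rm c}\) independent of \(m\). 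Applying the first part to this time-invariant comparison with \(\overline{\mathsf G}=\mathsf G_{\rm c}\) and \(\bar q=\bar q_{\rm c}\) gives the cycle-instant bound \(\limsup_m y(mN_c)\le(I-\mathsf G_{\rm c})^{-1}\bar q_{\rm c}\). When \(\eta\equiv0\), \(\bar q_{\rm c}=0\), which gives exponential decay at cycle instants. To extend the decay to all \(k\), the intermediate steps satisfy \(y(mN_c+\ell)\le\mathsf G(\ell-1)\cdots\mathsf G(0)\,y(mN_c)\) with finitely many bounded factors.

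The only delicate point is the bookkeeping of this periodic composition. The steps that need care are verifying that the nonnegativity of every \(\mathsf G(k)\) legitimately propagates the inequalities through products, and that periodicity of \(\mathsf G(k)\) holds precisely because \(\mathsf G(k)\) does not depend on the support. The remaining steps are standard.
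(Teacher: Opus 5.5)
Your proposal is correct and follows the paper's proof. It iterates the comparison inequality \eqref{eq:controller_comparison} using monotonicity of multiplication by nonnegative matrices, passes to the limit via the Neumann series for \((I-\overline{\mathsf G})^{-1}\), and in the periodic case unrolls one period to obtain an affine cycle recursion with homogeneous matrix \(\mathsf G_{\rm c}\). You also spell out details the paper leaves implicit: \(\mathsf G(k)\) and \(q(k)\) are \(N_c\)-periodic because they do not depend on the support, the cycle forcing term \(\bar q_{\rm c}\) has an explicit form, and the cycle-instant decay extends to all times.
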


\begin{proof}
The first statement follows by iterating
\eqref{eq:controller_comparison} and using monotonicity of multiplication
by nonnegative matrices.  Under
\eqref{eq:uniform_small_gain_condition}, the Neumann series converges to
\((I-\overline{\mathsf G})^{-1}\).  In the periodic case, unrolling
\eqref{eq:controller_comparison} over one period produces a two-dimensional
affine recursion whose homogeneous matrix is
\(\mathsf G_{\rm c}\); the same argument applies at cycle instants.
\end{proof}

\begin{corollary}[Full-state certificate and controller-induced DeGroot tracking]
\label{cor:full_state_controller_certificate}
If \(\mathcal T=\{1,\ldots,n\}\), let
\begin{equation}
    \alpha_n
    :=
    \sqrt{1-\frac{1}{n(1+\gamma)}}.
    \label{eq:alpha_full_state}
\end{equation}
Then, for arbitrary time variation,
\begin{equation}
    \|e(k+1)\|_2
    \le
    \alpha_n\|F(k)\|_2\|e(k)\|_2
    +
    \alpha_n\|F(k)g+h(k)-g\|_2.
    \label{eq:full_state_one_step_bound}
\end{equation}
Consequently, if
\(\beta:=\sup_k\|F(k)\|_2<\infty\),
\(\varepsilon_g:=\sup_k\|F(k)g+h(k)-g\|_2<\infty\), and
\begin{equation}
    \alpha_n\beta<1,
    \label{eq:full_state_uniform_condition}
\end{equation}
then
\begin{equation}
    \|e(k)\|_2
    \le
    (\alpha_n\beta)^k\|e(0)\|_2
    +
    \alpha_n\varepsilon_g
    \frac{1-(\alpha_n\beta)^k}{1-\alpha_n\beta}.
    \label{eq:full_state_uniform_bound}
\end{equation}
For periodic free dynamics, the weaker cycle condition
\begin{equation}
    \alpha_n^{N_c}
    \prod_{t=0}^{N_c-1}\|F(t)\|_2<1
    \label{eq:full_state_periodic_condition}
\end{equation}
is sufficient.

In particular, if \(F(k)=W(k)\) is doubly stochastic and
\(g=\xi\mathbf1\), then \(\|W(k)\|_2=1\) and
\(W(k)g=g\).  Thus \eqref{eq:full_state_uniform_condition} holds because
\(\alpha_n<1\), and sparse OSAOC tracks the prescribed consensus value
globally and exponentially.  By contrast, the uncontrolled DeGroot error
map retains the unit consensus mode.  Hence this certificate can establish
convergence created by the sparse controller rather than inherited from
the free dynamics.
\end{corollary}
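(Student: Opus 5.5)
The plan is to specialize Theorem~\ref{thm:controller_comparison} to \(\mathcal T=\{1,\ldots,n\}\).

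\emph{One-step bound.} When \(\mathcal T=\{1,\ldots,n\}\), we have \(P_{\mathcal T}=I\), \(P_{\mathcal R}=0\) and \(n_{\mathcal T}=n\), so \(\alpha_{\mathcal T}=\alpha_n\). Then \eqref{eq:target_component_contraction} reads
\[
\|e(k+1)\|_2\le\alpha_n\|F(k)e(k)+\eta(k)\|_2 .
\]
Applying the triangle inequality and submultiplicativity to the right-hand side gives \eqref{eq:full_state_one_step_bound}. Note that this step uses the singleton-support argument in the proof of Theorem~\ref{thm:controller_comparison}, which requires \(r\ge1\); that is assumed throughout the subsection.

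\emph{Uniform bound.} Bounding \(\|F(k)\|_2\le\beta\) and \(\|\eta(k)\|_2\le\varepsilon_g\) gives the scalar recursion
\[
w_{k+1}\le\alpha_n\beta\,w_k+\alpha_n\varepsilon_g,\qquad w_k=\|e(k)\|_2 .
\]
Induction on \(k\), summing the geometric series, yields \eqref{eq:full_state_uniform_bound}. This is just Corollary~\ref{cor:apriori_small_gain} with a \(1\times1\) gain matrix.

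\emph{Periodic case.} Iterate \eqref{eq:full_state_one_step_bound} over one period \(N_c\). With \(\mathsf G(t)=\alpha_n\|F(t)\|_2\) scalar, the cycle matrix \eqref{eq:comparison_cycle_matrix} equals the product in \eqref{eq:full_state_periodic_condition}. The cycle part of Corollary~\ref{cor:apriori_small_gain} then gives the cycle-instant ultimate bound, and exponential decay at cycle instants when \(\eta\equiv0\). The intermediate instants are controlled by the bounded one-step growth factors, which are finitely many by periodicity.

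\emph{DeGroot case.} Let \(W(k)\) be doubly stochastic. Then \(W^\top W\) is also doubly stochastic and nonnegative, so its spectral radius is \(1\) (since \(\|W^\top W\|_1=1\) and \(\mathbf1\) is an eigenvector with eigenvalue \(1\)). Hence \(\|W\|_2=1\). Row-stochasticity gives \(W\xi\mathbf1=\xi\mathbf1\), and \(h=0\), so \(\eta\equiv0\). Since \(\gamma>0\), \(\alpha_n<1\), so \(\alpha_n\beta=\alpha_n<1\). The bound \eqref{eq:full_state_uniform_bound} then gives \(\|e(k)\|_2\le\alpha_n^k\|e(0)\|_2\), which is global exponential tracking.

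\emph{Contrast with the free dynamics.} The free error map \(e\mapsto We\) fixes \(\mathbf1\), since \(W\mathbf1=\mathbf1\). So it does not contract in general, and the contraction is produced by the controller.

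\emph{Main obstacle.} The only delicate point is the norm identity \(\|W\|_2=1\), which I would justify via \(\|W\|_2^2\le\|W\|_1\|W\|_\infty=1\). Equality then follows from \(W\mathbf1=\mathbf1\).
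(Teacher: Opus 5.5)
Your proposal is correct and follows the paper's proof essentially step for step: you specialize the target-component contraction \eqref{eq:target_component_contraction} with \(P_{\mathcal T}=I\), iterate the resulting scalar recursion (and its one-period product) to obtain the uniform and periodic statements, and establish \(\|W(k)\|_2=1\) via \(\|W\|_2^2\le\|W\|_1\|W\|_\infty=1\) together with \(W\mathbf 1=\mathbf 1\). Your alternative justification, that the doubly stochastic matrix \(W^\top W\) has spectral radius one, is equally valid and amounts to the same estimate.
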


\begin{proof}
When \(P_{\mathcal T}=I\),
\eqref{eq:target_component_contraction} gives
\eqref{eq:full_state_one_step_bound}.  The uniform bound follows by
iteration.  Over one period, iteration of the homogeneous coefficient
gives the factor in \eqref{eq:full_state_periodic_condition}.  Finally,
for a doubly stochastic matrix,
\(\|W(k)\|_2\le\sqrt{\|W(k)\|_1\|W(k)\|_\infty}=1\), while
\(W(k)\mathbf1=\mathbf1\) implies \(\|W(k)\|_2\ge1\); hence
\(\|W(k)\|_2=1\).
\end{proof}

Assume from this point onward that the uncontrolled affine dynamics
are periodic with period \(N_c\ge1\):
\begin{equation}
    F(k+N_c)=F(k),\qquad h(k+N_c)=h(k),\qquad k\ge0.
    \label{eq:W_periodic}
\end{equation}
This includes periodic DeGroot dynamics and, with constant
\(\Lambda\) and \(s\), periodic FJ dynamics generated by a periodic
row-stochastic \(W(k)\)
\cite{ProskurnikovEtAl2017TimeVarying}.
Define
\begin{equation}
    \beta
    :=
    \max_{0\le t<N_c}\|F(t)\|_2,
    \qquad
    \varepsilon_g
    :=
    \max_{0\le t<N_c}\|F(t)g+h(t)-g\|_2.
    \label{eq:beta_epsilon}
\end{equation}
Both constants are finite. Lemma~\ref{lem:I_minus_phi} gives, for
every \(k\),
\begin{equation}
    \|A(k)\|_2\le\beta,
    \qquad
    \|d(k)\|_2\le\varepsilon_g.
    \label{eq:A_d_uniform_bounds}
\end{equation}

For \(q\in\mathbb N\), define the finite geometric sum
\begin{equation}
    G_q(\beta)
    :=
    \sum_{j=0}^{q-1}\beta^j,
    \qquad
    G_0(\beta):=0.
    \label{eq:Gq_definition}
\end{equation}
Equivalently,
\[
    G_q(\beta)
    =
    \begin{cases}
        (\beta^q-1)/(\beta-1), & \beta\neq1,\\[1mm]
        q, & \beta=1.
    \end{cases}
\]

\subsection{Cycle Dynamics}
\label{subsec:cycle_dynamics}

For \(k\ge j\), let
\begin{equation}
    \Psi(k,j)
    :=
    \begin{cases}
        A(k-1)A(k-2)\cdots A(j), & k>j,\\
        I, & k=j.
    \end{cases}
    \label{eq:transition_operator}
\end{equation}
For cycle index \(m\in\mathbb N\), define
\begin{equation}
    \mathcal A_m
    :=
    \Psi((m+1)N_c,mN_c)
    =
    A((m+1)N_c-1)\cdots A(mN_c),
    \label{eq:cycle_transition}
\end{equation}
and
\begin{equation}
    \mathcal \zeta_m
    :=
    \sum_{i=mN_c}^{(m+1)N_c-1}
    \Psi((m+1)N_c,i+1)d(i).
    \label{eq:cycle_mismatch}
\end{equation}
Then
\begin{equation}
    e((m+1)N_c)
    =
    \mathcal A_m e(mN_c)+\mathcal \zeta_m.
    \label{eq:cycle_error_recursion}
\end{equation}

Although the free dynamics are periodic, neither \(A(k)\) nor
\(\mathcal A_m\) is generally periodic because
\(b^\star(k)\) depends on \(x(k)\).

The mismatch term $\zeta_m$ in \eqref{eq:cycle_error_recursion} is uniformly
bounded without an additional assumption.

\begin{lemma}[Explicit cycle-mismatch bound]
\label{lem:cycle_mismatch_bound}
For every cycle \(m\),
\begin{equation}
    \|\mathcal \zeta_m\|_2
    \le
    \delta_g,
    \qquad
    \delta_g
    :=
    \varepsilon_g G_{N_c}(\beta).
    \label{eq:delta_g}
\end{equation}
\end{lemma}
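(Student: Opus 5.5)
The plan is to bound \(\zeta_m\) directly from its definition \eqref{eq:cycle_mismatch}, using the triangle inequality, submultiplicativity of the induced Euclidean norm, and the uniform bounds \eqref{eq:A_d_uniform_bounds}. Periodicity is not needed beyond the fact that \(\beta\) and \(\varepsilon_g\) in \eqref{eq:beta_epsilon} bound \(\|A(k)\|_2\) and \(\|d(k)\|_2\) for \emph{every} \(k\), not only for \(0\le k<N_c\).

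I would first justify those uniform bounds. By \eqref{eq:W_periodic}, \(F(k)=F(k \bmod N_c)\) and \(h(k)=h(k\bmod N_c)\). Lemma~\ref{lem:I_minus_phi} then gives, for every \(k\),
\[
\|A(k)\|_2\le\|I-\Phi(k)\|_2\,\|F(k)\|_2\le\beta
\]
and
\[
\|d(k)\|_2\le\|F(k)g+h(k)-g\|_2\le\varepsilon_g .
\]
This holds whatever state-dependent support \(b^\star(k)\) is selected. That independence is the one point needing care, because \(A(k)\) itself is not periodic.

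Next, for each index \(i\in\{mN_c,\ldots,(m+1)N_c-1\}\), the operator \(\Psi((m+1)N_c,i+1)\) is a product of \((m+1)N_c-i-1\) factors \(A(\cdot)\). It is the identity when \(i=(m+1)N_c-1\). Submultiplicativity gives
\[
\|\Psi((m+1)N_c,i+1)\|_2\le\beta^{(m+1)N_c-i-1}.
\]
Applying the triangle inequality to the sum and substituting \(j=(m+1)N_c-i-1\), which runs over \(0,\ldots,N_c-1\), gives
\[
\|\zeta_m\|_2\le\sum_{j=0}^{N_c-1}\beta^j\varepsilon_g=\varepsilon_g G_{N_c}(\beta).
\]
This is exactly \eqref{eq:delta_g} by \eqref{eq:Gq_definition}.

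No step is a real obstacle. The only subtle point is the first one: the bound must not depend on the realized support sequence. Lemma~\ref{lem:I_minus_phi} handles this because it holds for any admissible support, including \(s^\star(k)=0\). The reindexing of the geometric sum is routine, and the bound holds for \(\beta\) of any size, including \(\beta\ge1\).
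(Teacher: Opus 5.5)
Your proposal is correct and follows the paper's proof essentially verbatim: the triangle inequality on the defining sum of $\zeta_m$, the uniform bounds $\|A(k)\|_2\le\beta$ and $\|d(k)\|_2\le\varepsilon_g$ (from Lemma~\ref{lem:I_minus_phi}), and $\|\Psi\|_2\le\beta^q$ for a $q$-factor product together give the geometric sum $\varepsilon_g G_{N_c}(\beta)$. You also state explicitly that periodicity of $F$ and $h$ is what lets the one-period constants bound every $k$, regardless of the realized supports; the paper leaves this implicit in \eqref{eq:A_d_uniform_bounds}.
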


\begin{proof}
By \eqref{eq:A_d_uniform_bounds}, a transition product containing
\(q\) factors satisfies
\(\|\Psi(k,j)\|_2\le\beta^q\). Hence
\[
\begin{aligned}
    \|\mathcal \zeta_m\|_2
    &\le
    \sum_{i=mN_c}^{(m+1)N_c-1}
    \|\Psi((m+1)N_c,i+1)\|_2\,\|d(i)\|_2\\
    &\le
    \varepsilon_g
    \sum_{q=0}^{N_c-1}\beta^q
    =
    \varepsilon_g G_{N_c}(\beta).
\end{aligned}
\]
\end{proof}

\subsection{Conditional Contraction Along the Closed-Loop Error}
\label{subsec:cycle_contraction}

The a priori condition in Corollary~\ref{cor:apriori_small_gain} is
checkable before simulation but can be conservative, especially when only
a proper subset of agents is directly targeted.  We therefore retain a
complementary trajectory-wise analysis for periodic free dynamics.  It is
strictly conditional: the hypothesis below concerns the realized
closed-loop error direction and is not claimed to follow from the model
structure alone.

For each cycle, define the homogeneous realized cycle gain
\begin{equation}
    \chi_m
    :=
    \begin{cases}
    \displaystyle
    \frac{\|\mathcal A_m e(mN_c)\|_2}{\|e(mN_c)\|_2},
        & e(mN_c)\neq0,\\[3mm]
    0,  & e(mN_c)=0.
    \end{cases}
    \label{eq:realized_cycle_gain}
\end{equation}

\begin{assumption}[Uniform contraction along cycle-start errors]
\label{ass:cycle_contraction}
There exists \(a\in[0,1)\) such that
\begin{equation}
    \|\mathcal A_m e(mN_c)\|_2
    \le
    a\|e(mN_c)\|_2
    \qquad
    \text{for every }m\ge0.
    \label{eq:cycle_contraction_assumption}
\end{equation}
Equivalently, \(\chi_m\le a<1\) for all \(m\).
\end{assumption}

\begin{remark}[Nature of the assumption]
\label{rem:conditional_stability}
Assumption~\ref{ass:cycle_contraction} is a condition on the
\emph{state-dependent closed-loop trajectory}, not on the induced norm
of \(\mathcal A_m\) in every direction. It is therefore strictly weaker
than requiring \(\|\mathcal A_m\|_2<1\). In particular, if
\(e(mN_c)=0\), condition \eqref{eq:cycle_contraction_assumption} is
automatically satisfied regardless of \(\|\mathcal A_m\|_2\). This is
compatible with the controller switching off after exact tracking has
been achieved.

If the same constant \(a<1\) is proved for every trajectory starting
in a specified forward-invariant set, the bounds below are uniform
over that set. If the condition is verified only along one simulated
trajectory, the conclusions are trajectory-wise and should be
interpreted as an a posteriori validation of the contraction
hypothesis.  In particular, Assumption~\ref{ass:cycle_contraction} can
fail for admissible initial conditions and targets; Theorems~\ref{thm:cycle_uub}
and~\ref{thm:exact_invariant_tracking} below must therefore be read as
conditional results.  The a priori alternative is
Corollary~\ref{cor:apriori_small_gain}.
\end{remark}

\begin{proposition}[A stronger active-cycle sufficient condition]
\label{prop:active_cycle_operator_condition}
If there exists \(a\in[0,1)\) such that
\begin{equation}
    \|\mathcal A_m\|_2\le a
    \qquad
    \text{for every cycle with }e(mN_c)\neq0,
    \label{eq:active_cycle_operator_condition}
\end{equation}
then Assumption~\ref{ass:cycle_contraction} holds.
\end{proposition}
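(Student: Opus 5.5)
The argument is a case split on whether the cycle-start error vanishes, followed by the definition of the induced Euclidean norm; no structural property of $\mathcal A_m$ beyond \eqref{eq:active_cycle_operator_condition} is needed. Fix a cycle index $m\ge0$. We verify \eqref{eq:cycle_contraction_assumption} with the same constant $a\in[0,1)$ that appears in \eqref{eq:active_cycle_operator_condition}.

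\emph{Case $e(mN_c)=0$.} Here $\mathcal A_m e(mN_c)=0$, so \eqref{eq:cycle_contraction_assumption} holds with equality, $0\le a\cdot 0$, whatever the value of $\|\mathcal A_m\|_2$. This matches the observation in Remark~\ref{rem:conditional_stability}. It is the reason the hypothesis needs to be imposed only on active cycles: after exact tracking, the controller may switch off and $\mathcal A_m$ may revert to a free-dynamics product with norm $\beta^{N_c}\ge1$.

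\emph{Case $e(mN_c)\neq0$.} The hypothesis applies directly. By submultiplicativity of the induced norm,
\[
\|\mathcal A_m e(mN_c)\|_2\le\|\mathcal A_m\|_2\,\|e(mN_c)\|_2\le a\|e(mN_c)\|_2 .
\]
Equivalently, dividing by $\|e(mN_c)\|_2>0$ in \eqref{eq:realized_cycle_gain} gives $\chi_m\le a$.

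Combining the two cases, \eqref{eq:cycle_contraction_assumption} holds for every $m$ with a constant $a<1$ that does not depend on $m$, which is exactly Assumption~\ref{ass:cycle_contraction}.

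The only subtle point is the uniformity of $a$: it must be a single constant over all active cycles. That constant is supplied directly by the hypothesis, so no extra argument is needed. Otherwise the proof is a one-line consequence of the definition of the operator norm.
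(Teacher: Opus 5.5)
Your proof is correct and follows essentially the same route as the paper: a two-case split, where active cycles use \(\|\mathcal A_m e(mN_c)\|_2\le\|\mathcal A_m\|_2\|e(mN_c)\|_2\le a\|e(mN_c)\|_2\), and for \(e(mN_c)=0\) both sides vanish. Two small points do not affect the argument: that inequality is the defining property of the induced norm rather than submultiplicativity, and your aside should say the inactive-cycle free product has norm at most \(\prod_t\|F(t)\|_2\le\beta^{N_c}\), not exactly \(\beta^{N_c}\ge1\).
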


\begin{proof}
For every cycle with \(e(mN_c)\neq0\),
\[
    \|\mathcal A_m e(mN_c)\|_2
    \le
    \|\mathcal A_m\|_2\,\|e(mN_c)\|_2
    \le
    a\|e(mN_c)\|_2.
\]
For \(e(mN_c)=0\), both sides of
\eqref{eq:cycle_contraction_assumption} are zero.
\end{proof}

The admissibility of the empty support prevents a strict
state-independent certificate based on maximizing the cycle norm over
\emph{all} support sequences.

\begin{proposition}[Free-dynamics inheritance certificate and the DeGroot obstruction]
\label{prop:no_support_uniform_certificate}
For \(t=0,\ldots,N_c-1\) and \(b\in\mathcal U_r\), let
\[
    A_t(b)
    :=
    \left(
        I-\frac{bb^\top}{\mathbf 1^\top b+\gamma}
    \right)F(t),
\]
and define
\[
    \bar a_{\mathrm{all}}
    :=
    \max_{b_0,\ldots,b_{N_c-1}\in\mathcal U_r}
    \left\|
        A_{N_c-1}(b_{N_c-1})\cdots A_0(b_0)
    \right\|_2.
\]
Then
\[
    \bar a_{\mathrm{all}}
    \le
    \prod_{t=0}^{N_c-1}\|F(t)\|_2.
\]
Consequently,
\[
    \prod_{t=0}^{N_c-1}\|F(t)\|_2<1
\]
is sufficient to guarantee that \emph{every} admissible sparse correction
preserves strict induced-norm contraction over one cycle.  This condition
is control independent: it already certifies contraction of the all-empty,
uncontrolled support sequence and therefore is an inheritance/robustness
certificate, not a certificate of stabilization created by sparse OSAOC.

In the DeGroot specialization \(F(t)=W(t)\), with each \(W(t)\)
row-stochastic and the empty support admissible, one instead has
\begin{equation}
    \bar a_{\mathrm{all}}\ge1.
    \label{eq:no_uniform_certificate}
\end{equation}
Hence a strict support-uniform certificate
\(\bar a_{\mathrm{all}}<1\) cannot hold for the full DeGroot admissible
family \(\mathcal U_r^{N_c}\).
\end{proposition}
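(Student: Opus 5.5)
The argument has two parts, both resting on elementary norm facts.

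\emph{Upper bound.} Fix an arbitrary sequence \(b_0,\ldots,b_{N_c-1}\in\mathcal U_r\). Submultiplicativity of the induced 2-norm gives
\[
    \bigl\|A_{N_c-1}(b_{N_c-1})\cdots A_0(b_0)\bigr\|_2
    \le
    \prod_{t=0}^{N_c-1}
    \Bigl\|I-\tfrac{b_tb_t^\top}{\mathbf 1^\top b_t+\gamma}\Bigr\|_2\,\|F(t)\|_2 .
\]
Lemma~\ref{lem:I_minus_phi} applies to any admissible \(b\), not only the optimal one. Its proof uses only that \(b\) is binary, so \(bb^\top/(\mathbf 1^\top b+\gamma)\) is a scaled orthogonal projector, or zero when \(b=0\). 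Hence each first factor is at most \(1\). Taking the maximum over the finite set \(\mathcal U_r^{N_c}\) gives \(\bar a_{\mathrm{all}}\le\prod_t\|F(t)\|_2\).

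The sufficiency statement follows at once: if the product is below \(1\), every admissible cycle product is a strict contraction. For the interpretive remark, the choice \(b_t=0\) for all \(t\) is admissible and yields the free product \(F(N_c-1)\cdots F(0)\). The same bound covers it, so the condition already certifies the uncontrolled dynamics.

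\emph{DeGroot lower bound.} Since \(0\in\mathcal U_r\) (the empty support satisfies \(\mathbf 1^\top b=0\le r\)), the maximum is at least the norm of the all-empty sequence:
\[
    \bar a_{\mathrm{all}}\ge\|W(N_c-1)\cdots W(0)\|_2 .
\]
A product of row-stochastic matrices is row-stochastic, so it maps \(\mathbf 1\) to \(\mathbf 1\). Therefore its induced 2-norm is at least \(\|\mathbf 1\|_2/\|\mathbf 1\|_2=1\), which gives \eqref{eq:no_uniform_certificate}. It follows that \(\bar a_{\mathrm{all}}<1\) is impossible.

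The only point needing care is that Lemma~\ref{lem:I_minus_phi} must hold for arbitrary admissible \(b\). I would restate its one-line argument for general binary \(b\) rather than citing it, because the lemma as written refers to \(b^\star(k)\).
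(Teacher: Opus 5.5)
Your proof is correct and follows essentially the same route as the paper: the scaled-projector bound \(\|I-bb^\top/(\mathbf 1^\top b+\gamma)\|_2\le1\) for every admissible \(b\), together with submultiplicativity, gives the upper bound, and the admissible all-empty sequence gives the DeGroot lower bound. The only cosmetic difference is that you bound \(\|W(N_c-1)\cdots W(0)\|_2\ge1\) by evaluating the product on \(\mathbf 1\) directly, whereas the paper passes through \(\|\cdot\|_2\ge\rho(\cdot)=1\); both arguments rest on the same eigenvector \(\mathbf 1\).
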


\begin{proof}
For any admissible \(b\), the same scaled-projector argument as in
Lemma~\ref{lem:I_minus_phi} gives
\[
    \left\|
        I-\frac{bb^\top}{\mathbf 1^\top b+\gamma}
    \right\|_2\le1.
\]
Therefore
\[
    \|A_t(b)\|_2\le\|F(t)\|_2,
\]
and submultiplicativity yields
\[
    \bar a_{\mathrm{all}}
    \le
    \prod_{t=0}^{N_c-1}\|F(t)\|_2.
\]
This proves the affine sufficient condition.

For DeGroot dynamics, \(b=0\) belongs to \(\mathcal U_r\).  Choosing
\(b_0=\cdots=b_{N_c-1}=0\) gives
\[
    A_{N_c-1}(0)\cdots A_0(0)
    =
    W(N_c-1)\cdots W(0).
\]
The product is row-stochastic and therefore has eigenvalue one.  Hence
\[
    \bar a_{\mathrm{all}}
    \ge
    \|W(N_c-1)\cdots W(0)\|_2
    \ge
    \rho\!\left(W(N_c-1)\cdots W(0)\right)
    =1,
\]
which proves \eqref{eq:no_uniform_certificate}.
\end{proof}

Proposition~\ref{prop:no_support_uniform_certificate} should therefore be
read as a robustness statement: a sufficiently contractive free affine
cycle remains contractive under every admissible sparse correction.  It
does not demonstrate controller-induced stabilization.  This distinction
is structural in the DeGroot case.  Since every row-stochastic \(W(t)\)
has spectral radius one,
\[
    \|W(t)\|_2\ge1,
    \qquad
    \beta\ge1,
    \qquad
    G_{N_c}(\beta)\ge N_c,
\]
and, by \eqref{eq:no_uniform_certificate},
\(\bar a_{\mathrm{all}}\ge1\).  Hence this control-independent norm test
cannot provide an a priori controller-induced contraction certificate for
the full DeGroot support family.  In an FJ model
\(F(k)=\Lambda W(k)\), the all-empty cycle need not carry a unit mode, so
the inheritance test may hold when the free FJ dynamics are already
contractive.  Controller-induced convergence is instead addressed by
Corollaries~\ref{cor:apriori_small_gain} and
\ref{cor:full_state_controller_certificate}.

\begin{theorem}[Cycle-instant ultimate bound]
\label{thm:cycle_uub}
Along every closed-loop trajectory satisfying
Assumption~\ref{ass:cycle_contraction},
\begin{equation}
    \|e(mN_c)\|_2
    \le
    a^m\|e(0)\|_2
    +
    \delta_g\frac{1-a^m}{1-a},
    \qquad m\ge0,
    \label{eq:cycle_bound_finite_m}
\end{equation}
where \(\delta_g\) is given by \eqref{eq:delta_g}. Consequently,
\begin{equation}
    \limsup_{m\to\infty}\|e(mN_c)\|_2
    \le
    \frac{\delta_g}{1-a}
    =
    \frac{\varepsilon_g G_{N_c}(\beta)}{1-a}.
    \label{eq:cycle_uub}
\end{equation}
\end{theorem}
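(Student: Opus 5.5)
The plan is to work entirely at cycle instants, using the cycle recursion \eqref{eq:cycle_error_recursion}, and combine Assumption~\ref{ass:cycle_contraction} with Lemma~\ref{lem:cycle_mismatch_bound}. Write \(E_m:=\|e(mN_c)\|_2\). Taking norms in \eqref{eq:cycle_error_recursion} and applying the triangle inequality gives
\[
    E_{m+1}\le\|\mathcal A_m e(mN_c)\|_2+\|\zeta_m\|_2 .
\]
The first term is at most \(aE_m\) by \eqref{eq:cycle_contraction_assumption}, including the case \(e(mN_c)=0\), where both sides vanish. The second term is at most \(\delta_g\) by \eqref{eq:delta_g}. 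This yields the scalar comparison recursion
\[
    E_{m+1}\le aE_m+\delta_g .
\]
The point to stress is that this step needs only the trajectory-wise bound on \(\mathcal A_m\) applied to the realized vector \(e(mN_c)\), never an induced norm. This is exactly why the assumption is stated along the state-dependent closed-loop trajectory.

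Next, the recursion is unrolled by induction on \(m\). The base case \(m=0\) reads \(E_0\le E_0\), and the empty geometric sum \((1-a^0)/(1-a)=0\) makes it trivially true. For the inductive step, suppose
\[
    E_m\le a^mE_0+\delta_g\frac{1-a^m}{1-a}.
\]
Since \(a\ge0\), the map \(t\mapsto at+\delta_g\) is nondecreasing. Therefore
\[
    E_{m+1}\le a^{m+1}E_0+\delta_g\Bigl(a\frac{1-a^m}{1-a}+1\Bigr)
    =a^{m+1}E_0+\delta_g\frac{1-a^{m+1}}{1-a},
\]
which is \eqref{eq:cycle_bound_finite_m}.

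For the limit superior, use \(0\le a<1\), so that \(a^m\to0\). The bound \(\delta_g(1-a^m)/(1-a)\le\delta_g/(1-a)\) holds for all \(m\). Passing to the limsup in \eqref{eq:cycle_bound_finite_m} then gives \eqref{eq:cycle_uub}, and substituting \(\delta_g=\varepsilon_g G_{N_c}(\beta)\) from Lemma~\ref{lem:cycle_mismatch_bound} gives the explicit form.

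There is no genuine obstacle here. The only points needing care are the use of \(a\ge0\) for monotonicity in the induction, and the fact that \(\delta_g\) is uniform in \(m\) despite the non-periodic, state-dependent supports. The latter is already guaranteed by Lemma~\ref{lem:cycle_mismatch_bound} through the nonexpansiveness bounds \eqref{eq:A_d_uniform_bounds}.
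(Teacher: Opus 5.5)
Your proof is correct and takes the same approach as the paper: the triangle inequality on the cycle recursion, combined with Assumption~\ref{ass:cycle_contraction} and Lemma~\ref{lem:cycle_mismatch_bound}, gives $\|e((m+1)N_c)\|_2\le a\|e(mN_c)\|_2+\delta_g$, which is then iterated and passed to the limit superior. Your explicit induction, using $a\ge0$ for monotonicity, just writes out the ``iterating'' step that the paper leaves implicit.
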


\begin{proof}
From \eqref{eq:cycle_error_recursion},
Assumption~\ref{ass:cycle_contraction}, and
Lemma~\ref{lem:cycle_mismatch_bound},
\[
\begin{aligned}
    \|e((m+1)N_c)\|_2
    &\le
    \|\mathcal A_m e(mN_c)\|_2+\|\mathcal \zeta_m\|_2\\
    &\le
    a\|e(mN_c)\|_2+\delta_g.
\end{aligned}
\]
Iterating this scalar recursion yields
\eqref{eq:cycle_bound_finite_m}; taking \(m\to\infty\) gives
\eqref{eq:cycle_uub}.
\end{proof}

The cycle bound also yields an all-time bound.

\begin{corollary}[All-time ultimate bound]
\label{cor:all_time_uub}
Let \(k=mN_c+\tau\) with
\(\tau\in\{0,\ldots,N_c-1\}\). Under
Assumption~\ref{ass:cycle_contraction},
\begin{equation}
\begin{aligned}
    \|e(mN_c+\tau)\|_2
    \le\;&
    \beta^\tau
    \left[
        a^m\|e(0)\|_2
        +
        \delta_g\frac{1-a^m}{1-a}
    \right]\\
    &+
    \varepsilon_g G_\tau(\beta).
\end{aligned}
\label{eq:all_time_bound}
\end{equation}
Hence, for each fixed phase \(\tau\),
\begin{equation}
    \limsup_{m\to\infty}
    \|e(mN_c+\tau)\|_2
    \le
    \beta^\tau\frac{\delta_g}{1-a}
    +
    \varepsilon_g G_\tau(\beta).
    \label{eq:all_time_uub_phase}
\end{equation}
\end{corollary}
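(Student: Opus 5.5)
The plan is to propagate the cycle-instant bound of Theorem~\ref{thm:cycle_uub} forward through at most \(N_c-1\) closed-loop steps within the current cycle. The only ingredients are the uniform bounds \eqref{eq:A_d_uniform_bounds} on \(A(k)\) and \(d(k)\). Write \(k=mN_c+\tau\) and unroll the affine error recursion \eqref{eq:error_affine} from the cycle start \(mN_c\):
\[
    e(mN_c+\tau)
    =
    \Psi(mN_c+\tau,mN_c)\,e(mN_c)
    +
    \sum_{i=mN_c}^{mN_c+\tau-1}\Psi(mN_c+\tau,i+1)\,d(i).
\]
For \(\tau=0\) the sum is empty and \(\Psi=I\). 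This is consistent with \(G_0(\beta)=0\) and \(\beta^0=1\), so the \(\tau=0\) case reduces exactly to \eqref{eq:cycle_bound_finite_m}.

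Next I bound each transition product. As in the proof of Lemma~\ref{lem:cycle_mismatch_bound}, a product of \(q\) factors \(A(\cdot)\) has norm at most \(\beta^q\), by submultiplicativity and \(\|A(k)\|_2\le\beta\). Two consequences follow:
\[
    \|\Psi(mN_c+\tau,mN_c)\|_2\le\beta^\tau,
\]
\[
    \Bigl\|\sum_{i}\Psi(mN_c+\tau,i+1)d(i)\Bigr\|_2
    \le
    \varepsilon_g\sum_{q=0}^{\tau-1}\beta^q
    =
    \varepsilon_g G_\tau(\beta),
\]
where the index \(q=mN_c+\tau-1-i\) runs over \(0,\ldots,\tau-1\). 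I then apply the triangle inequality and insert the bound \eqref{eq:cycle_bound_finite_m} for \(\|e(mN_c)\|_2\), which holds under Assumption~\ref{ass:cycle_contraction}. Since \(\beta^\tau\ge0\), multiplying that inequality by \(\beta^\tau\) preserves its direction, and this gives \eqref{eq:all_time_bound}.

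For the phase-wise limsup \eqref{eq:all_time_uub_phase}, fix \(\tau\) and let \(m\to\infty\). Because \(a\in[0,1)\), we have \(a^m\to0\) and \((1-a^m)/(1-a)\to 1/(1-a)\), while the term \(\varepsilon_g G_\tau(\beta)\) does not depend on \(m\).

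No step is a real obstacle. The only points needing care are the index bookkeeping in the within-cycle sum, which must produce exactly \(G_\tau\) and not \(G_{\tau+1}\), and the observation that the bounds \eqref{eq:A_d_uniform_bounds} hold for every \(k\) regardless of the state-dependent support. That observation is what lets us ignore the non-periodicity of \(A(k)\).
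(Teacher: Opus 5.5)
Your proposal is correct and follows the paper's own proof: unroll \eqref{eq:error_affine} from \(mN_c\), bound the transition products and the mismatch terms using \eqref{eq:A_d_uniform_bounds}, then insert the cycle-instant bound of Theorem~\ref{thm:cycle_uub}. You also work out explicitly the index bookkeeping that yields \(G_\tau(\beta)\) rather than \(G_{\tau+1}(\beta)\) and the \(\tau=0\) consistency check, both of which the paper leaves implicit.
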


\begin{proof}
Unrolling \eqref{eq:error_affine} from \(mN_c\) to \(mN_c+\tau\)
gives
\[
    e(mN_c+\tau)
    =
    \Psi(mN_c+\tau,mN_c)e(mN_c)
    +
    \sum_{i=mN_c}^{mN_c+\tau-1}
    \Psi(mN_c+\tau,i+1)d(i).
\]
Apply \eqref{eq:A_d_uniform_bounds} and
Theorem~\ref{thm:cycle_uub}.
\end{proof}

\subsection{Exact Tracking of Affine-Invariant Profiles}
\label{subsec:invariant_targets}

The ultimate bound collapses to zero when the desired profile is
invariant under every uncontrolled affine update.

\begin{theorem}[Exact tracking of an affine-invariant profile]
\label{thm:exact_invariant_tracking}
Suppose
\begin{equation}
    F(k)g+h(k)=g,
    \qquad
    k\ge0,
    \label{eq:target_invariance}
\end{equation}
and Assumption~\ref{ass:cycle_contraction} holds. Then
\begin{equation}
    \|e(mN_c)\|_2
    \le
    a^m\|e(0)\|_2,
    \label{eq:exact_cycle_decay}
\end{equation}
and, for \(k=mN_c+\tau\),
\begin{equation}
    \|e(k)\|_2
    \le
    \beta^\tau a^m\|e(0)\|_2.
    \label{eq:exact_all_time_decay}
\end{equation}
Consequently,
\begin{equation}
    \lim_{k\to\infty}x(k)=g.
    \label{eq:exact_tracking_limit}
\end{equation}
\end{theorem}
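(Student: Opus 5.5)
The plan is to specialize the cycle-level analysis to the case \(\eta(k)\equiv0\) and show that every mismatch term vanishes.

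First, under \eqref{eq:target_invariance} the affine mismatch \(F(k)g+h(k)-g\) is zero for all \(k\). By \eqref{eq:A_d_definitions} this gives \(d(k)=(I-\Phi(k))\cdot 0=0\), so the error recursion \eqref{eq:error_affine} becomes the homogeneous, state-dependent linear recursion \(e(k+1)=A(k)e(k)\). In particular \(\zeta_m=0\) in \eqref{eq:cycle_mismatch}, and \eqref{eq:cycle_error_recursion} reduces to \(e((m+1)N_c)=\mathcal A_m e(mN_c)\). The periodicity \eqref{eq:W_periodic} is still used to make \(\beta\) finite via \eqref{eq:beta_epsilon}. If the mismatch in \eqref{eq:beta_epsilon} is read only over one period, then \(\varepsilon_g=0\) and hence \(\delta_g=0\).

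Second, Assumption~\ref{ass:cycle_contraction} gives \(\|e((m+1)N_c)\|_2\le a\|e(mN_c)\|_2\), and induction yields \eqref{eq:exact_cycle_decay}. Equivalently, this is Theorem~\ref{thm:cycle_uub} with \(\delta_g=0\).

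Third, for \(k=mN_c+\tau\) with \(0\le\tau<N_c\), I will write \(e(k)=\Psi(mN_c+\tau,mN_c)e(mN_c)\). Since \(\|A(i)\|_2\le\beta\) by \eqref{eq:A_d_uniform_bounds} (from Lemma~\ref{lem:I_minus_phi}), submultiplicativity gives \(\|\Psi\|_2\le\beta^\tau\), and hence \eqref{eq:exact_all_time_decay}. This is also Corollary~\ref{cor:all_time_uub} with \(\varepsilon_g=\delta_g=0\), where \(G_\tau(\beta)\) multiplies zero.

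Finally, \(\beta^\tau\le\max(1,\beta)^{N_c-1}\) is a constant independent of \(m\), and \(m\to\infty\) as \(k\to\infty\) because \(a<1\). Therefore \(\|e(k)\|_2\to0\), that is, \(x(k)\to g\).

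No step is a real obstacle. The only care needed is to note that the state dependence of \(\Phi(k)\) does not matter, because \(d(k)\) vanishes for every support choice, including the empty one. The rest is bookkeeping with uniform constants.
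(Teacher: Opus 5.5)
Your proposal is correct and follows the same route as the paper: invariance makes \(d(k)\) vanish, hence \(\zeta_m=0\); Assumption~\ref{ass:cycle_contraction} is then iterated at cycle instants, and the inter-cycle transition \(\Psi(mN_c+\tau,mN_c)\) is bounded by \(\beta^\tau\). One wording slip: \(m\to\infty\) as \(k\to\infty\) because \(\tau<N_c\) is bounded, whereas \(a<1\) is what gives \(a^m\to0\).
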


\begin{proof}
By the definition \eqref{eq:A_d_definitions}
\[
d(k)=(I-\Phi(k))\bigl(F(k)g+h(k)-g\bigr),
\]
so condition \eqref{eq:target_invariance}
implies \(d(k)\equiv0\), and therefore
\(\zeta_m\equiv0\).
Thus the cycle recursion \eqref{eq:cycle_error_recursion}
reduces to
\[
    e((m+1)N_c)=\mathcal A_m e(mN_c).
\]
Assumption~\ref{ass:cycle_contraction} therefore gives
\[
    \|e((m+1)N_c)\|_2
    \le
    a\|e(mN_c)\|_2,
\]
which yields \eqref{eq:exact_cycle_decay} by iteration. Between cycle
instants,
\[
    e(mN_c+\tau)=\Psi(mN_c+\tau,mN_c)e(mN_c),
\]
and \(\|\Psi(mN_c+\tau,mN_c)\|_2\le\beta^\tau\), proving
\eqref{eq:exact_all_time_decay}. Since
\(\tau\le N_c-1\) and \(a^m\to0\), the full trajectory converges to
\(g\).
\end{proof}

\begin{corollary}[Consensus targets in the DeGroot and Friedkin--Johnsen special cases]
\label{cor:consensus_targets}
Consider a constant consensus target
\begin{equation}
    g=\alpha\mathbf 1,
    \qquad \alpha\in\mathbb R.
    \label{eq:consensus_target}
\end{equation}

For the DeGroot specialization \(F(k)=W(k)\), \(h(k)=0\), every
row-stochastic \(W(k)\) satisfies \(W(k)\mathbf1=\mathbf1\).
Hence every target \eqref{eq:consensus_target} satisfies
\eqref{eq:target_invariance}.  Under
Assumption~\ref{ass:cycle_contraction}, sparse OSAOC therefore yields
\[
    \lim_{k\to\infty}x_i(k)=\alpha,
    \qquad i=1,\ldots,n.
\]

For the Friedkin--Johnsen specialization
\[
    F(k)=\Lambda W(k),
    \qquad
    h(k)=(I-\Lambda)s,
\]
with row-stochastic \(W(k)\), the same consensus target is affine
invariant if and only if
\[
    (I-\Lambda)(s-\alpha\mathbf1)=0.
\]
Equivalently, every agent with \(\lambda_i<1\) must have innate opinion
\(s_i=\alpha\); agents with \(\lambda_i=1\) impose no restriction
through the prejudice term.  Under this condition and
Assumption~\ref{ass:cycle_contraction}, sparse OSAOC again tracks
\(\alpha\mathbf1\) exactly.
\end{corollary}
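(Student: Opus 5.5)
The plan is to reduce both specializations to verifying the affine-invariance condition \eqref{eq:target_invariance}, \(F(k)g+h(k)=g\) for all \(k\ge0\). Then Theorem~\ref{thm:exact_invariant_tracking}, together with Assumption~\ref{ass:cycle_contraction}, delivers \eqref{eq:exact_tracking_limit}, and componentwise this is exactly \(x_i(k)\to\alpha\). No new dynamical argument is needed: all the closed-loop work (state-dependent supports, cycle transitions \(\mathcal A_m\), vanishing \(\zeta_m\)) is already contained in that theorem. So the proof is purely an algebraic check of \(\eta(k)\equiv0\) in \eqref{eq:eta_mismatch} for \(g=\alpha\mathbf1\).

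For the DeGroot case, I will use row-stochasticity, \(W(k)\mathbf1=\mathbf1\). With \(h(k)=0\) this gives
\[
F(k)g+h(k)=\alpha W(k)\mathbf1=\alpha\mathbf1=g
\]
for every \(k\). Hence \eqref{eq:target_invariance} holds, and the theorem applies directly.

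For the FJ case, I will compute the mismatch explicitly:
\[
\Lambda W(k)\alpha\mathbf1+(I-\Lambda)s-\alpha\mathbf1
=\alpha\Lambda\mathbf1+(I-\Lambda)s-\alpha\mathbf1
=(I-\Lambda)(s-\alpha\mathbf1).
\]
The first equality again uses \(W(k)\mathbf1=\mathbf1\). This expression is independent of \(k\), so invariance for all \(k\) is equivalent to its vanishing, which gives the stated ``if and only if.''

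Because \(I-\Lambda\) is diagonal with entries \(1-\lambda_i\), the vector condition splits into the scalar conditions \((1-\lambda_i)(s_i-\alpha)=0\). Each of these holds exactly when \(\lambda_i=1\) or \(s_i=\alpha\), which is the componentwise reformulation in the statement. Under this condition and Assumption~\ref{ass:cycle_contraction}, Theorem~\ref{thm:exact_invariant_tracking} gives exact tracking.

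I do not expect a genuine obstacle. The only point needing care is the ``only if'' direction in the FJ case. It relies on the mismatch being \(k\)-independent, which holds because the row-stochastic \(W(k)\) cancels, so the condition cannot depend on the switching sequence. It also relies on invariance being required at every \(k\), including \(k=0\). This direction concerns only the invariance characterization, not the tracking conclusion, which remains conditional on Assumption~\ref{ass:cycle_contraction}.
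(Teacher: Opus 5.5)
Your proposal is correct and follows essentially the same route as the paper: you verify affine invariance via $W(k)\mathbf1=\mathbf1$, reduce the FJ mismatch to $(I-\Lambda)(s-\alpha\mathbf1)$, and invoke Theorem~\ref{thm:exact_invariant_tracking} under Assumption~\ref{ass:cycle_contraction}. Your remarks on the $k$-independence of the mismatch (needed for the ``only if'') and on the componentwise splitting make explicit what the paper's proof leaves implicit.
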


\begin{proof}
The DeGroot statement follows immediately from row-stochasticity.  For
the FJ case,
\[
\begin{aligned}
    F(k)(\alpha\mathbf1)+h(k)-\alpha\mathbf1
    &=\Lambda W(k)(\alpha\mathbf1)
      +(I-\Lambda)s-\alpha\mathbf1\\
    &=(I-\Lambda)(s-\alpha\mathbf1),
\end{aligned}
\]
so \eqref{eq:target_invariance} is equivalent to the stated condition.
\end{proof}

More generally, a nonconsensus FJ target \(g\) is exactly invariant only
when
\[
    \Lambda W(k)g+(I-\Lambda)s=g
    \qquad\text{for every }k.
\]
If this identity fails, Theorem~\ref{thm:cycle_uub} provides an ultimate
tracking bound rather than an exact-tracking conclusion.  This is the
natural effect of prejudices that are incompatible with the desired
profile \cite{ParsegovEtAl2017,ProskurnikovEtAl2017TimeVarying}.

\begin{remark}[Why no separate non-target connectivity theorem is needed]
\label{rem:no_extra_connectivity}
The error \(e(k)=x(k)-g\) in
Theorem~\ref{thm:exact_invariant_tracking} is the full
\(n\)-dimensional network error. Thus the cycle-contraction assumption
already concerns the realized full-state error, including target and
non-target coordinates. Once \eqref{eq:exact_cycle_decay} holds,
bounded inter-cycle transition products are sufficient to obtain
\eqref{eq:exact_all_time_decay}; an additional joint-connectivity
assumption is not used. A genuinely graph-theoretic propagation theorem
would require a weaker hypothesis that establishes contraction only on
the actuated or target subsystem and then uses connectivity to
propagate that convergence to the remaining nodes.
\end{remark}

\section{Competitive Sparse OSAOC Game}
\label{sec:multiplayer}

We now consider \(M\ge2\) external players acting directly on the same
time-varying affine opinion network:
\begin{equation}
    x(k+1)=F(k)x(k)+h(k)+\sum_{m=1}^{M} b_m(k)u_m(k).
    \label{eq:multiplayer_dynamics}
\end{equation}
Player \(m\) has target set \(\mathcal T_m\), goal \(g_m\), cardinality
budget \(r_m\), and penalty \(\gamma_m>0\), with
\[
    \mathcal U_m:=\{b_m\in\{0,1\}^n:
       \operatorname{supp}(b_m)\subseteq\mathcal T_m,\;
       \mathbf1^\top b_m\le r_m\}.
\]
Target sets may overlap.  With the free response
\(z=F(k)x(k)+h(k)\), let \(P_m\) denote the diagonal projector onto
\(\mathcal T_m\).  The pure strategy of player \(m\) is
\(a_m=(b_m,u_m)\in\mathcal A_m:=\mathcal U_m\times\mathbb R\), and
\begin{equation}
    J_m(a;z)=
    \left\|P_m\!\left(z+\sum_{\ell=1}^{M}b_\ell u_\ell-g_m\right)\right\|_2^2
    +\gamma_m u_m^2.
    \label{eq:multiplayer_cost}
\end{equation}
We call this a \emph{binary--continuous stage game}: the support decision
is discrete and the scalar action is continuous; no randomized mixed
strategies are involved.  The best response is generally a correspondence,
\[
    \operatorname{BR}_m(a_{-m};z)
    :=\arg\min_{(b_m,u_m)\in\mathcal A_m}J_m((b_m,u_m),a_{-m};z),
\]
because support ties may occur.

\subsection{Exact Sparse Best Responses}

For fixed opponents, define
\[
    \widetilde c_m:=P_m\!\left(g_m-z-\sum_{\ell\ne m}b_\ell u_\ell\right).
\]
Since \(P_m b_m=b_m\), player \(m\)'s problem has exactly the
single-player form
\[
    J_m=\|\widetilde c_m-b_m u_m\|_2^2+\gamma_m u_m^2.
\]
Hence, for fixed \(b_m\),
\begin{equation}
    u_m^\star
    =\frac{b_m^\top\widetilde c_m}{\mathbf1^\top b_m+\gamma_m},
    \label{eq:multi_fixed_support_action}
\end{equation}
and support selection reduces to
\[
    \max_{b_m\in\mathcal U_m}
    \frac{(b_m^\top\widetilde c_m)^2}{\mathbf1^\top b_m+\gamma_m}.
\]

\begin{proposition}[Exact sparse best response]
\label{prop:exact_sparse_BR}
For fixed opponent actions, a global best response of player \(m\) is
obtained by applying Theorem~\ref{thm:sorting} to the components of
\(\widetilde c_m\) on \(\mathcal T_m\).  Its complexity is
\(O(|\mathcal T_m|\log|\mathcal T_m|)\) by a full sort, or \(O(|\mathcal T_m|+r_m\log r_m)\) by partial extreme selection.
\end{proposition}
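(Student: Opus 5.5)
The plan is to reduce player \(m\)'s best-response problem, with opponents frozen, to exactly the single-player problem of Section~\ref{sec:single_player}, and then invoke Theorem~\ref{thm:sorting}.

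\emph{Step 1: reduction of the cost.} Fix \(a_{-m}\). Every \(b_m\in\mathcal U_m\) is supported in \(\mathcal T_m\), so \(P_m b_m=b_m\). The cost \eqref{eq:multiplayer_cost} therefore becomes
\[
    J_m=\|\widetilde c_m-b_mu_m\|_2^2+\gamma_m u_m^2
    =\sum_{i\in\mathcal T_m}\bigl((\widetilde c_m)_i-(b_m)_iu_m\bigr)^2+\gamma_m u_m^2 ,
\]
because \(\widetilde c_m\) vanishes off \(\mathcal T_m\). This is \eqref{eq:cost_residual_form} with \(c\) replaced by \(\widetilde c_m\), \(\mathcal T\) by \(\mathcal T_m\), \(r\) by \(r_m\), and \(\gamma\) by \(\gamma_m>0\). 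The opponents' actions and the frozen state enter only through the vector \(\widetilde c_m\), which is a constant for this problem.

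\emph{Step 2: elimination of the scalar action.} For each fixed \(b_m\), strict convexity in \(u_m\) gives the unique minimizer \eqref{eq:multi_fixed_support_action}. The reduced cost is then \(\|\widetilde c_m\|^2-V_m(S)\), where
\[
    V_m(S)=\frac{\bigl(\sum_{i\in S}(\widetilde c_m)_i\bigr)^2}{|S|+\gamma_m},
\]
exactly as in \eqref{eq:reduced_cost}. Since \(\mathcal U_m\) is finite, a global minimizer over \(\mathcal A_m\) exists. It is obtained by choosing \(S\) to maximize \(V_m\) over \(|S|\le r_m\) and then setting \(u_m\) by \eqref{eq:multi_fixed_support_action}. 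Every such pair lies in \(\operatorname{BR}_m\). If ties occur, any maximizing support gives an element of the correspondence.

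\emph{Step 3: support selection and complexity.} Theorem~\ref{thm:sorting}, applied verbatim, shows that a maximizer lies among the prefix and suffix supports of the sorted components of \(\widetilde c_m\) on \(\mathcal T_m\). Its complexity statements then give \(O(|\mathcal T_m|\log|\mathcal T_m|)\) with a full sort, or \(O(|\mathcal T_m|+r_m\log r_m)\) with partial selection.

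\emph{Main obstacle.} The only point needing care is the cost of forming \(\widetilde c_m\) itself. It requires \(\sum_{\ell\ne m}b_\ell u_\ell\) restricted to \(\mathcal T_m\), and \(z\) is treated as given at the frozen state. I would state that the claimed complexity counts operations after \(\widetilde c_m\) is available. Alternatively, it can be formed in \(O(|\mathcal T_m|)\) by maintaining the aggregate opponent input vector, which is linear and so does not change the bounds.
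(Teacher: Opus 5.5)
Your proposal is correct and follows essentially the same route as the paper: the identity \(P_m b_m=b_m\) puts player \(m\)'s cost in the single-player form \(\|\widetilde c_m-b_mu_m\|_2^2+\gamma_m u_m^2\), the scalar action is eliminated via \eqref{eq:multi_fixed_support_action}, and Theorem~\ref{thm:sorting} is then applied to the components of \(\widetilde c_m\) on \(\mathcal T_m\). Your remark that the stated complexity counts operations once \(\widetilde c_m\) is available, or that forming it adds only linear cost, is a sensible clarification that the paper leaves implicit.
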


\subsection{Binary--Continuous Potential and Equilibrium Existence}

For \(b=(b_1,\ldots,b_M)\), define
\[
    B=[\,b_1\ \cdots\ b_M\,],\;
    u=[\,u_1\ \cdots\ u_M\,]^\top,\;
    \Gamma=\operatorname{diag}(\gamma_1,\ldots,\gamma_M).
\]
The fixed-direction continuous game and its quadratic potential are
analyzed in detail in \cite{GentilBhaya2026CompetitiveFJ}.  The following
result is the support-changing extension needed here.

\begin{theorem}[Exact potential for the sparse stage game]
\label{thm:mixed_exact_potential}
For every fixed \(z\), the binary--continuous game is an exact potential
game with
\begin{equation}
    \mathcal P(b,u;z)
    =\|z+Bu\|_2^2
    -2\sum_{m=1}^{M}u_m b_m^\top g_m
    +u^\top\Gamma u.
    \label{eq:mixed_potential}
\end{equation}
\end{theorem}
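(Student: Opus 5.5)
The plan is to verify the exact-potential identity directly. For every player \(m\), every pair of own strategies \(a_m=(b_m,u_m)\) and \(a_m'=(b_m',u_m')\) in \(\mathcal A_m\), and every fixed opponent profile \(a_{-m}\), I will show
\[
J_m(a_m,a_{-m};z)-J_m(a_m',a_{-m};z)=\mathcal P(a_m,a_{-m};z)-\mathcal P(a_m',a_{-m};z).
\]
The natural route is to show that \(J_m-\mathcal P\) depends only on \(a_{-m}\) (and \(z\)), not on \(a_m\). Since deviations change \(b_m\) and \(u_m\) simultaneously, and \(b_m\) is binary, I will treat the player's contribution through the vector \(v_m:=b_mu_m\in\mathbb R^n\) together with the scalar \(u_m\). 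Then \(Bu=v_m+w_{-m}\) with \(w_{-m}:=\sum_{\ell\ne m}b_\ell u_\ell\) fixed.

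First, expand the potential. Writing \(\|z+Bu\|_2^2=\|z+w_{-m}\|_2^2+2v_m^\top(z+w_{-m})+\|v_m\|_2^2\), the terms of \(\mathcal P\) that involve player \(m\) are
\[
2v_m^\top(z+w_{-m})+\|v_m\|_2^2-2v_m^\top g_m+\gamma_mu_m^2 .
\]
Here I use \(u_mb_m^\top g_m=v_m^\top g_m\). The remaining terms, namely \(\|z+w_{-m}\|^2\), \(-2\sum_{\ell\ne m}u_\ell b_\ell^\top g_\ell\), and \(\sum_{\ell\ne m}\gamma_\ell u_\ell^2\), are independent of \(a_m\).

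Next, expand the cost. Using \(P_mb_m=b_m\) (because \(\operatorname{supp}(b_m)\subseteq\mathcal T_m\)), we get \(P_mv_m=v_m\) and \(\|P_mv_m\|^2=\|v_m\|^2\). Hence
\[
J_m=\|P_m(z+w_{-m}-g_m)\|^2+2v_m^\top(z+w_{-m}-g_m)+\|v_m\|^2+\gamma_mu_m^2 ,
\]
where \(v_m^\top P_m(\cdot)=v_m^\top(\cdot)\) because \(v_m=P_mv_m\) and \(P_m\) is symmetric. The \(a_m\)-dependent terms coincide exactly with those of \(\mathcal P\). Therefore
\[
J_m-\mathcal P=\|P_m(z+w_{-m}-g_m)\|^2-\|z+w_{-m}\|^2+2\sum_{\ell\ne m}u_\ell b_\ell^\top g_\ell-\sum_{\ell\ne m}\gamma_\ell u_\ell^2 ,
\]
which is a function of \(a_{-m}\) and \(z\) alone. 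This gives the exact-potential property for joint support-and-amplitude deviations.

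The main obstacle is conceptual rather than computational. The identity must hold for deviations that change the support \(b_m\), not just \(u_m\). The key step is therefore the reduction \(P_mb_m=b_m\): it is exactly what lets the projected cost and the unprojected potential share the same cross term \(2v_m^\top(z+w_{-m})\) and the same quadratic term \(\|v_m\|^2=u_m^2\,\mathbf1^\top b_m\). Overlapping target sets cause no difficulty, because the opponents' influence enters only through the fixed vector \(w_{-m}\).
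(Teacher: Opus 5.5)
Your proof is correct and is essentially the paper's argument. The paper expands the increments of \(J_m\) and \(\mathcal P\) under a unilateral deviation \(\delta=b_m'u_m'-b_mu_m\) and uses \(P_m\delta=\delta\), whereas you show the equivalent statement that \(J_m-\mathcal P\) is independent of \(a_m\), using \(P_m b_m u_m=b_m u_m\); both rest on the same support condition \(\operatorname{supp}(b_m)\subseteq\mathcal T_m\).
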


\begin{proof}
Let \(y=z+\sum_\ell b_\ell u_\ell\) and consider a unilateral change of
player \(m\), with \(\delta=b'_m u'_m-b_m u_m\).  Both supports lie in
\(\mathcal T_m\), so \(P_m\delta=\delta\).  Direct expansion gives
\[
    \Delta J_m
    =2\delta^\top(y-g_m)+\|\delta\|_2^2
      +\gamma_m[(u'_m)^2-u_m^2]
    =\Delta\mathcal P,
\]
which proves exact potentiality for simultaneous support-and-action
choices.
\end{proof}

A useful structural consequence is that the projectors \(P_m\) do not
appear explicitly in \eqref{eq:mixed_potential}. Target-set overlap changes
the admissible support families, but once a unilateral injection \(\delta\)
is supported on \(\mathcal T_m\), the identity \(P_m\delta=\delta\) removes
\(P_m\) from the potential increment. This is why arbitrary target-set
overlap is compatible with exact potentiality.

\begin{definition}[Instantaneous best-response equilibrium]
\label{def:IBRE}
At network time \(k\), a profile \(a^\star(k)\) is an
\emph{instantaneous best-response equilibrium} (IBRE) if it is a
pure-strategy Nash equilibrium of the frozen stage game, i.e.,
\[
    J_m(a_m^\star,a_{-m}^\star;z(k))
    \le J_m(a_m,a_{-m}^\star;z(k))
    \quad\forall a_m\in\mathcal A_m,\ \forall m.
\]
\end{definition}

\begin{theorem}[Existence of a pure-strategy IBRE]
\label{thm:IBRE_existence}
For every network time and state, the competitive sparse OSAOC stage game
admits at least one pure-strategy IBRE.
\end{theorem}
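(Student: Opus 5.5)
The plan is to use the exact potential of Theorem~\ref{thm:mixed_exact_potential}. A global minimizer of \(\mathcal P(\cdot;z(k))\) over the joint strategy set \(\mathcal A_1\times\cdots\times\mathcal A_M\) is a pure Nash equilibrium. The reason is that any profitable unilateral deviation of player \(m\) would lower \(J_m\) and hence, by exactness, lower \(\mathcal P\) by the same amount, which is impossible at a global minimizer. So the proof reduces to showing that \(\mathcal P\) attains its minimum. The joint set is not compact, because each \(u_m\in\mathbb R\), so this step is where the work lies.

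First, split by support. The support profile \(b=(b_1,\ldots,b_M)\) ranges over the finite set \(\mathcal U_1\times\cdots\times\mathcal U_M\). For each fixed \(b\), the matrix \(B\) is fixed, and
\[
\mathcal P(b,u;z)=u^\top\bigl(B^\top B+\Gamma\bigr)u+2u^\top B^\top z-2\sum_{m}u_m b_m^\top g_m+\|z\|_2^2
\]
is a quadratic in \(u\). Its Hessian \(2(B^\top B+\Gamma)\) is positive definite, since \(\Gamma\succ0\) because every \(\gamma_m>0\). The function is therefore strictly convex and coercive in \(u\), and it has a unique minimizer \(u^\star(b)=(B^\top B+\Gamma)^{-1}B^\top\,(\text{linear term})\), which I would write explicitly. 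Next, take a support profile \(b^\star\) that minimizes the finitely many values \(\mathcal P(b,u^\star(b);z)\). The pair \((b^\star,u^\star(b^\star))\) is then a global minimizer of \(\mathcal P\) over the whole binary--continuous strategy space.

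Finally, verify the equilibrium property. Suppose player \(m\) deviates to some \((b_m',u_m')\in\mathcal A_m\). Theorem~\ref{thm:mixed_exact_potential} gives \(J_m(a_m',a_{-m}^\star;z)-J_m(a^\star;z)=\mathcal P(a_m',a_{-m}^\star;z)-\mathcal P(a^\star;z)\ge0\), so the condition of Definition~\ref{def:IBRE} holds for every player. This works for arbitrary state and time, because only \(z(k)\) enters.

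The main obstacle is the attainment argument on the unbounded continuous component, and \(\gamma_m>0\) is what resolves it through coercivity. A secondary point I should remark on is that exactness must hold for deviations that change the support as well as the action. Theorem~\ref{thm:mixed_exact_potential} already covers this, because \(\delta\) in its proof is an arbitrary support-and-action change. If desired, I could also note that each player's component of the minimizer coincides with the sorting best response of Proposition~\ref{prop:exact_sparse_BR}.
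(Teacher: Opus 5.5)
Your proposal is correct and follows essentially the same route as the paper. Positive definiteness of $B^\top B+\Gamma$ gives a unique minimizer of the potential for each of the finitely many support profiles; the best of these is a global minimizer of $\mathcal P$; and exact potentiality, including for support-changing deviations, makes it a pure-strategy IBRE. One small slip, which does not affect the argument: the fixed-support minimizer is $u^\star(b)=(B^\top B+\Gamma)^{-1}(v-B^\top z)$ with $v_m=b_m^\top g_m$, and this is not in general of the form $(B^\top B+\Gamma)^{-1}B^\top(\cdot)$ when supports overlap and goals differ.
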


\begin{proof}
For every fixed support profile, \(B^\top B+\Gamma\succ0\), so the
continuous potential is coercive and has a unique minimizer; this is the
fixed-direction SPD structure of \cite{GentilBhaya2026CompetitiveFJ}.
There are finitely many support profiles, hence \(\mathcal P\) has a global
minimum over the full binary--continuous pure-strategy space.  Exact
potentiality makes every such minimizer a Nash equilibrium.
\end{proof}

\subsection{Fixed Supports and Implementation}
\label{subsec:fixed_support_subgame}

Fix a support profile and define
\begin{equation}
    G:=B^\top B,\qquad
    v_m:=b_m^\top g_m,\qquad
    S:=G+\Gamma\succ0.
    \label{eq:S_definition}
\end{equation}
The continuous subgame is exactly the fixed-influence game studied in
\cite{GentilBhaya2026CompetitiveFJ}.  In the present notation we have the
following immediate specialization.

\begin{proposition}[Unique fixed-support Nash equilibrium]
\label{prop:fixed_support_NE}
\begin{equation}
    u^{\mathrm{NE}}(b;z)=S^{-1}(v-B^\top z).
    \label{eq:fixed_support_NE}
\end{equation}
\end{proposition}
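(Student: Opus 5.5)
With the support profile \(b\) held fixed, the stage game is a purely continuous game in \(u\in\mathbb R^M\). The plan is to obtain the Nash equilibrium as the unique stationary point of the potential \(\mathcal P(b,\cdot\,;z)\) from Theorem~\ref{thm:mixed_exact_potential}. As a check, the same point will be recovered directly from the first-order conditions of the players' individual problems.

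\textbf{Step 1: expand the potential.} For fixed \(b\), write
\[
    \mathcal P(b,u;z)=\|z\|_2^2+2u^\top B^\top z+u^\top B^\top Bu-2u^\top v+u^\top\Gamma u
    =\|z\|_2^2+u^\top S u-2u^\top(v-B^\top z).
\]
Here \(v=(v_1,\ldots,v_M)^\top\) with \(v_m=b_m^\top g_m\).

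\textbf{Step 2: positive definiteness of \(S\).} The matrix \(G=B^\top B\) is positive semidefinite, and \(\Gamma\) is diagonal with positive entries. Hence \(S=G+\Gamma\succ0\), and \(\mathcal P(b,\cdot\,;z)\) is a strictly convex, coercive quadratic. Its unique minimizer is \(S^{-1}(v-B^\top z)\).

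\textbf{Step 3: the minimizer is a Nash equilibrium.} Exact potentiality restricted to unilateral deviations in \(u_m\) with \(b_m\) fixed shows that any minimizer of \(\mathcal P\) is a Nash equilibrium of the fixed-support subgame.

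\textbf{Step 4: uniqueness.} This is the only step that needs more than potentiality, since a minimizer of the potential need not be the only equilibrium. I will argue it from the players' first-order conditions. Each \(J_m\) is strictly convex in \(u_m\), because its coefficient of \(u_m^2\) is \(\|P_mb_m\|_2^2+\gamma_m>0\). So a profile \(u\) is an equilibrium if and only if \(\partial J_m/\partial u_m=0\) for every \(m\).

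Using \(P_mb_m=b_m\), this derivative is
\[
    2b_m^\top\Bigl(z+\sum_\ell b_\ell u_\ell-g_m\Bigr)+2\gamma_m u_m .
\]
Stacking the \(M\) conditions gives \(B^\top z+Gu-v+\Gamma u=0\), that is, \(Su=v-B^\top z\). By Step 2 this linear system has exactly one solution, which is \eqref{eq:fixed_support_NE}.

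The only subtlety to watch is the \(m\)th row of the stacked system. It involves \(b_m^\top g_m\), not \(b_m^\top P_m g_m\); these coincide because \(\operatorname{supp}(b_m)\subseteq\mathcal T_m\). The same relation gives \(b_m^\top b_\ell\), which is the \((m,\ell)\) entry of \(G\). With that identification the assembly is routine.
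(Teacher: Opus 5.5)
Your proposal is correct and follows essentially the paper's route. The paper gives no separate proof and treats the result as an immediate specialization of the fixed-direction SPD structure from the companion work: the potential has Hessian $S=G+\Gamma\succ0$, equivalently the stacked first-order system is $Su=v-B^\top z$. Your Step~4 is a worthwhile addition, because it supplies the uniqueness-of-equilibrium argument that the paper's remark about the potential having a unique minimizer does not provide on its own.
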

A fixed-support equilibrium is an IBRE only if no player can also improve
by changing its support.

For completeness, the fixed-support iterative facts needed to interpret
implementations are recalled from \cite{GentilBhaya2026CompetitiveFJ}.
Writing \(S=D+L+L^\top\), sequential scalar best responses are the
Gauss--Seidel iteration and converge for every initialization.  Parallel
scalar best responses are the Jacobi iteration and converge if and only if
\[
    2D-S\succ0.
\]
For two players this condition always holds when \(\Gamma\succ0\):
writing
\[
S=\begin{bmatrix}s_{11}&s_{12}\\s_{12}&s_{22}\end{bmatrix}\succ0,
\qquad
2D-S=\begin{bmatrix}s_{11}&-s_{12}\\-s_{12}&s_{22}\end{bmatrix},
\]
the two matrices have the same positive principal minors. No analogous
automatic implication holds for three or more players, for which the
Jacobi condition must be checked. These are \emph{frozen-support,
frozen-state} statements; they do not establish convergence of simultaneous
support-changing best responses or stability when the network state
evolves after one sweep.

When supports may change, a sequential sweep simply applies
Proposition~\ref{prop:exact_sparse_BR} player by player using the newest
available opponents' actions.  Each unilateral update cannot increase the potential, but the strategy
profile obtained after one sweep need not be an IBRE.  Likewise, exact
potentiality alone does not guarantee convergence of parallel
support-and-action updates.  The numerical section therefore treats
zero-opponent myopic control, one sequential sparse best-response sweep,
and exact IBRE computation as distinct decision rules.

\section{Competitive Welfare Loss}
\label{sec:price_competition}

Competitive loss is evaluated at the \emph{same} frozen free-response
state \(z\), with the same goals and admissible support families.  This
avoids conflating noncooperative inefficiency with different closed-loop
trajectories or with residual error that is unavoidable under sparse
actuation.  The fixed-influence full-state counterpart is analyzed in
\cite{GentilBhaya2026CompetitiveFJ}; here target-restricted costs and
strategic support selection require two additional layers.

\subsection{Same-Support Action Loss}

Fix \(b=(b_1,\ldots,b_M)\), set \(B=[\,b_1\ \cdots\ b_M\,]\), and let
\(P_m\) project onto \(\mathcal T_m\).  Define
\[
    D_{\mathcal T}:=\sum_{m=1}^{M}P_m,\qquad
    g_\Sigma:=\sum_{m=1}^{M}P_m g_m,
\]
and the total one-step social cost
\[
    J_\Sigma(u;b,z):=\sum_{m=1}^{M}
       \bigl[\|P_m(z+Bu-g_m)\|_2^2+\gamma_m u_m^2\bigr].
\]
Here ``social cost'' means the sum of the players' stated one-step
objectives, including their quadratic effort penalties. Equivalently, the
centralized benchmark uses the same effort weights \(\gamma_m\) as the
players. A planner with different effort valuations can replace \(\Gamma\)
by its own positive diagonal weight matrix in the centralized problem; we
retain the aggregate-player-cost convention so that the welfare gap
isolates noncooperative play under a common accounting of tracking error
and effort.

Direct expansion gives
\begin{align*}
     J_\Sigma(u;b,z)
     & =u^\top H(b)u
     +2u^\top B^\top(D_{\mathcal T}z-g_\Sigma)
     +c_\Sigma(z),\\
    H(b) &:=B^\top D_{\mathcal T}B+\Gamma\succ0.
\end{align*}
Consequently the unique same-support social action is
\begin{equation}
    u^{\mathrm{SO}}(b;z)
    =H(b)^{-1}B^\top(g_\Sigma-D_{\mathcal T}z).
    \label{eq:same_support_social_optimum}
\end{equation}
The Nash action is \eqref{eq:fixed_support_NE}.  With
\(e_{\mathrm{comp}}:=u^{\mathrm{NE}}-u^{\mathrm{SO}}\), completing the
square yields
\begin{equation}
    \Delta_{\mathrm{act}}(b;z)
    :=J_\Sigma(u^{\mathrm{NE}};b,z)-J_\Sigma(u^{\mathrm{SO}};b,z)
    =e_{\mathrm{comp}}^\top H(b)e_{\mathrm{comp}}\ge0.
    \label{eq:additive_action_loss}
\end{equation}
This is the target-restricted analogue of the same-state fixed-direction
welfare identity in \cite{GentilBhaya2026CompetitiveFJ}.  When the social
optimum is positive, the corresponding multiplicative ratio is
\(1+\Delta_{\mathrm{act}}/J_\Sigma(u^{\mathrm{SO}};b,z)\); the additive
quantity is used below because it remains well defined when the optimal
cost is zero.

\subsection{Support Geometry and Cross-Target Exposure}

The Gram entry \(G_{m\ell}=b_m^\top b_\ell\) measures co-location of
active supports but is not a complete welfare-externality measure. To make
the direction of exposure explicit, define
\begin{equation}
    C_{m\to\ell}(b):=b_m^\top P_\ell b_m
    =|\operatorname{supp}(b_m)\cap\mathcal T_\ell|,
    \label{eq:target_exposure_matrix}
\end{equation}
where the arrow reads ``action of player \(m\) exposed to player
\(\ell\)'s target set.'' Thus \(G_{m\ell}=0\) may coexist with
\(C_{m\to\ell}>0\): two players can act on different nodes even though
one player's active node enters the other player's objective.

If
\begin{equation}
    P_\ell b_m=0\qquad(m\ne\ell),
    \label{eq:externality_free_condition}
\end{equation}
then \(D_{\mathcal T}B=B\) and \(B^\top g_\Sigma=v\), so
\(H=S\), \(u^{\mathrm{SO}}=u^{\mathrm{NE}}\), and
\(\Delta_{\mathrm{act}}=0\) for every \(z\).  Disjoint target sets are a
sufficient special case.  This is a one-step statement only: an action may
still propagate through later affine updates and affect another player's
future target state.

\subsection{Full Sparse Competitive Loss}
\label{subsec:full_sparse_PoC}

Since the support is strategic, the action loss does not capture all
noncooperative inefficiency.  For each admissible support profile define
\[
    \widehat J_\Sigma(b;z):=
    J_\Sigma(u^{\mathrm{SO}}(b;z);b,z),
\]
and choose a centralized sparse support
\[
    b^{\mathrm{SO}}\in\arg\min_b\widehat J_\Sigma(b;z),
\]
where each \(b_m\in\mathcal U_m\).  Let
\(a^{\mathrm{NE}}=(b^{\mathrm{NE}},u^{\mathrm{NE}})\) be any IBRE.  The
full additive sparse competitive loss is
\begin{equation}
    \Delta_{\mathrm{sp}}
    :=J_\Sigma(u^{\mathrm{NE}};b^{\mathrm{NE}},z)
      -\widehat J_\Sigma(b^{\mathrm{SO}};z)\ge0.
    \label{eq:full_sparse_loss}
\end{equation}
If several IBREs exist, this value may depend on equilibrium selection.

\begin{theorem}[Action--support decomposition of competitive loss]
\label{thm:action_support_decomposition}
For every IBRE,
\begin{equation}
    \Delta_{\mathrm{sp}}
    =\Delta_{\mathrm{act}}+\Delta_{\mathrm{sup}},
    \qquad
    \Delta_{\mathrm{sup}}
    :=\widehat J_\Sigma(b^{\mathrm{NE}};z)
      -\widehat J_\Sigma(b^{\mathrm{SO}};z)\ge0,
    \label{eq:loss_decomposition}
\end{equation}
where
\(\Delta_{\mathrm{act}}=\Delta_{\mathrm{act}}(b^{\mathrm{NE}};z)\ge0\).
\end{theorem}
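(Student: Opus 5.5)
The decomposition is an add-and-subtract identity. The two nonnegativity claims then follow from the two optimization benchmarks. I would insert and remove the intermediate quantity \(\widehat J_\Sigma(b^{\mathrm{NE}};z)=J_\Sigma(u^{\mathrm{SO}}(b^{\mathrm{NE}};z);b^{\mathrm{NE}},z)\), which is the best social cost achievable when the supports are frozen at the equilibrium supports. Writing
\[
\Delta_{\mathrm{sp}}
=\bigl[J_\Sigma(u^{\mathrm{NE}};b^{\mathrm{NE}},z)-\widehat J_\Sigma(b^{\mathrm{NE}};z)\bigr]
+\bigl[\widehat J_\Sigma(b^{\mathrm{NE}};z)-\widehat J_\Sigma(b^{\mathrm{SO}};z)\bigr]
\]
gives the claimed sum, provided the first bracket is identified with \(\Delta_{\mathrm{act}}(b^{\mathrm{NE}};z)\).

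That identification is the one step that needs care. The definition \eqref{eq:additive_action_loss} uses the fixed-support Nash action \eqref{eq:fixed_support_NE}. The IBRE action \(u^{\mathrm{NE}}\) must coincide with \(u^{\mathrm{NE}}(b^{\mathrm{NE}};z)\). I would argue as follows. At an IBRE, no player can profitably deviate in any \((b_m,u_m)\), and in particular not in \(u_m\) alone with \(b_m^{\mathrm{NE}}\) held fixed. Hence \(u^{\mathrm{NE}}\) is a Nash equilibrium of the fixed-support continuous subgame at \(b^{\mathrm{NE}}\). By Proposition~\ref{prop:fixed_support_NE}, that equilibrium is unique, since its first-order conditions read \(Su=v-B^\top z\) with \(S\succ0\). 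Therefore \(u^{\mathrm{NE}}=S^{-1}(v-B^\top z)\) for \(B=B^{\mathrm{NE}}\), and the first bracket is exactly \(\Delta_{\mathrm{act}}(b^{\mathrm{NE}};z)\).

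For the sign claims, \(\Delta_{\mathrm{act}}\ge0\) follows either from the completed-square form \(e_{\mathrm{comp}}^\top H(b)e_{\mathrm{comp}}\) with \(H(b)\succ0\), or directly from the fact that \(u^{\mathrm{SO}}(b^{\mathrm{NE}};z)\) minimizes \(J_\Sigma(\cdot;b^{\mathrm{NE}},z)\). The bound \(\Delta_{\mathrm{sup}}\ge0\) follows because \(b^{\mathrm{NE}}\) is an admissible support profile, with each \(b_m^{\mathrm{NE}}\in\mathcal U_m\), while \(b^{\mathrm{SO}}\) minimizes \(\widehat J_\Sigma(\cdot;z)\) over the same finite family. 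The minimizer \(b^{\mathrm{SO}}\) exists because there are finitely many support profiles. The earlier claim \(\Delta_{\mathrm{sp}}\ge0\) then follows as a sum of two nonnegative terms.

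Two small points remain. First, \(H(b)\succ0\) holds for every support profile, including ones where some \(b_m=0\), because \(\Gamma\succ0\); so \(u^{\mathrm{SO}}\) and \(\widehat J_\Sigma\) are well defined everywhere. Second, the identity holds for any selected IBRE, since only the equilibrium property at that particular profile is used.
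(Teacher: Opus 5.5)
Your proof is correct and follows the paper's argument. Like the paper, you add and subtract \(\widehat J_\Sigma(b^{\mathrm{NE}};z)\), then use the completed-square identity (or optimality of \(u^{\mathrm{SO}}\)) for the first difference and optimality of \(b^{\mathrm{SO}}\) over the same finite admissible family for the second. You also check explicitly that the IBRE action coincides with the unique fixed-support Nash action \(S^{-1}(v-B^\top z)\) at \(b^{\mathrm{NE}}\); the paper leaves this step implicit, and your argument for it is sound.
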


\begin{proof}
Add and subtract \(\widehat J_\Sigma(b^{\mathrm{NE}};z)\) in
\eqref{eq:full_sparse_loss}.  The two resulting differences are
nonnegative by \eqref{eq:additive_action_loss} and by optimality of
\(b^{\mathrm{SO}}\), respectively.
\end{proof}

Thus \(\Delta_{\mathrm{act}}\) measures selfish scalar actions conditional
on the equilibrium support, whereas \(\Delta_{\mathrm{sup}}\) measures the
additional loss from noncooperative support selection.  The residual cost
\(\widehat J_\Sigma(b^{\mathrm{SO}};z)\) is the best value achievable
within the prescribed sparse actuation class and is therefore a structural
feasibility limitation, not competitive loss.

\section{Numerical Experiments}
\label{sec:numerics}

\input{strategic_switching_outputs/RESULTTODO_values.tex}

\providecommand{\FJBeta}{0.840387}
\providecommand{\FJUniformCycleBound}{0.328079}
\providecommand{\FJEpsilonInvariant}{0.000000}
\providecommand{\FJEpsilonMismatch}{0.167070}
\providecommand{\FJAObsInvariant}{0.003382}
\providecommand{\FJAObsMismatch}{0.020028}
\providecommand{\FJUltimateInvariant}{0.000000}
\providecommand{\FJUltimateMismatch}{0.904809}
\providecommand{\FJFinalCycleInvariant}{0.000000}
\providecommand{\FJFinalCycleMismatch}{0.172672}
\providecommand{\FJMaxTargetInvariant}{0.000000}
\providecommand{\FJMaxTargetMismatch}{0.003023}
\providecommand{\FJLastActiveInvariant}{22}
\providecommand{\FJLastActiveMismatch}{throughout}
\providecommand{\FJBoundConservatism}{5.24}
\providecommand{\ActivationValueTol}{10^{-24}}
\providecommand{\CycleZeroTol}{10^{-9}}
\providecommand{\BudgetHorizon}{250}
\providecommand{\BudgetMCSamples}{50}
\providecommand{\CompetitiveMCSamples}{10}

The numerical study has five objectives. First, it verifies the exact
support-selection theorem against exhaustive enumeration and separates the
full-sort implementation cost from the cost of enumeration. Second, it uses a five-node DeGroot periodic-switching benchmark to evaluate
the trajectory-wise diagnostics of Section~III. Third, it applies the same
five switching matrices to invariant and prejudice-mismatched FJ dynamics. Fourth, it studies budget and
regularization effects, now supplemented by a fixed-seed Monte-Carlo sweep
over initial conditions and by baselines that separate adaptive support
choice, adaptive cardinality, and the cardinality constraint itself.
Finally, the multiplayer experiment distinguishes several stage-game
implementations and supplements the nominal trajectory by a small sweep
over independently generated periodic networks.

Unless stated otherwise, all simulations use double-precision arithmetic.
For the closed-loop simulations, Algorithm~\ref{alg:sparse_osaoc} uses the
objective-improvement dead-band
\[
    \tau_{\rm act}=\ActivationValueTol .
\]
This value is many orders of magnitude below the reported one-step
improvements and is used only to prevent floating-point round-off from
creating formally nonempty supports after the action has become numerically
zero; the analytical results correspond to \(\tau_{\rm act}=0\).

\subsection{Performance Metrics}
\label{subsec:numerical_metrics}

For a single-player target set \(\mathcal T\), define
\begin{equation}
    E_{\mathcal T}(k)
    :=
    \left\|P_{\mathcal T}(x(k)-g)\right\|_2^2
    \label{eq:numerical_target_error}
\end{equation}
and the cumulative control energy
\begin{equation}
    \mathcal E_u(K)
    :=
    \sum_{k=0}^{K-1}u(k)^2.
    \label{eq:numerical_control_energy}
\end{equation}
The instantaneous actuation cardinality is
\begin{equation}
    s(k):=\|b^\star(k)\|_1.
    \label{eq:numerical_cardinality}
\end{equation}

For a simulation ending at \(K\), we use the finite-horizon settling time
\begin{equation}
\begin{aligned}
    t_{\mathrm{set}}(\epsilon;K)
    :=
    \min\Bigl\{
        k:\;
        E_{\mathcal T}(j)\le\epsilon
        \text{ for every }j=k,\ldots,K
    \Bigr\}.
\end{aligned}
\label{eq:settling_time_definition}
\end{equation}
If the set is empty, the trajectory is reported as not settled within the
horizon. Thus \(t_{\rm set}\) is not merely a persistence-window statistic.

For periodic networks, ratios involving a numerically zero cycle-start
error are excluded. Define
\[
\begin{aligned}
    \mathcal M_K^{\rm nz}
    &:=\{m:\ mN_c,(m+1)N_c\le K,\
             \ \|e(mN_c)\|_2>\tau_{\rm cyc}\},\\
    \tau_{\rm cyc}&=\CycleZeroTol .
\end{aligned}
\]
The reported realized diagnostic is
\begin{equation}
    a_{\mathrm{obs}}
    :=
    \max_{m\in\mathcal M_K^{\rm nz}}\chi_m,
    \label{eq:observed_cycle_contraction}
\end{equation}
where \(\chi_m\) is the homogeneous cycle gain defined in
\eqref{eq:realized_cycle_gain}. This threshold avoids ratios dominated by
machine round-off and affects the diagnostic only, not the simulated
trajectory.

\subsection{Exactness and Computational Cost of Sparse Selection}
\label{subsec:sorting_validation}

The first experiment isolates support selection from the network dynamics.
For randomly generated residual vectors \(c\in\mathbb R^N\), budgets
\(r\le N\), and penalties \(\gamma>0\),
Algorithm~\ref{alg:sparse_osaoc} is compared with exhaustive enumeration of
all admissible supports. For each instance, record
\begin{equation}
    \varepsilon_V
    :=
    \left|
        V(S_{\mathrm{sort}})
        -
        V(S_{\mathrm{enum}})
    \right|.
    \label{eq:objective_validation_error}
\end{equation}
Since optimal supports need not be unique, equality of objective values is
the appropriate correctness test. The validation set includes mixed-sign
and deliberately tied residuals. Across \NumSortingTests{} instances, the
maximum discrepancy was \(\MaxSortingError\).

For Table~\ref{tab:sorting_validation}, \(r=5\), \(\gamma=0.5\), and the
residual entries are i.i.d. standard normal. Each row reports the median of
20 wall-clock repetitions on identical instances for the two methods; the
complete software/hardware record is exported with the numerical archive.

\input{strategic_switching_outputs/table_sorting_validation.tex}

For fixed \(r\), exhaustive enumeration visits
\(\sum_{s=0}^{r}\binom Ns=\Theta(N^r)\) supports and is therefore
polynomial in \(N\), though with a rapidly growing degree. It becomes
exponential when \(r\) scales with \(N\). By contrast, the full-sort
implementation of Theorem~\ref{thm:sorting} is \(O(N\log N)\), and the
partial-selection implementation discussed in Section~\ref{sec:single_player}
is \(O(N+r\log r)\).

\subsection{Sparse Tracking under Periodic Switching}
\label{subsec:periodic_tracking_numerics}

We adopt the five representative stochastic transition matrices from
Example~2 of Wang et al. \cite{WangEtAl2023ConcatenatedFJ}. In the source
construction, a discussion event contains two stubborn participants, one
nonstubborn participant, and two absent agents. For the ordered three-agent
participating block, the coarse-grained and representative matrices are
\[
 \bar P_{\rm part}=
 \begin{bmatrix}
 3/4&1/4&0\\[1mm]1/4&3/4&0\\[1mm]1/2&1/2&0
 \end{bmatrix},
 \qquad
 P_{\rm part}=
 \begin{bmatrix}
 1/2&1/2&0\\[1mm]1/2&1/2&0\\[1mm]1/2&1/2&0
 \end{bmatrix}.
\]
The five events are
\[
 (12|3|45),\ (34|5|12),\ (51|2|34),\ (23|4|51),\ (45|1|23),
\]
where the first pair denotes the stubborn participants, the middle entry
the nonstubborn participant, and the final pair the absent agents. The resulting representative matrices are shown in Figure \ref{fig:five_node_matrices} and are also exported to \path{W5_matrices.csv}

\input{strategic_switching_outputs/five_node_matrices_fragment.tex}

We relabel these representatives \(W_j:=P_j\), since here they are used
directly as DeGroot one-step update matrices. The two periodic orders are
\begin{equation}
    W_1\rightarrow W_2\rightarrow W_3\rightarrow W_4\rightarrow W_5,
    \label{eq:consensus_switching_sequence_num}
\end{equation}
and
\begin{equation}
    W_1\rightarrow W_5\rightarrow W_4\rightarrow W_3\rightarrow W_2.
    \label{eq:oscillatory_switching_sequence_num}
\end{equation}
Wang et al. show that the first order converges to consensus for every
initial state, whereas the reverse order is periodic for every initial
state. Our experiment is not a literal simulation of their original
two-timescale FJ process: it reuses the representative coarse-grained
stochastic matrices as a DeGroot switching benchmark and adds direct sparse
OSAOC.

The parameters are
\begin{equation}
\begin{aligned}
    x(0)&=[\,0.43,\;0.18,\;0.90,\;0.97,\;0.45\,]^\top,\\
    \mathcal T&=\{1,3,5\},\qquad
    g=0.2\,\mathbf 1,\\
    r&=3,\qquad \gamma=0.5.
\end{aligned}
\label{eq:wang_simulation_parameters}
\end{equation}
For the first switching order, the uncontrolled trajectory converges to
\(\OpenConsensusLimit\); for the reverse order it oscillates approximately
between \(\OpenOscMin\) and \(\OpenOscMax\). Sparse OSAOC tracks the target
in both cases. With the explicit dead-band above, the last active controls
occur at \(k=\LastActiveConsensus\) and \(k=\LastActiveOscillatory\),
respectively.

We additionally report
\begin{equation}
    E_{\mathcal T}(k),\qquad
    s(k),\qquad
    u(k),\qquad
    \chi_m.
    \label{eq:tracking_diagnostics}
\end{equation}
Since \(g=0.2\,\mathbf 1\) is invariant under every row-stochastic \(W(k)\),
\(\zeta_m=0\). Hence, for every retained nonzero cycle,
\begin{equation}
    \chi_m
    =
    \frac{\|e((m+1)N_c)\|_2}{\|e(mN_c)\|_2}.
    \label{eq:cycle_gain_from_errors}
\end{equation}
The maximum retained gains are
\begin{equation}
\begin{aligned}
    a_{\mathrm{obs}}^{\mathrm{cons}}&=\AObsConsensus,\\
    a_{\mathrm{obs}}^{\mathrm{osc}}&=\AObsOscillatory.
\end{aligned}
\label{eq:observed_cycle_values}
\end{equation}
These are trajectory-wise diagnostics, not state-independent certificates.
Figure \ref{fig:periodic_tracking} summarizes the corresponding state trajectories, target errors, control actions, support cardinalities, and retained cycle gains.
\begin{figure}[t]
\centering
\includegraphics[width=\columnwidth]{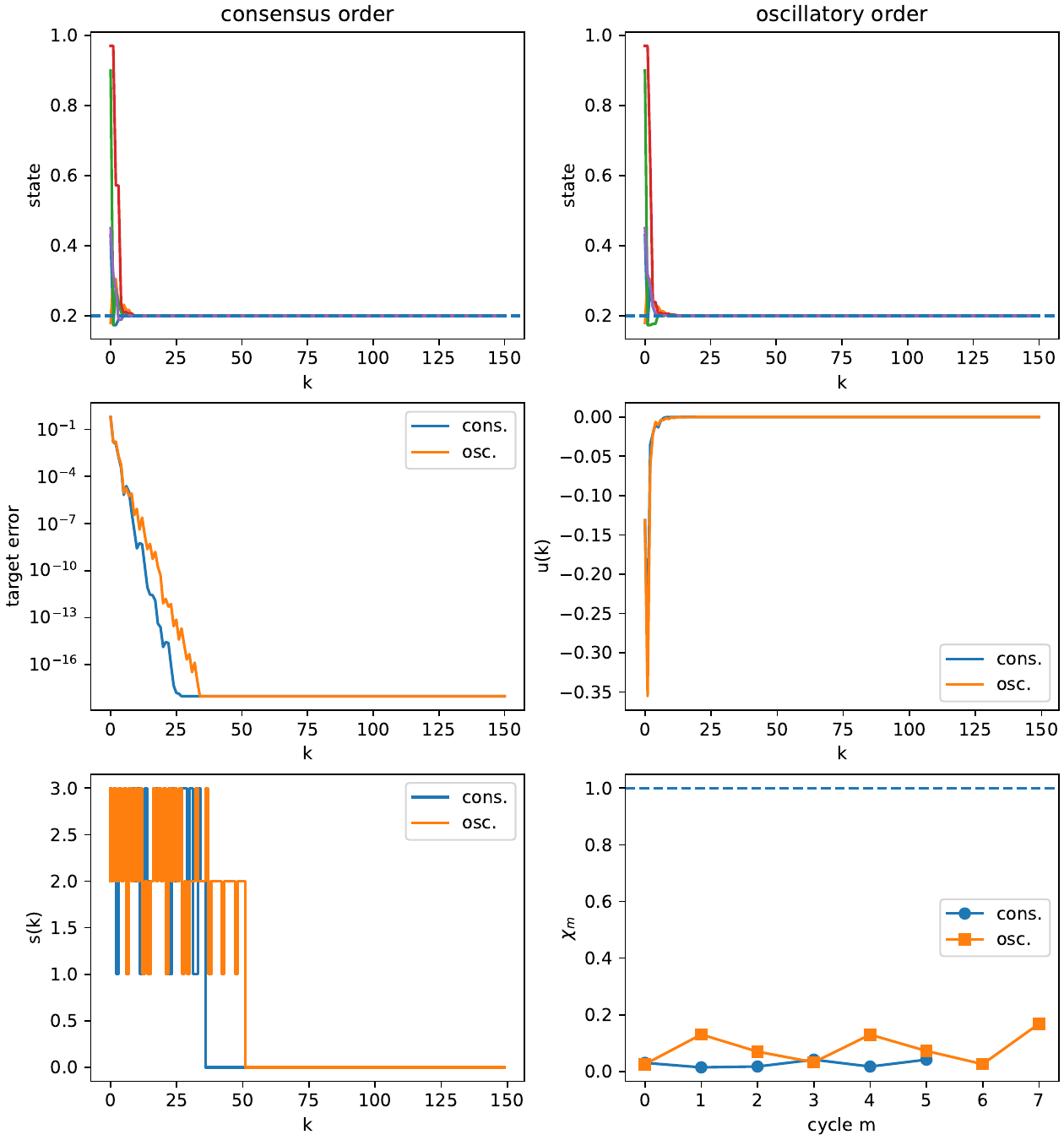}
\caption{Five-node periodic-switching example. State trajectories, target
error, control action, selected cardinality, and retained cycle gains are
shown for both switching orders.}
\label{fig:periodic_tracking}
\end{figure}

The DeGroot experiment tests the closed-loop tracking layer rather than the
exactness of the sorting theorem, which depends on the opinion model only
through the current residual. The all-empty support sequence still excludes
a strict support-uniform induced-norm certificate over the complete DeGroot
support family, as established in Section~\ref{sec:tracking}.

\subsection{Friedkin--Johnsen Specialization: Invariant and Mismatched Targets}
\label{subsec:fj_numerics}

To isolate the opinion-model effect from support selection, we reuse the
same five matrices, initial state, target set, consensus goal, budget, and
penalty, but use
\[
\begin{aligned}
 z(k)&=\Lambda W(k)x(k)+(I-\Lambda)s,\\
 \Lambda&=\operatorname{diag}(0.55,0.65,0.70,0.60,0.75).
\end{aligned}
\]
The intervention \(b(k)u(k)\) remains direct and is not premultiplied by
\(\Lambda\). The two prejudice vectors are
\[
 s_{\rm inv}=0.2\,\mathbf 1,
 \qquad
 s_{\rm mis}=[\,0.05,0.35,-0.10,0.45,0\,]^\top.
\]
The free-response matrices satisfy
\[
 \beta=\max_j\|F_j\|_2=\FJBeta,
 \qquad
 \prod_{j=1}^{5}\|F_j\|_2=\FJUniformCycleBound<1.
\]
As discussed in Section~\ref{sec:tracking}, this is an inheritance result:
the free FJ cycle is already contractive in this estimate, and sparse
correction does not destroy that contraction.

For \(s_{\rm inv}=g\), target invariance holds and
\(\varepsilon_g=\FJEpsilonInvariant\); the last active control occurs at
\(k=\FJLastActiveInvariant\). For \(s_{\rm mis}\), target invariance fails,
\(\varepsilon_g=\FJEpsilonMismatch\), and the controller remains active
throughout the horizon while approaching a period-five orbit. The bound in
Theorem~\ref{thm:cycle_uub} gives
\[
 \limsup_{m\to\infty}\|e(5m)\|_2\le\FJUltimateMismatch,
\]
whereas the final observed cycle-start error is
\(\FJFinalCycleMismatch\). Thus the a priori bound is about
\(\FJBoundConservatism\) times the observed value, quantifying its
conservatism.

In Table~\ref{tab:fj_specialization}, \(\chi_m\) always denotes the
\emph{homogeneous} gain
\(\|\mathcal A_m e(mN_c)\|_2/\|e(mN_c)\|_2\), as defined in
\eqref{eq:realized_cycle_gain}. In the mismatched affine case
\(\zeta_m\neq0\), so it is not the full ratio
\(\|e((m+1)N_c)\|_2/\|e(mN_c)\|_2\); the latter approaches one on the
periodic orbit.
Figure \ref{fig:fj_affine_tracking} compares the invariant and prejudice-mismatched FJ trajectories
\input{strategic_switching_outputs/table_fj_specialization.tex}

\begin{figure}[t]
\centering
\includegraphics[width=\columnwidth]{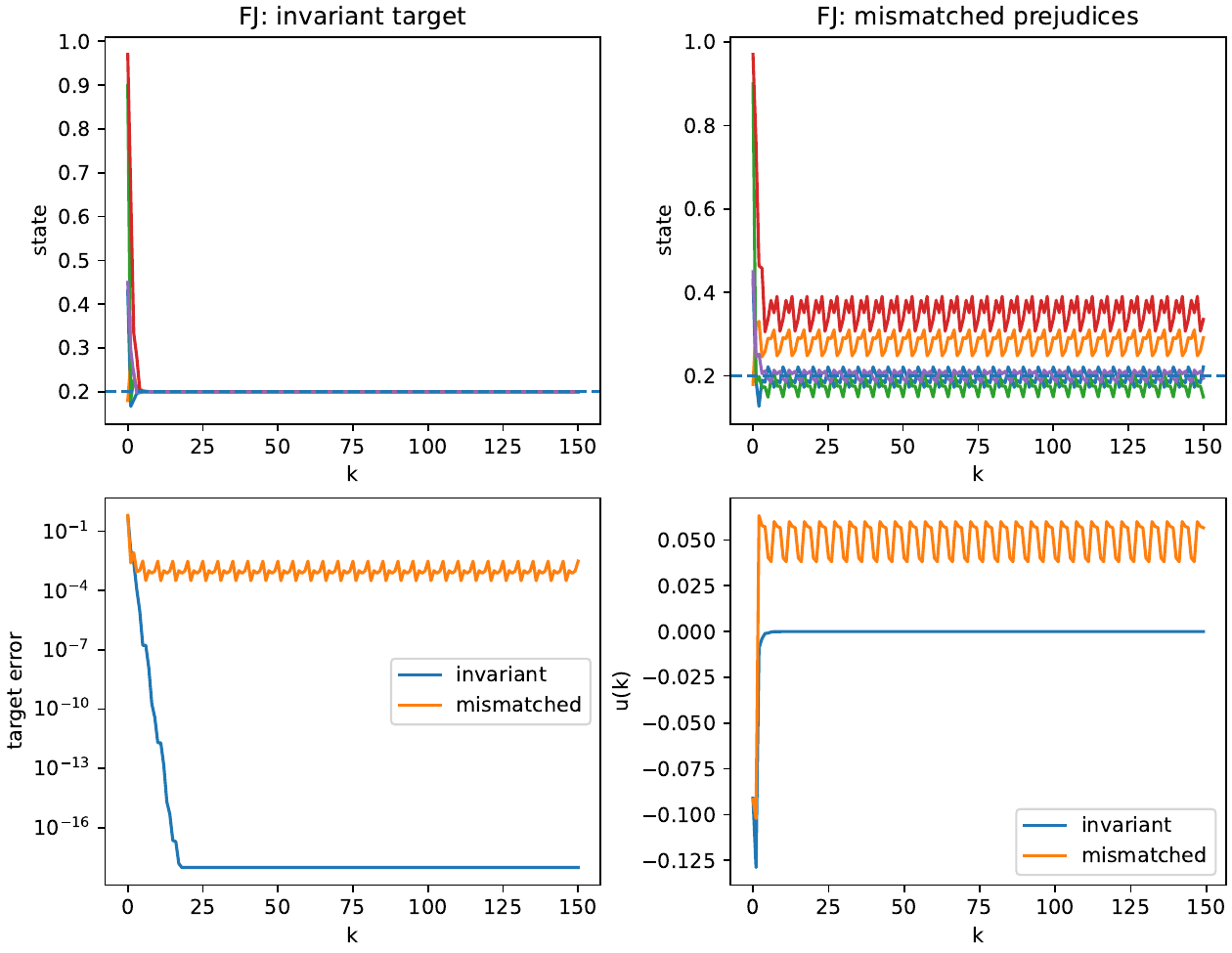}
\caption{Friedkin--Johnsen specialization. The invariant case tracks the
consensus target exactly; the mismatched case approaches a biased
period-five orbit and remains actively controlled.}
\label{fig:fj_affine_tracking}
\end{figure}

\subsection{Effect of the Actuation Budget}
\label{subsec:budget_numerics}

The deterministic eight-node benchmark uses
\begin{equation}
\begin{aligned}
 x(0)&=[\,-1.6,-1.5,-0.4,0.2,0.4,-0.1,-0.3,-1.2\,]^\top,\\
 \mathcal T&=\{3,4,7,8\},\qquad g=0.5\,\mathbf 1,\\
 r&\in\{1,2,3,4\},\qquad \gamma=0.1.
\end{aligned}
\label{eq:budget_simulation_parameters}
\end{equation}
The horizon and settling threshold are
\begin{equation}
    K=\BudgetHorizon,
    \qquad
    \epsilon=\SettleTol.
    \label{eq:settling_parameters}
\end{equation}
The mean cardinality in Table~\ref{tab:budget_results} is computed only for
\(k<t_{\rm set}\), so it is not dominated by inactive post-settling steps.

\input{strategic_switching_outputs/table_budget_results.tex}

\begin{figure}[t]
\centering
\includegraphics[width=\columnwidth]{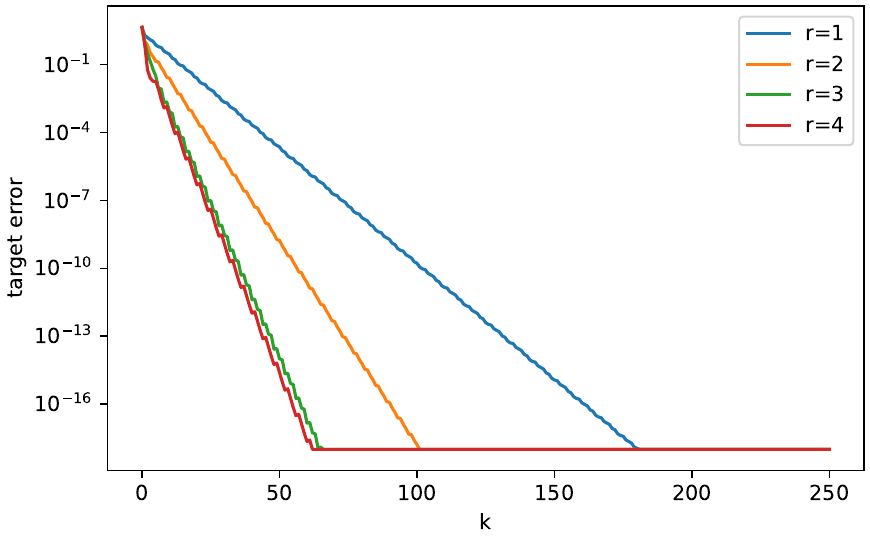}
\caption{Target-error histories for the four actuation budgets in the
nominal eight-node experiment.}
\label{fig:budget_target_error}
\end{figure}

To test whether the nominal trend is specific to one initial condition, we
repeat the experiment over \(\BudgetMCSamples\) fixed-seed initial states
with independent entries \(x_i(0)\sim\mathcal U[-1.5,1]\), holding the
network, target, \(\gamma\), and horizon fixed. All Monte-Carlo trajectories
settled within the horizon. Table~\ref{tab:budget_mc_results} reports the
median and interquartile range (IQR) of \(t_{\rm set}\) and the mean plus/minus one standard
deviation of cumulative target error and control energy.

\input{strategic_switching_outputs/table_budget_monte_carlo.tex}

Figure \ref{fig:budget_target_error} shows the corresponding target-error histories for the four cardinality budgets.
The enlarged support family guarantees only pointwise one-step
non-worsening as \(r\) increases; it does not imply a monotone settling time
along distinct closed-loop trajectories. The Monte-Carlo results show that
the faster settling observed in the nominal run is nevertheless robust for
this benchmark family.

We also sweep
\[
 \gamma\in\{\GammaGrid\}
\]
for \(r=2\), reporting the same finite-horizon settling time, cumulative
error and energy, and pre-settling mean cardinality.
Table \ref{tab:gamma_sweep_results} reports the resulting settling time, cumulative error, control energy, and mean pre-settling cardinality, while Figure \ref{fig:gamma_sweep} displays the associated error–energy tradeoff as \(\gamma\) varies.
\input{strategic_switching_outputs/table_gamma_sweep.tex}

\begin{figure}[t]
\centering
\includegraphics[width=\columnwidth]{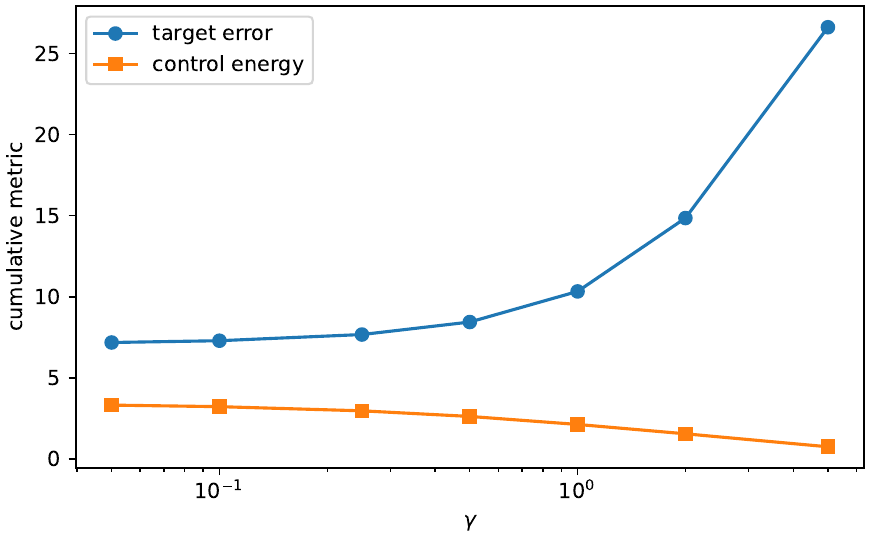}
\caption{Regularization sensitivity for \(r=2\): cumulative target error
and control energy as functions of \(\gamma\).}
\label{fig:gamma_sweep}
\end{figure}

\subsection{Competitive Sparse OSAOC}
\label{subsec:competitive_numerics}

The multiplayer benchmark uses
\begin{equation}
\begin{aligned}
    \mathcal T_1&=\{1,2,6,7,9\},\\
    \mathcal T_2&=\{3,4,5,7,10\},
\end{aligned}
\label{eq:corrected_multiplayer_targets}
\end{equation}
and
\begin{equation}
\begin{aligned}
    g_1&=0.5\,\mathbf 1,&r_1&=2,&\gamma_1&=0.5,\\
    g_2&=-0.5\,\mathbf 1,&r_2&=2,&\gamma_2&=0.5.
\end{aligned}
\label{eq:multiplayer_parameters}
\end{equation}
We compare zero-opponent myopic OSAOC, one sequential best-response sweep,
exact IBRE, and the centralized sparse optimum. Exact IBRE and centralized
optimization are computed by exhaustive support-profile enumeration only
because the benchmark is small.

For the exact IBRE we report
\begin{equation}
    G_{12}(k)=b_1(k)^\top b_2(k),
    \label{eq:numerical_G12}
\end{equation}
\begin{equation}
    C_{2\to1}(k)=b_2(k)^\top P_1b_2(k),
    \qquad
    C_{1\to2}(k)=b_1(k)^\top P_2b_1(k),
    \label{eq:numerical_exposure}
\end{equation}
and
\begin{equation}
    \Delta_{\rm act}(k),\qquad
    \Delta_{\rm sup}(k),\qquad
    \Delta_{\rm sp}(k)=\Delta_{\rm act}(k)+\Delta_{\rm sup}(k).
    \label{eq:numerical_welfare_metrics}
\end{equation}

\begin{figure*}[t]
\centering
\includegraphics[width=\textwidth]{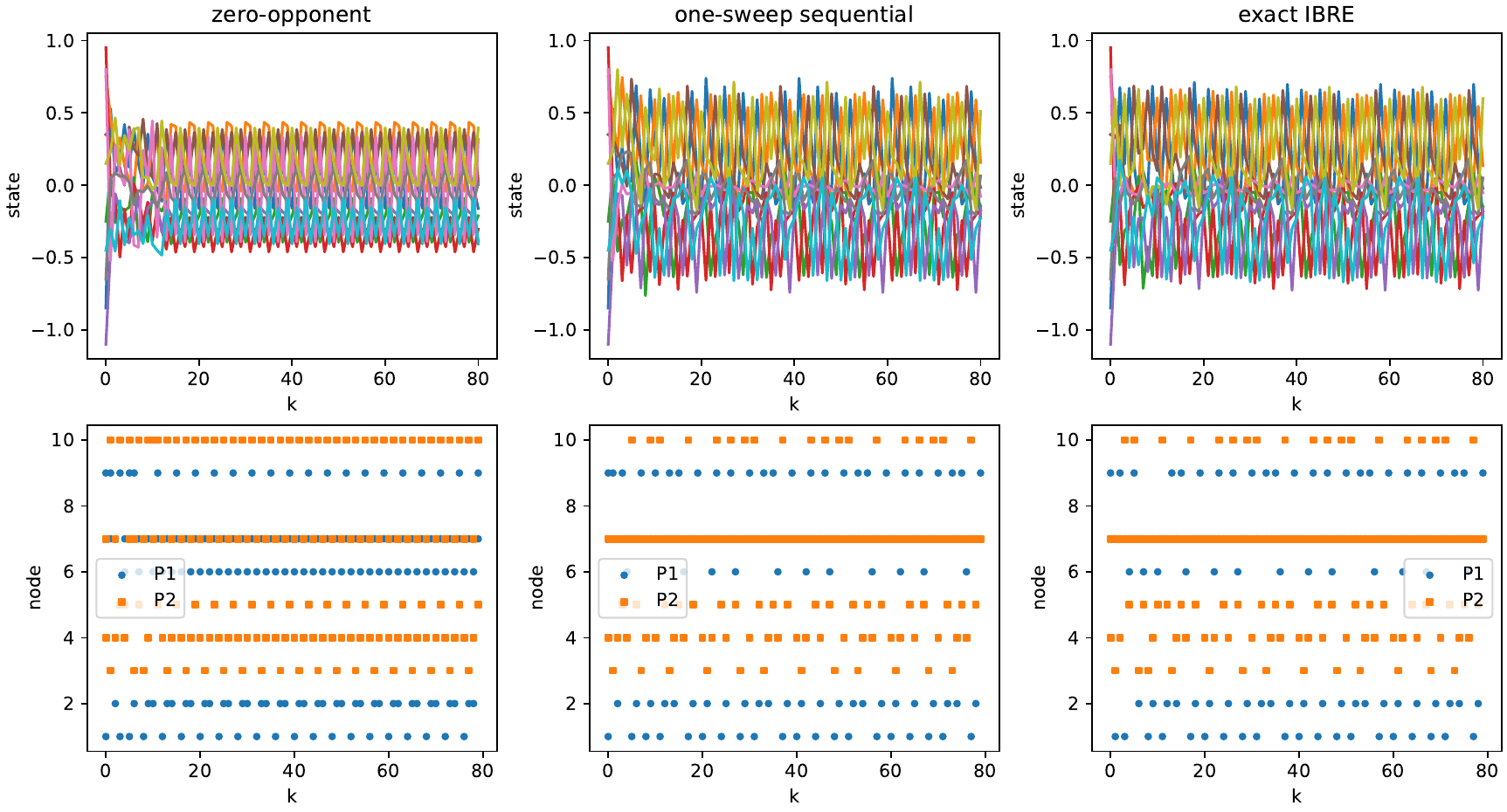}
\caption{Competitive sparse OSAOC under the zero-opponent, one-sweep
sequential, and exact-IBRE decentralized protocols.}
\label{fig:competitive_protocols}
\end{figure*}

\begin{figure}[t]
\centering
\includegraphics[width=\columnwidth]{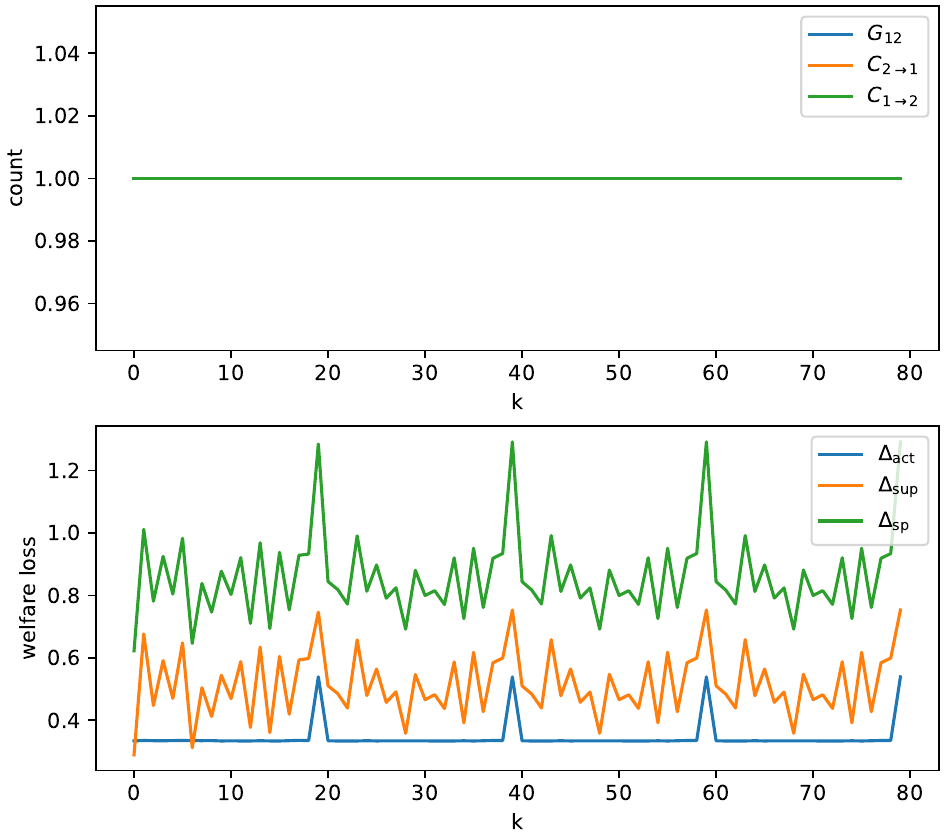}
\caption{Support overlap, directional cross-target exposure, and
same-state welfare loss along the exact-IBRE trajectory.}
\label{fig:competitive_welfare}
\end{figure}

Every computed IBRE is checked against unilateral deviations:
\begin{equation}
    \varepsilon_{\mathrm{NE}}
    :=
    \max_{k,m}
    \left[
        J_m(a^\star(k))-
        \min_{a_m}J_m(a_m,a^\star_{-m}(k))
    \right]_+
    =\MaxNEResidual.
    \label{eq:NE_validation_residual}
\end{equation}
Figure \ref{fig:competitive_protocols} compares the state and support trajectories of the three decentralized implementations.

Figure \ref{fig:competitive_welfare} shows the support-overlap, cross-target-exposure, and same-state welfare-loss diagnostics along the exact-IBRE trajectory.
\input{strategic_switching_outputs/table_competitive_summary.tex}

\paragraph{Equilibrium versus mutual myopia.}
For the nominal network, exact IBRE has a larger cumulative aggregate
player cost and substantially larger control energy than the zero-opponent
myopic rule. This does not contradict Nash optimality. IBRE excludes
profitable unilateral deviations at a \emph{fixed} free response against
the opponents' equilibrium actions; it neither minimizes aggregate cost
nor guarantees dominance over a different joint feedback policy that
generates a different state trajectory. The experiment therefore
illustrates that strategic consistency can itself be costly relative to
mutually nonstrategic behavior. The same ordering occurs in all
\(\CompetitiveMCSamples\) independently generated networks in the
sensitivity sweep below. We deliberately avoid calling the IBRE ``worse
for every player'': the reported quantity is aggregate cost, not a Pareto
comparison of the players' individual cumulative costs.

For the exact IBRE,
\begin{equation}
\begin{aligned}
    \sum_k\Delta_{\mathrm{act}}(k)&=\CumDeltaAct,\\
    \sum_k\Delta_{\mathrm{sup}}(k)&=\CumDeltaSup,\\
    \sum_k\Delta_{\mathrm{sp}}(k)&=\CumDeltaSp.
\end{aligned}
\label{eq:cumulative_welfare_decomposition}
\end{equation}
\input{strategic_switching_outputs/table_welfare_summary.tex}

Table~\ref{tab:competitive_summary} reports
each method's cumulative cost along that method's \emph{own} closed-loop
trajectory. In contrast, Table~\ref{tab:welfare_summary} sums same-state
counterfactual welfare gaps evaluated along the exact-IBRE trajectory: at
each frozen \(z(k)\), the IBRE is compared with centralized action and
support choices at that same state. Consequently
\(\sum_k\Delta_{\rm sp}(k)\) is not expected to equal the difference
between the exact-IBRE and centralized cumulative costs in
Table~\ref{tab:competitive_summary}.

As a modest network-sensitivity check, we repeat the zero-opponent, exact
IBRE, and centralized calculations on \(\CompetitiveMCSamples\)
independently generated four-phase periodic networks, with all player
parameters and the horizon fixed. The aggregate conclusions persist over
this sweep; in particular, the exact-IBRE cumulative social cost exceeds
the zero-opponent value in \(\MCIBREWorseCount\) of the
\(\CompetitiveMCSamples\) realizations. Table \ref{tab:competitive_mc} summarizes the resulting mean and standard deviation over the ten network realizations.

\input{strategic_switching_outputs/table_competitive_monte_carlo.tex}

These are stage-game and implementation experiments. They do not establish
a general boundedness or convergence result for the competitive closed
loop; the network sweep is empirical support for the reported strategic
phenomena.

\subsection{Comparison with Selection Baselines and the Budget Constraint}
\label{subsec:heuristic_baselines}

For the nominal eight-node experiment with \(r=2\), we compare:
\begin{enumerate}
 \item adaptive OSAOC with \(|S|\le r\);
 \item adaptive OSAOC constrained to \(|S|=r\) whenever the dead-band is
       exceeded;
 \item the support selected optimally at \(k=0\), then held fixed;
 \item a uniformly random exact-\(r\) support at every step, averaged over
       \(\NumRandomSeeds\) seeds; and
 \item adaptive OSAOC with the budget removed, \(r=|\mathcal T|\).
\end{enumerate}
The exact-\(r\) baseline isolates adaptive cardinality from adaptive node
choice, while the last baseline quantifies the closed-loop cost of imposing
\(r=2\) rather than allowing the full target set. Table \ref{tab:baseline_results} reports their cumulative target error, control energy, and settling time.

\input{strategic_switching_outputs/table_baselines.tex}

In this nominal run the \(|S|\le2\) and exact-\(2\) adaptive policies
coincide, so the improvement over static and random selection is due to
which nodes are selected, not to varying cardinality. Removing the budget
reduces both cumulative target error and control energy in this example,
showing that the cardinality restriction itself carries a measurable
closed-loop cost. Across the baselines, however, lower tracking error need
not coincide with lower energy; the comparison is therefore an
error--energy trade-off rather than a uniform dominance claim. Figure \ref{fig:baseline_error} visualizes these comparisons in the cumulative error–energy plane.

\begin{figure}[t]
\centering
\includegraphics[width=\columnwidth]{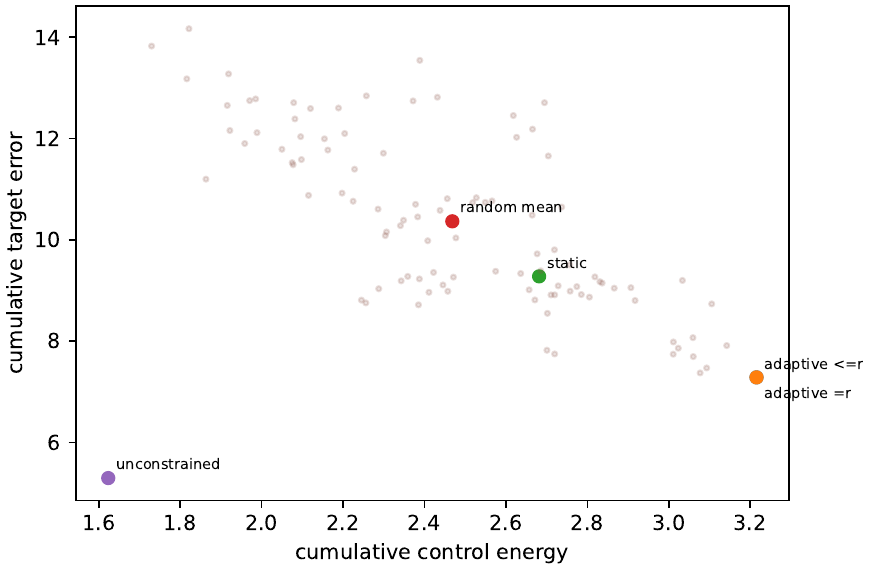}
\caption{Error--energy plane for adaptive, exact-cardinality, static,
random, and budget-unconstrained selection. Individual random-support runs
are shown together with their mean.}
\label{fig:baseline_error}
\end{figure}

\subsection{Reproducibility}
\label{subsec:reproducibility}

All experiments are generated by the Python script
\path{strategic_switching_experiments.py}, which exports every table
and figure appearing in this manuscript, the exact five-, eight-, and
ten-node benchmark matrices (\path{W5_matrices.csv},
\path{W8_matrices.csv}, and \path{W10_matrices.csv}), Monte-Carlo initial
conditions and per-run results, seeds, tolerances, software versions, and a
hardware/timing provenance record. The five-node matrices are also printed
in the paper; the larger matrices are supplied in machine-readable form and
in the supplementary network-matrix document. The complete Python source,
\path{strategic_switching_outputs/} directory, and supplementary files are
supplied with the submission. All files mentioned in this section are also available on the following \href{https://github.com/GabrielGent/Sparse-OSAOC.git}{GitHub}. ChatGPT and Claude were used to aid in debugging the code used in the numerical example.

\section{Conclusion}
\label{sec:conclusion}

Sparse OSAOC separates naturally into a model-independent one-step layer
and a model-dependent dynamical layer.  For any affine free response
\(z=Fx+h\), eliminating the scalar direct action leaves a support problem
solved globally by extreme-residual sorting; a full sort is
\(O(|\mathcal T|\log|\mathcal T|)\), while partial extreme selection
gives \(O(|\mathcal T|+r\log r)\).  More importantly for the
closed loop, one-step optimality also yields a controller-dependent target
contraction factor.  Combined with target/non-target block gains, this
gives an a priori small-gain certificate for global ultimate boundedness
and exact invariant-target tracking; in the full-state DeGroot case it can
certify convergence despite the unit mode of the free error dynamics.
Periodic cycle bounds remain useful as complementary, less restrictive but
trajectory-wise results.

With several external players, endogenous support selection adds a
discrete strategic layer to the fixed-influence OSAOC game studied in
\cite{GentilBhaya2026CompetitiveFJ}. Each support-and-action best response
retains the exact sorting solution, and the full binary--continuous stage
game remains an exact potential game. These are frozen-state results: the
present paper does not claim a general convergence or stability theorem
for the competitive closed loop. Same-state welfare comparisons separate
action-level loss from the additional loss caused by strategic support
selection; neither should be confused with residual error imposed by the
sparse feasible family or with cumulative costs generated along different
closed-loop trajectories.

The numerical examples illustrate the same separation.  Wang et al.'s
order-sensitive stochastic matrices provide convergent and periodic
DeGroot benchmarks, while the FJ experiment contrasts invariant and
prejudice-mismatched targets.  Budget and competitive experiments quantify
resource and strategic effects without attributing closed-loop performance
to the one-step sorting theorem itself.

Two limitations are central.  First, the intervention \(bu\) is direct.
As shown in Section~\ref{sec:single_player}, susceptibility-filtered actuation
\(\Lambda bu\) preserves the exact sorting theorem when susceptibility is
homogeneous on the target set, after the rescaling
\(\gamma\mapsto\gamma/\lambda^2\).  With heterogeneous susceptibilities,
the reduced denominator becomes support dependent and the problem is instead
a weighted fractional binary program.  Second, both the block small-gain
certificate and the periodic cycle bounds are sufficient and can be
conservative; the latter are additionally conditional on the realized
closed-loop trajectory.  Extensions to weighted/susceptibility-filtered or
multi-input sparse actuation, sharper graph-structural tracking conditions,
and scalable equilibrium computation remain natural directions.

\section*{Supplementary materials}

Supplementary materials are provided in the \textbf{Supplementary Files} associated with this article. These materials include \path{Supplementary_Network_Matrices.pdf}, which provides the complete network matrices employed in the analyses; and \path{Detailed_Derivations.pdf}, which presents additional mathematical derivations and supporting analytical developments. All these documents are available in the \textbf{Supplementary Files} accompanying this article.

\section*{Acknowledgments}
\noindent This research was partially supported by Coordenação de Aperfeiçoamento de Pessoal de Nível Superior - Brasil (CAPES) Finance Code 001, CNPq BPP/PQ-Sr grant no. 313335/2022-2 [AB] and CNPq doctoral fellowship grant no. 163968/2021-7 [GG].

\bibliography{strategic_affine_ejc_refs}
\bibliographystyle{elsarticle-num}
\end{document}

%% file: strategic_switching_outputs/RESULTTODO_values.tex

\newcommand{\NumSortingTests}{2004}
\newcommand{\MaxSortingError}{7.105\times 10^{-15}}
\newcommand{\SettleTol}{10^{-6}}
\newcommand{\CycleZeroTol}{10^{-9}}
\newcommand{\ActivationValueTol}{10^{-24}}
\newcommand{\BudgetGamma}{0.10}
\newcommand{\BudgetHorizon}{250}
\newcommand{\GammaGrid}{0.05,0.10,0.25,0.50,1,2,5}
\newcommand{\NumRandomSeeds}{100}
\newcommand{\BudgetMCSamples}{50}
\newcommand{\CompetitiveMCSamples}{10}
\newcommand{\OpenConsensusLimit}{0.526667}
\newcommand{\OpenOscMin}{0.305000}
\newcommand{\OpenOscMax}{0.710000}
\newcommand{\LastActiveConsensus}{36}
\newcommand{\LastActiveOscillatory}{51}
\newcommand{\AObsConsensus}{0.041901}
\newcommand{\AObsOscillatory}{0.168755}
\newcommand{\FJBeta}{0.840387}
\newcommand{\FJUniformCycleBound}{0.328079}
\newcommand{\FJEpsilonInvariant}{0.000000}
\newcommand{\FJEpsilonMismatch}{0.167070}
\newcommand{\FJAObsInvariant}{0.003382}
\newcommand{\FJAObsMismatch}{0.020028}
\newcommand{\FJUltimateInvariant}{0.000000}
\newcommand{\FJUltimateMismatch}{0.904809}
\newcommand{\FJFinalCycleInvariant}{0.000000}
\newcommand{\FJFinalCycleMismatch}{0.172672}
\newcommand{\FJMaxTargetInvariant}{0.000000}
\newcommand{\FJMaxTargetMismatch}{0.003023}
\newcommand{\FJLastActiveInvariant}{22}
\newcommand{\FJLastActiveMismatch}{throughout}
\newcommand{\FJBoundConservatism}{5.24}
\newcommand{\BudgetSettleOne}{64}
\newcommand{\BudgetEnergyOne}{5.692854}
\newcommand{\BudgetErrorOne}{14.174938}
\newcommand{\BudgetCardOne}{0.940}
\newcommand{\BudgetCardPreOne}{1.000}
\newcommand{\BudgetSettleTwo}{35}
\newcommand{\BudgetEnergyTwo}{3.215374}
\newcommand{\BudgetErrorTwo}{7.283340}
\newcommand{\BudgetCardTwo}{1.088}
\newcommand{\BudgetCardPreTwo}{2.000}
\newcommand{\BudgetSettleThree}{22}
\newcommand{\BudgetEnergyThree}{2.247954}
\newcommand{\BudgetErrorThree}{5.947485}
\newcommand{\BudgetCardThree}{1.088}
\newcommand{\BudgetCardPreThree}{2.955}
\newcommand{\BudgetSettleFour}{20}
\newcommand{\BudgetEnergyFour}{1.623294}
\newcommand{\BudgetErrorFour}{5.295256}
\newcommand{\BudgetCardFour}{1.132}
\newcommand{\BudgetCardPreFour}{3.300}
\newcommand{\AdaptiveBaselineError}{7.283340}
\newcommand{\AdaptiveBaselineEnergy}{3.215374}
\newcommand{\AdaptiveBaselineSettle}{35}
\newcommand{\AdaptiveExactBaselineError}{7.283340}
\newcommand{\AdaptiveExactBaselineEnergy}{3.215374}
\newcommand{\AdaptiveExactBaselineSettle}{35}
\newcommand{\StaticBaselineError}{9.277319}
\newcommand{\StaticBaselineEnergy}{2.681452}
\newcommand{\StaticBaselineSettle}{36}
\newcommand{\RandomMeanBaselineError}{10.364563}
\newcommand{\RandomMeanBaselineEnergy}{2.468286}
\newcommand{\RandomMeanBaselineSettle}{36.21}
\newcommand{\RandomStdBaselineError}{1.694770}
\newcommand{\RandomStdBaselineEnergy}{0.344927}
\newcommand{\RandomStdBaselineSettle}{1.64}
\newcommand{\UnconstrainedBaselineError}{5.295256}
\newcommand{\UnconstrainedBaselineEnergy}{1.623294}
\newcommand{\UnconstrainedBaselineSettle}{20}
\newcommand{\TargetSetOne}{\{1,2,6,7,9\}}
\newcommand{\TargetSetTwo}{\{3,4,5,7,10\}}
\newcommand{\ZeroCumSocial}{146.228185}
\newcommand{\ZeroCumEnergy}{34.442863}
\newcommand{\ZeroMeanG}{0.0250}
\newcommand{\SequentialCumSocial}{174.068769}
\newcommand{\SequentialCumEnergy}{87.488334}
\newcommand{\SequentialMeanG}{0.9875}
\newcommand{\IBRECumSocial}{174.406471}
\newcommand{\IBRECumEnergy}{88.978959}
\newcommand{\IBREMeanG}{1.0000}
\newcommand{\CentralCumSocial}{107.040499}
\newcommand{\CentralCumEnergy}{26.927910}
\newcommand{\CentralMeanG}{0.0000}
\newcommand{\MaxNEResidual}{2.220\times 10^{-16}}
\newcommand{\CumDeltaAct}{27.528047}
\newcommand{\CumDeltaSup}{41.269954}
\newcommand{\CumDeltaSp}{68.798001}
\newcommand{\MCIBREWorseCount}{10}

%% file: strategic_switching_outputs/table_sorting_validation.tex
\begin{table}[t]
\caption{Validation and timing of exact sparse support selection.}
\label{tab:sorting_validation}
\centering
\resizebox{\columnwidth}{!}{%
\begin{tabular}{r r r r r}
\hline
\(N\) & \(r\) & full sort (ms) & enum. (ms) & max. \(\varepsilon_V\)\\
\hline
8 & 5 & 0.012209 & 0.624367 & 1.78e-15\\
10 & 5 & 0.013831 & 1.795895 & 3.55e-15\\
12 & 5 & 0.017837 & 4.574785 & 2.66e-15\\
14 & 5 & 0.023130 & 10.002675 & 3.55e-15\\
16 & 5 & 0.027576 & 19.544519 & 3.55e-15\\
18 & 5 & 0.034877 & 36.164619 & 7.11e-15\\
\hline
\end{tabular}
}%
\end{table}

%% file: strategic_switching_outputs/five_node_matrices_fragment.tex
\begin{figure*}[t]
\centering
\begingroup
\setlength{\arraycolsep}{3pt}
\small
\[
\begin{array}{c@{\qquad}c@{\qquad}c}
W_1=\begin{bmatrix}
0.5 & 0.5 & 0.0 & 0.0 & 0.0\\
0.5 & 0.5 & 0.0 & 0.0 & 0.0\\
0.5 & 0.5 & 0.0 & 0.0 & 0.0\\
0.0 & 0.0 & 0.0 & 1.0 & 0.0\\
0.0 & 0.0 & 0.0 & 0.0 & 1.0\\
\end{bmatrix} & W_2=\begin{bmatrix}
1.0 & 0.0 & 0.0 & 0.0 & 0.0\\
0.0 & 1.0 & 0.0 & 0.0 & 0.0\\
0.0 & 0.0 & 0.5 & 0.5 & 0.0\\
0.0 & 0.0 & 0.5 & 0.5 & 0.0\\
0.0 & 0.0 & 0.5 & 0.5 & 0.0\\
\end{bmatrix} & W_3=\begin{bmatrix}
0.5 & 0.0 & 0.0 & 0.0 & 0.5\\
0.5 & 0.0 & 0.0 & 0.0 & 0.5\\
0.0 & 0.0 & 1.0 & 0.0 & 0.0\\
0.0 & 0.0 & 0.0 & 1.0 & 0.0\\
0.5 & 0.0 & 0.0 & 0.0 & 0.5\\
\end{bmatrix}\\[2mm]
W_4=\begin{bmatrix}
1.0 & 0.0 & 0.0 & 0.0 & 0.0\\
0.0 & 0.5 & 0.5 & 0.0 & 0.0\\
0.0 & 0.5 & 0.5 & 0.0 & 0.0\\
0.0 & 0.5 & 0.5 & 0.0 & 0.0\\
0.0 & 0.0 & 0.0 & 0.0 & 1.0\\
\end{bmatrix} & W_5=\begin{bmatrix}
0.0 & 0.0 & 0.0 & 0.5 & 0.5\\
0.0 & 1.0 & 0.0 & 0.0 & 0.0\\
0.0 & 0.0 & 1.0 & 0.0 & 0.0\\
0.0 & 0.0 & 0.0 & 0.5 & 0.5\\
0.0 & 0.0 & 0.0 & 0.5 & 0.5\\
\end{bmatrix} &
\end{array}
\]
\endgroup
\caption{Five representative stochastic matrices from the Wang et al. benchmark used in the DeGroot and FJ numerical experiments.}
\label{fig:five_node_matrices}
\end{figure*}

%% file: strategic_switching_outputs/table_fj_specialization.tex
\begin{table}[t]
\caption{Friedkin--Johnsen specialization. The reported \(\chi_m\) is the homogeneous cycle gain, not the full affine cycle-error ratio.}
\label{tab:fj_specialization}
\centering
\begin{tabular}{l r r}
\hline
Metric & invariant & mismatched\\
\hline
\(\varepsilon_g\) & 0.000000 & 0.167070\\
\(\max_m\chi_m\) & 0.003382 & 0.020028\\
cycle UUB & 0.000000 & 0.904809\\
final cycle-start \(\|e\|_2\) & 0.000000 & 0.172672\\
max. final-cycle \(E_{\mathcal T}\) & 0.000000 & 0.003023\\
activity & last \(k=22\) & throughout horizon\\
\hline
\end{tabular}
\end{table}

%% file: strategic_switching_outputs/table_budget_results.tex
\begin{table}[t]
\caption{Effect of the cardinality budget in the deterministic eight-node benchmark (\(\gamma=0.1\), \(K=250\)).}
\label{tab:budget_results}
\centering
\resizebox{\columnwidth}{!}{%
\begin{tabular}{c|cccc}
\hline
\(r\) & 1 & 2 & 3 & 4\\
\hline
\(t_{\rm set}\) & 64 & 35 & 22 & 20\\
\(\mathcal E_u\) & 5.692854 & 3.215374 & 2.247954 & 1.623294\\
\(\sum_k E_{\mathcal T}(k)\) & 14.174938 & 7.283340 & 5.947485 & 5.295256\\
mean \(s(k),\ k<t_{\rm set}\) & 1.000 & 2.000 & 2.955 & 3.300\\
\hline
\end{tabular}
}%
\end{table}

%% file: strategic_switching_outputs/table_budget_monte_carlo.tex
\begin{table}[t]
\caption{Monte-Carlo budget sensitivity over 50 initial conditions, \(x_i(0)\sim\mathcal U[-1.5,1]\).}
\label{tab:budget_mc_results}
\centering
\resizebox{\columnwidth}{!}{%
\begin{tabular}{c r r r}
\hline
\(r\) & median \(t_{\rm set}\) [IQR] & mean \(\sum E_{\mathcal T}\) & mean \(\mathcal E_u\)\\
\hline
1 & 60 [57,62] & 10.730 $\pm$ 5.138 & 3.168 $\pm$ 1.581\\
2 & 29 [27,29] & 6.172 $\pm$ 2.778 & 1.743 $\pm$ 0.865\\
3 & 20 [18,20] & 5.153 $\pm$ 2.293 & 1.226 $\pm$ 0.581\\
4 & 19 [18,19] & 4.950 $\pm$ 2.186 & 1.052 $\pm$ 0.472\\
\hline
\end{tabular}
}%
\end{table}

%% file: strategic_switching_outputs/table_gamma_sweep.tex
\begin{table}[t]
\caption{Regularization sensitivity for \(r=2\) in the deterministic eight-node benchmark.}
\label{tab:gamma_sweep_results}
\centering
\resizebox{\columnwidth}{!}{%
\begin{tabular}{c r r r r}
\hline
\(\gamma\) & \(t_{\rm set}\) & \(\mathcal E_u\) & \(\sum E_{\mathcal T}\) & mean \(s\), pre-settle\\
\hline
0.05 & 33 & 3.311382 & 7.173285 & 2.000\\
0.10 & 35 & 3.215374 & 7.283340 & 2.000\\
0.25 & 37 & 2.958566 & 7.664540 & 2.000\\
0.50 & 43 & 2.612151 & 8.435963 & 2.000\\
1.00 & 52 & 2.118782 & 10.323852 & 2.000\\
2.00 & 71 & 1.540970 & 14.848505 & 2.000\\
5.00 & 104 & 0.740215 & 26.620202 & 2.000\\
\hline
\end{tabular}
}%
\end{table}

%% file: strategic_switching_outputs/table_competitive_summary.tex
\begin{table}[t]
\caption{Competitive one-step benchmarks for the deterministic ten-node network.}
\label{tab:competitive_summary}
\centering
\resizebox{\columnwidth}{!}{%
\begin{tabular}{l r r r}
\hline
Method & cum. social cost & cum. energy & mean \(G_{12}\)\\
\hline
Zero-opponent myopic & 146.228185 & 34.442863 & 0.0250\\
One-sweep sequential BR & 174.068769 & 87.488334 & 0.9875\\
Exact IBRE & 174.406471 & 88.978959 & 1.0000\\
Centralized sparse & 107.040499 & 26.927910 & 0.0000\\
\hline
\end{tabular}
}%
\end{table}

%% file: strategic_switching_outputs/table_welfare_summary.tex
\begin{table}[t]
\caption{Same-state welfare decomposition along the exact-IBRE trajectory.}
\label{tab:welfare_summary}
\centering
\begin{tabular}{l r}
\hline
Quantity & cumulative value\\
\hline
\(\sum_k\Delta_{\rm act}(k)\) & 27.528047\\
\(\sum_k\Delta_{\rm sup}(k)\) & 41.269954\\
\(\sum_k\Delta_{\rm sp}(k)\) & 68.798001\\
max. unilateral IBRE residual & 2.220e-16\\
\hline
\end{tabular}
\end{table}

%% file: strategic_switching_outputs/table_competitive_monte_carlo.tex
\begin{table}[t]
\caption{Competitive sensitivity over ten independently generated periodic networks; entries are mean $\pm$ standard deviation over network realizations.}
\label{tab:competitive_mc}
\centering
\resizebox{\columnwidth}{!}{%
\begin{tabular}{l r}
\hline
Metric & mean $\pm$ sd\\
\hline
zero-opponent cumulative social cost & 139.071 $\pm$ 6.757\\
exact-IBRE cumulative social cost & 164.426 $\pm$ 9.419\\
centralized cumulative social cost & 98.465 $\pm$ 5.972\\
\(\sum\Delta_{\rm act}\) & 28.169 $\pm$ 1.479\\
\(\sum\Delta_{\rm sup}\) & 36.140 $\pm$ 3.577\\
\hline
\end{tabular}
}%
\end{table}

%% file: strategic_switching_outputs/table_baselines.tex
\begin{table}[t]
\caption{Adaptive support selection and baselines (nominal eight-node trajectory, \(r=2\) where applicable).}
\label{tab:baseline_results}
\centering
\resizebox{\columnwidth}{!}{%
\begin{tabular}{l r r r}
\hline
Method & cumulative target error & control energy & \(t_{\rm set}\)\\
\hline
Adaptive \(|S|\le r\) & 7.283340 & 3.215374 & 35\\
Adaptive \(|S|=r\) & 7.283340 & 3.215374 & 35\\
Static initial optimum & 9.277319 & 2.681452 & 36\\
Random exact-\(r\), mean $\pm$ sd & 10.365 $\pm$ 1.695 & 2.468 $\pm$ 0.345 & 36.2 $\pm$ 1.6\\
Unconstrained \(r=|\mathcal T|\) & 5.295256 & 1.623294 & 20\\
\hline
\end{tabular}
}%
\end{table}

%% file: strategic_affine_ejc_refs.bib
@article{DeGroot1974,
  author  = {Morris H. DeGroot},
  title   = {Reaching a Consensus},
  journal = {Journal of the American Statistical Association},
  volume  = {69},
  number  = {345},
  pages   = {118--121},
  year    = {1974},
  doi     = {10.1080/01621459.1974.10480137}
}

@article{FriedkinJohnsen1990,
  author  = {Noah E. Friedkin and Eugene C. Johnsen},
  title   = {Social Influence and Opinions},
  journal = {The Journal of Mathematical Sociology},
  volume  = {15},
  number  = {3--4},
  pages   = {193--206},
  year    = {1990},
  doi     = {10.1080/0022250X.1990.9990069}
}

@article{FriedkinJohnsen1999,
  author  = {Noah E. Friedkin and Eugene C. Johnsen},
  title   = {Social Influence Networks and Opinion Change},
  journal = {Advances in Group Processes},
  volume  = {16},
  pages   = {1--29},
  year    = {1999}
}

@article{ParsegovEtAl2017,
  author  = {Sergey E. Parsegov and Anton V. Proskurnikov and Roberto Tempo and Noah E. Friedkin},
  title   = {Novel Multidimensional Models of Opinion Dynamics in Social Networks},
  journal = {IEEE Transactions on Automatic Control},
  volume  = {62},
  number  = {5},
  pages   = {2270--2285},
  year    = {2017},
  doi     = {10.1109/TAC.2016.2613905}
}

@article{ProskurnikovEtAl2017TimeVarying,
  author  = {Anton V. Proskurnikov and Roberto Tempo and Ming Cao and Noah E. Friedkin},
  title   = {Opinion Evolution in Time-Varying Social Influence Networks with Prejudiced Agents},
  journal = {IFAC-PapersOnLine},
  volume  = {50},
  number  = {1},
  pages   = {11896--11901},
  year    = {2017},
  doi     = {10.1016/j.ifacol.2017.08.1424}
}

@article{ProskurnikovTempo2017,
  author  = {Anton V. Proskurnikov and Roberto Tempo},
  title   = {A Tutorial on Modeling and Analysis of Dynamic Social Networks. Part I},
  journal = {Annual Reviews in Control},
  volume  = {43},
  pages   = {65--79},
  year    = {2017}
}

@article{ProskurnikovTempo2018,
  author  = {Anton V. Proskurnikov and Roberto Tempo},
  title   = {A Tutorial on Modeling and Analysis of Dynamic Social Networks. Part II},
  journal = {Annual Reviews in Control},
  volume  = {45},
  pages   = {166--190},
  year    = {2018}
}

@article{AndersonYe2019,
  author  = {Brian D. O. Anderson and Mengbin Ye},
  title   = {Recent Advances in the Modelling and Analysis of Opinion Dynamics on Influence Networks},
  journal = {International Journal of Automation and Computing},
  volume  = {16},
  pages   = {129--149},
  year    = {2019},
  doi     = {10.1007/s11633-019-1169-8}
}

@article{TianWang2023,
  author  = {Ye Tian and Long Wang},
  title   = {Dynamics of Opinion Formation, Social Power Evolution, and Naive Learning in Social Networks},
  journal = {Annual Reviews in Control},
  volume  = {55},
  pages   = {182--193},
  year    = {2023},
  doi     = {10.1016/j.arcontrol.2023.04.001}
}

@article{DisaroValcher2024,
  author  = {Giorgia Disar{\`o} and Maria Elena Valcher},
  title   = {Balancing Homophily and Prejudices in Opinion Dynamics: An Extended {F}riedkin--{J}ohnsen Model},
  journal = {Automatica},
  volume  = {166},
  pages   = {111711},
  year    = {2024},
  doi     = {10.1016/j.automatica.2024.111711}
}

@article{Masuda2015,
  author  = {Naoki Masuda},
  title   = {Opinion Control in Complex Networks},
  journal = {New Journal of Physics},
  volume  = {17},
  number  = {3},
  pages   = {033031},
  year    = {2015}
}

@article{BiniEtAl2022,
  author  = {Matteo Bini and Paolo Frasca and Chiara Ravazzi and Fabrizio Dabbene},
  title   = {Graph Structure-Based Heuristics for Optimal Targeting in Social Networks},
  journal = {IEEE Transactions on Control of Network Systems},
  volume  = {9},
  number  = {3},
  pages   = {1189--1201},
  year    = {2022}
}

@inproceedings{SunZhang2023,
  author    = {Haoxin Sun and Zhongzhi Zhang},
  title     = {Opinion Optimization in Directed Social Networks},
  booktitle = {Proceedings of the AAAI Conference on Artificial Intelligence},
  volume    = {37},
  number    = {4},
  pages     = {4623--4632},
  year      = {2023},
  doi       = {10.1609/aaai.v37i4.25585}
}

@article{SunZhang2026,
  author  = {Haoxin Sun and Zhongzhi Zhang},
  title   = {Leader Selection for Opinion Optimization in Social Networks},
  journal = {Theoretical Computer Science},
  volume  = {1073},
  pages   = {115894},
  year    = {2026},
  doi     = {10.1016/j.tcs.2026.115894}
}

@article{BastopcuEtAl2025,
  author  = {Melih Bastopcu and S. Rasoul Etesami and Tamer Ba{\c{s}}ar},
  title   = {Online and Offline Dynamic Influence Maximization Games over Social Networks},
  journal = {IEEE Transactions on Control of Network Systems},
  volume  = {12},
  number  = {2},
  pages   = {1440--1453},
  year    = {2025}
}

@book{KaszkurewiczBhaya2022,
  author    = {Eugenius Kaszkurewicz and Amit Bhaya},
  title     = {Business Dynamics Models: Optimization-Based One Step Ahead Optimal Control},
  series    = {Advances in Design and Control},
  publisher = {Society for Industrial and Applied Mathematics},
  address   = {Philadelphia, PA},
  year      = {2022},
  doi       = {10.1137/1.9781611977318},
  isbn      = {978-1-61197-730-1}
}

@article{GentilBhaya2024,
  author  = {Gabriel Gentil and Amit Bhaya},
  title   = {Opinion Dynamic Games under One Step Ahead Optimal Control},
  journal = {IEEE Transactions on Computational Social Systems},
  volume  = {11},
  number  = {3},
  pages   = {4202--4213},
  year    = {2024},
  note    = {Correction in vol. 11, no. 4, p. 5554}
}

@article{WangEtAl2023ConcatenatedFJ,
  author  = {Lingfei Wang and Carmela Bernardo and Yiguang Hong and Francesco Vasca and Guodong Shi and Claudio Altafini},
  title   = {Consensus in Concatenated Opinion Dynamics With Stubborn Agents},
  journal = {IEEE Transactions on Automatic Control},
  volume  = {68},
  number  = {7},
  pages   = {4008--4023},
  month   = jul,
  year    = {2023},
  doi     = {10.1109/TAC.2022.3200888}
}

@misc{GentilBhaya2026CompetitiveFJ,
  author        = {Gabriel Gentil and Amit Bhaya},
  title         = {Competitive One-Step-Ahead Control of {F}riedkin--{J}ohnsen Networks: Potential Games, Stability, and the Price of Competition},
  year          = {2026},
  eprint        = {2608.27623},
  archivePrefix = {arXiv},
  primaryClass  = {eess.SY},
  note          = {arXiv:2608.27623 [eess.SY]},
  url           = {https://arxiv.org/abs/2608.27623}
}

@article{LiuMazalovGao2026EJC,
  author  = {Yanshan Liu and Vladimir V. Mazalov and Hongwei Gao},
  title   = {Comparison of Control Strategies in an Average-Oriented Opinion Dynamics Game},
  journal = {European Journal of Control},
  volume  = {91},
  pages   = {101551},
  month   = sep,
  year    = {2026},
  doi     = {10.1016/j.ejcon.2026.101551}
}
